\def\bs{\boldsymbol}
\def\ob{\langle}
\def\cb{\rangle}
\def\ra{\rightarrow}
\def\m{\mathcal}
\def\cf{\boxright}
\def\cp{\preccurlyeq}
\DeclareSymbolFont{symbolsC}{U}{txsyc}{m}{n}
\DeclareMathSymbol{\strictif}{\mathrel}{symbolsC}{74}
\DeclareMathSymbol{\boxright}{\mathrel}{symbolsC}{128}
\DeclareMathSymbol{\Diamondright}{\mathrel}{symbolsC}{132}
\DeclareMathSymbol{\boxRight}{\mathrel}{symbolsC}{136}
\DeclareMathSymbol{\DiamondRight}{\mathrel}{symbolsC}{140}
\DeclareMathSymbol{\Diamonddot}{\mathrel}{symbolsC}{144}
\newenvironment{dedsystem}
{\centering
 \newcolumntype{Y}{>{\small\centering\arraybackslash}X}
 \setlength{\extrarowheight}{4ex}
 \tabularx{\textwidth}{|YYY|}}
{\endtabularx}
\newdefinition{definition}{Definition}
\newtheorem{theorem}{Theorem}
\newtheorem{lemma}{Lemma}
\newtheorem{corollary}{Corollary}
\title{PUC-Logic}
\author[cds]{Ricardo Queiroz de Araujo Fernandes\fnref{fn1}}                                                                            
\ead{ricardo@cds.eb.mil.br}
\author[di]{Edward Hermann Haeusler}                   
\ead{hermann@inf.puc-rio.br}     
\author[df]{Luiz Carlos Pereira}                   
\ead{luiz@inf.puc-rio.br}                                       
\address[cds]{Centro de Desenvolvimento de Sistemas\\QG do Exército, Bloco G, 2º Andar\\Setor Militar Urbano, CEP 70630-901\\Brasília, DF, Brasil}
\address[di]{Departamento de Inform\'atica}
\address[df]{Departamento de Filosofia\\Pontif\'icia Universidade Cat\'olica do Rio de Janeiro\\Rua Marquês de São Vicente, 225, Gávea, CEP 22453-900\\Rio de Janeiro, RJ, Brasil}
\begin{document}

\begin{abstract} 
We present a logic for Proximity-based Understanding of Conditionals (PUC-Logic) that unifies the Counterfactual and Deontic logics proposed by David Lewis. We also propose a natural deduction system (PUC-ND) associated to this new logic. This inference system is proven to be sound, complete, normalizing and decidable. The relative completeness for the $\bs{V}$ and $\bs{CO}$ logics is shown to emphasize the unified approach over the work of Lewis.
\end{abstract}

\begin{keyword}
Conditionals, Logic, Natural Deduction, Counterfactual Logic, Deontic Logic
\end{keyword}

\maketitle

\section{Counterfactuals}

\begin{itemize}
\item If Oswald did not kill Kennedy, then someone else did.
\item If Oswald had not killed Kennedy, then someone else would have.\cite{Lewis}
\end{itemize}

The phrases above are respectively instances of the indicative and the subjunctive conditionals. The indicative conditional is associated to the material implication, whereas the subjunctive construction of the language is traditionally studied by the philosophy as the counterfactual  conditional\cite{Lewis,Goodman} or the counterfactual for short.

Conditional propositions involve two components, the antecedent and the consequent. Counterfactual conditionals differ from material implication in a subtle way. The truth of a material implication is based on the actual state-of-affairs. From the knowledge that Kennedy was killed, we can accept the truth of the phrase. On the other hand, a counterfactual conditional should take into account the truth of the antecedent, even if it is not the case. The truth of the antecedent is mandatory in this analysis.

Some approaches to counterfactuals entail belief revision, particularly those based on {\em Ramsey} test evaluation \cite{Ramsey}. In this analysis, the truth value of a counterfactual is considered within a minimal change generated by admitting the antecedent true\cite{Goodman}.

A possible way to circumvent belief revision mechanisms is to consider alternative (possible) state-of-affairs, considered here as worlds, and, based on some accessibility notion, choose the closest one among the worlds that satisfy the antecedent. If the consequent is true at this considered world, then the counterfactual is also true\cite{Lewis}.

Both conditionals have false antecedents and false consequents in the current state-of-affairs. However, the second conditional is clearly false, since we found no reason to accept that, in the closest worlds in which Kennedy is not killed by Oswald, Kennedy is killed by someone else.

We choose the approach of Lewis\cite{Lewis} in our attempt to formalize an inference system for counterfactuals because his accessibility relation leaves out the discussion for a general definition of similitude among worlds, which is considered as given in his analysis.

It also opened the possibility for a contribution in the other way. If we found some general properties in his accessibility relation, considering the evaluation of the formulas in the counterfactual reasoning, we could sketch some details of the concepts of similitude.

\section{Lewis analysis}

Lewis, in the very first page of his book\cite{Lewis}:
\begin{quote}
"\textit{If kangaroos had no tails, they would topple over}, seems to me to mean something like this: in any possible state of affairs in which kangaroos have no tails, and which resembles our actual state of affairs as much as kangaroos having no tails permits it to, the kangaroos topple over."
\end{quote}

We can observe that the word ``resemble'' may be seen as a reference to the concept of similarity between some possible state of affairs in relation to the actual state of affairs. The expression ``as much as'' here may be understood as a relative comparison of similarities among the possible states of affairs in relation to the actual state of affairs. But Lewis gave no formal definition of similarity in his book\cite{Lewis}.

He defined two basic counterfactual conditional operators:
\begin{itemize}
\item $A \cf B$: If it were the case that A, then it would be the case that B;
\item $A \Diamondright B$: If it were the case that A, then it might be the case that B.
\end{itemize}
\noindent And provided also the definition of other counterfactual operators. But, since they are interdefinable, he took $\cf$ as the primitive for the construction of formulas.

In the middle of his book, he introduced the comparative possibility operators and showed that they can serve as the primitive notion for counterfactuals.
\begin{itemize}
\item $A \cp B$: It is as possible that A as it is that B.
\end{itemize}
\noindent This operator gave us simpler proofs during this work.

He used possible-world semantics for intentional logic. For that reason the state of affairs are treated as worlds. To express similarity, he used proximity notions: a world is closer to the actual world in comparison to other worlds if it is more similar to the actual world than other considered worlds.

Lewis called the set of worlds to be considered for an evaluation as the strictness of the conditional. He pointed out that the strictness of the counterfactual conditional is based on the similarity of worlds. He showed that the counterfactual could not be treated by strict conditionals, necessity operators or possibility operators given by modal logics. To do so, he argued that strictness of the conditional can not be given before all evaluations. He constructed sequences of connected counterfactuals in a single English sentence for which the strictness cannot be given for the evaluation:

\begin{quote}
"If Otto had come, it would have been a lively party; but if both Otto and Anna had come it would have been a dreary party; but if Waldo had come as well, it would have been lively; but..."
\end{quote}

\noindent to show that the strictness of the counterfactuals cannot be defined by the context, because the sentence provides a single context for the evaluation of all counterfactuals. If we try to fix a strictness that makes a counterfactual true, then the next counterfactual is made false.

Lewis proposed a variably strict conditional, in which different degrees of strictness is given for every world before the evaluation of any counterfactual. To express this concept, the accessibility relation is defined by a system of spheres, which is given for every world by a nesting function $\$$ that applies over a set of worlds $\m{W}$. The nested function attributes a set of non-empty sets of worlds for each world and this set of sets is in total order for the inclusion relation.

\begin{figure}[htb]
\begin{center}
\includegraphics[height=4cm,width=4cm]{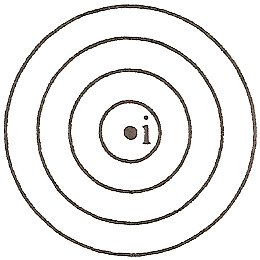}
\caption{A system of spheres around some world $i$}
\end{center}
\end{figure}

A systems of spheres, of any kind, is central in the most traditional analysis of counterfactuals. But the idea behind it is also available to many different logics. So, if we manage to handle them in a satisfactory manner, we will be able to use it in a broader class of logics. The system of neighbourhoods facilitates the development of the model, by leaving open the choice for a proper definition of similarity. And that concept can be used for a broader class of logics, not only the counterfactuals.

From Lewis definitions, the nesting function is a primitive notion:
\begin{quote}
$\phi \cf \psi$ is true at a world $i$ (according to a system of spheres $\$$) if and only if either: no $\phi$-world belongs to any sphere $S$ in $\$_{i}$\footnote{$\$_{i}$ gives the neighbourhoods around the world $i$. They are the available strictness to evaluate counterfactuals at $i$.}, or some sphere $S$ in $\$_{i}$ does contains at least one $\phi$-world, and $\phi \ra \psi$ holds at every world in $S$.
\end{quote}

\begin{quote}
$\phi \cp \psi$ is true at a world $i$ (according to a system of spheres $\$$) if and only if, for every sphere $S$ in $\$_{i}$, if $S$ contains any $\psi$-world then $S$ contains a $\phi$-world.
\end{quote}

Lewis\cite{Lewis} also provided conditions that may be applied to the nesting function $\$$. To every condition corresponds a different counterfactual logic:\begin{itemize}
\item Normality (N): $\$$ is normal iff $\forall w \in \m{W} : \$(w) \neq \emptyset$;
\item Total reflexivity (T): $\$$ is totally reflexive iff $\forall w \in \m{W} : w \in \bigcup\$(w)$;
\item Weak centering (W): $\$$ is weakly centered iff $\forall w \in \m{W} : \$(w) \neq \emptyset \mbox{ and }\forall N \in \$(w) : w \in N$ ;
\item Centering (C): $\$$ is centered iff $\forall w \in \m{W} : \{w\} \in \$(w)$;
\item Limit Assumption (L): $\$$ satisfies the Limit Assumption iff, for any world $w$ and any formula $\phi$, if there is some $\phi$-world\footnote{A $\phi$-world is a world in which $\phi$ holds.} in $\bigcup\$(w)$, then there is some smallest sphere of $\$(w)$ that contains a $\phi$-world;
\item Stalnaker's Assumption (A): $\$$ satisfies Stalnaker's Assumption iff, for any world $w$ and any formula $\phi$, if there is some $\phi$-world in $\bigcup\$(w)$, then there is some sphere of $\$(w)$ that contains exactly one $\phi$-world;
\item Local Uniformity (U-): $\$$ is locally uniform iff for any world $w$ and any $v \in \bigcup\$(w)$, $\bigcup\$(w)$ and $\bigcup\$(v)$ are the same;
\item Uniformity (U): $\$$ is uniform iff for any worlds $w$ and $v$, $\bigcup\$(w)$ and $\bigcup\$(v)$ are the same;
\item Local absoluteness (A-): $\$$ is locally absolute iff for any world $w$ and any $v \in \bigcup\$(w)$, $\$(w)$ and $\$(v)$ are the same;
\item Absoluteness (A): $\$$ is absolute iff for any worlds $w$ and $v$, $\$(w)$ and $\$(v)$ are the same.
\end{itemize}

The $\bs{V}$-logic is the most basic counterfactual logic presented by Lewis\cite{Lewis}, where ``V'' stands for variably strict conditional. If, for example, we accept the centering condition (C), then we have the $\bs{VC}$-logic. Lewis showed in his book a chart of 26 non-equivalent $\bs{V}$-logics that arises from the combinations of the conditions.

We prefer to call the spheres as neighbourhoods, because they represent better the concept of proximity, which Lewis used to express similarity. The neighbourhoods provide a relative way to compare distance. The world that is contained in a neighbourhood is closer to the actual world than other world that is not contained in that same neighbourhood.

As far as we know, there is only one natural deduction system for the counterfactuals, which is given by Bonevac \cite{Bonevac}. But his system is designed to deals with the $\bs{VW}$-logic, since it contains the rule of counterfactual exploitation ($\cf$E), which encapsulates the weak centering condition. His approach to define rules for the counterfactual operators provides a better intuition of the counterfactual logic. His systems is expressive enough to deal with modalities and strict conditionals. The labelling of world shifts using formulas make easier to capture the counterfactual mechanics.

We also found the work of Sano \cite{Sano} which pointed out the advantages of using the hybrid formalism for the counterfactual logic. He presented some axioms and rules for the $\bs{V_{\m{HC}(@)}}$-logic that extends the $\bs{V}$-logic of Lewis.

Another interesting reference is the article of Gent\cite{Gent}, which presents a new sequent- or tableaux-style proof system for V C. His work depends on the operator $\llbracket \rrbracket$ and the definition of signed formulas.

We recently found a sequent calculus, provided by Lellmann\cite{Jelia2012}, that treats the $\bs{V}$-logic of Lewis and its extensions. Its language depends on modal operators, specially the counterfactual operators $\cf$ and $\boxRight$ and the comparative possibility operator $\cp$.

As far as we know, our deduction system is the only one dealing with Lewis systems in a general form, that is, without using modalities in the syntax and treating the most basic counterfactual $\bs{V}$-logic.

\section{Proximity-based Understanding of Conditionals}

In \cite{LSFA09}, we presented a sequent calculus for counterfactual logic based on a Local Set Theory \cite{Bell}. In this article, we defined the satisfaction relation for worlds, for sets of the worlds and for neighbourhoods, where we encapsulated some quantifications that made it easier to express the operators with fewer quantifiers. But the encapsulation made the the inference system to have no control of the quantifications. Here we propose a logic for Proximity-based Understanding of Conditionals, PUC-Logic for short, that take control of the quantifications with labels.

\begin{definition}
Given a non-empty set $\m{W}$ (considered the set of worlds), we define a nesting function $\$$ that assigns to each world of $\m{W}$ a set of nested sets of $\m{W}$. A set of nested sets is a set of sets in which the inclusion relation among sets is a total order.
\end{definition}

\begin{definition}
A \textit{frame} is a tuple $\m{F} = \ob \m{W} , \$ , \m{V} \cb$, in which $\m{V}$ is a truth assignment function for each atomic formula with image on the subsets of $\m{W}$. A \textit{model} is a pair $\m{M} = \ob \m{F} , \chi \cb$, $\m{F}$  a frame and $\chi$ a world of $\m{W}$, called the \textit{reference world} of the model. A \textit{template} is a pair $\m{T} = \ob \m{M} , N \cb$, $N \in \$(\chi)$ and $N$ is called the \textit{reference neighbourhood} of the template.
\end{definition}

We use the term \textit{structure} to refer a model or a template.

\begin{definition}
A structure is \textit{finite} if its set of worlds is finite.
\end{definition}

We now define a relation between structures to represent the pertinence of neighbourhoods in a neighbourhood system of a world and the pertinence of worlds in a given neighbourhood.

\begin{definition}
Given a model $\m{M} = \ob \m{W} , \$ , \m{V} , \chi \cb$, then, for any $N \in \$(\chi)$, the template $\m{T} = \ob \m{W} , \$ , \m{V} , \chi , N \cb$ is in \textit{perspective} relation to $\m{M}$. We represent this by $\m{M} \multimap \m{T}$. Given a template $\m{T} = \ob \m{W} , \$ , \m{V} , \chi , N \cb$, then, for any $w \in N$, the model $\m{M} = \ob \m{W} , \$ , \m{V} , w \cb$ is in perspective relation to $\m{T}$. We represent this by $\m{T} \multimap \m{M}$.
\end{definition}

\begin{definition}
The concatenation of $n$ tuples of the perspective relation is called a path of size $n$ and is represented by the symbol $\multimap_{n}$.
\end{definition}

One remark: if the size of a path is even, then a model is related to another model or a template is related to another template.

\begin{definition}
The transitive closure of the perspective relation is called the \textit{projective} relation, which is represented by the symbol $\leadsto$.
\end{definition}

\begin{definition}
Given a world $\chi$ and the nested neighbourhood function $\$$ we can build a sequence of sets of worlds:
\begin{enumerate}
\item $\bigtriangleup^{\$}_{0}(\chi) = \{\chi\}$;
\item $\bigtriangleup^{\$}_{k+1}(\chi) = \bigcup_{w \in \bigtriangleup^{\$}_{k}(\chi)} (\bigcup \$(w))$, $k \geq 0$.
\end{enumerate}
Let $\bigtriangleup^{\$}(\chi) = \bigcup_{n \in \mathbb{N}} \bigtriangleup^{\$}_{n}(\chi)$ and $\bigtriangleup^{\$}_{\vec{n}}(\chi) = \bigcup_{0 \leq m \leq n} \bigtriangleup^{\$}_{m}(\chi)$.
\end{definition}

We introduce labels in our language, in order to syntactically represent quantifications over two specific domains: neighbourhoods and worlds. So, for that reason, a label may be a neighbourhood label or a world label:
\begin{itemize}
\item Neighbourhood labels:
\begin{itemize}
\item[($\circledast$)] Universal quantifier over neighbourhoods of some neighbourhood system;
\item[($\circledcirc$)] Existential quantifier over neighbourhoods of some neighbourhood system;
\item[($N$)] Variables (capital letters) that may denote some neighbourhood of some neighbourhood system.
\end{itemize}
\item World labels:
\begin{itemize}
\item[($\ast$)] Universal quantifier over worlds of some neighbourhood;
\item[($\bullet$)] Existential quantifier over worlds of some neighbourhood;
\item[($u$)] Variables (lower case letters) that denote some world of some neighbourhood.
\end{itemize}
\end{itemize}

We denote the set of neighbourhood labels by $\bs{L}_{n}$ and the set of world labels by $\bs{L}_{w}$.

\begin{definition}
The language of PUC-Logic consists of:
\begin{itemize}
\item countably neighbourhood variables: $N,M,L,\ldots$;
\item countably world variables: $w,z,\ldots$;
\item countably proposition symbols: $p_{0},p_{1},\ldots$;
\item countably proposition constants: $\top_{n},\bot_{n},\top_{w},\bot_{w},\shneg N, \shpos N, \shneg M, \shpos M,\ldots$;
\item connectives: $\wedge,\vee,\ra,\neg$;
\item neighbourhood labels: $\circledast, \circledcirc$;
\item world labels: $\ast,\bullet$;
\item auxiliary symbols: $(,)$.
\end{itemize}
\end{definition}

As in the case of labels, we want to separate the sets of well-formed formulas into two disjoint sets, according to sort of label that labels the formula. We denote the set of neighbourhood formulas by $\bs{F}_{n}$ and the set of world formulas by $\bs{F}_{w}$.

\begin{definition} \label{wffDefinition}
The sets $\bs{F}_{n}$ and $\bs{F}_{w}$ of well-formed formulas\footnote{We use the term wff to denote both the singular and the plural form of the expression well-formed formula.} are constructed the following rules:
\begin{enumerate}
\item $\top_{n},\bot_{n} \in \bs{F}_{n}$;
\item $\top_{w},\bot_{w} \in \bs{F}_{w}$;
\item $\shneg N,\shpos N \in \bs{F}_{w}$, for every neighbourhood variable $N$;
\item $\alpha \in \bs{F}_{n}$, for every atomic formula $\alpha$, except $\top$ and $\bot$;
\item if $\alpha \in \bs{F}_{n}$, then $\neg \alpha \in \bs{F}_{n}$;
\item if $\alpha \in \bs{F}_{w}$, then $\neg \alpha \in \bs{F}_{w}$;
\item if $\alpha,\beta \in \bs{F}_{n}$, then $\alpha \wedge \beta,\alpha \vee \beta,\alpha \ra \beta \in \bs{F}_{n}$;
\item if $\alpha,\beta \in \bs{F}_{w}$, then $\alpha \wedge \beta,\alpha \vee \beta,\alpha \ra \beta \in \bs{F}_{w}$;
\item if $\alpha \in \bs{F}_{n}$ and $\phi \in \bs{L}_{w}$, then $\alpha^{\phi} \in \bs{F}_{w}$;
\item if $\alpha \in \bs{F}_{w}$ and $\phi \in \bs{L}_{n}$, then $\alpha^{\phi} \in \bs{F}_{n}$.\end{enumerate}\end{definition}

We introduced the two formulas for true and false, in order to make the sets of formulas disjoint. The formula $\shneg N$ is introduced to represent that a neighbourhood contains the neighbourhood $N$ and the formula $\shpos N$ represent a neighbourhood is contained in $N$.

The last two rules of definition \ref{wffDefinition} introduces the labelling the formulas. Moreover, since we can label a labelled formula, every formula has a stack of labels that represent nested labels. We call it the \textit{attribute} of the formula. The top label of the stack is the \textit{index} of the formula. We represent the attribute of a formula as a letter that appear to the right of the formula. If the attribute is empty, we may omit it and the formula has no index. The attribute of some formula will always be empty if the last rule, used to build the formula, is not one of the labelling rules, as in the case of $((\alpha \ra \alpha)^{\circledast,\bullet})\vee(\gamma^{\circledcirc,\ast})$.

To read a labelled formula, it is necessary to read its index first and then the rest of the formula. For example, $(\alpha \ra \alpha)^{\circledast,\bullet}$ should be read as: there is some world, in all neighbourhoods of the considered neighbourhood system, in which it is the case that $\alpha \ra \alpha$.

We may concatenate stacks of labels and labels, using commas, to produce a stack of labels that is obtained by respecting the order of the labels in the stacks and the order of the concatenation, like $\alpha^{\Sigma,\Delta}$, where $\alpha$ is a formula and $\Sigma$ and $\Delta$ are stacks of labels. But we admit no nesting of attributes, which means that $(\alpha^{\Sigma})^{\Delta}$ is the same as $\alpha^{\Sigma,\Delta}$.

\begin{definition}
Given a stack of labels $\Sigma$, we define $\overline{\Sigma}$ as the stack of labels that is obtained from $\Sigma$ by reversing the order of the labels in the stack.\end{definition}\begin{definition}
Given a stack of labels $\Sigma$, the size $s(\Sigma)$ is its number of labels.\end{definition}\begin{definition}
Given a set of worlds $\m{W}$, a set of world variables and a set of neighbourhood variables, we define a variable assignment function $\sigma$, that assigns a world of $\m{W}$ to each world variable and a non-empty set of $\m{W}$ to each neighbourhood variable.\end{definition}\begin{definition} \label{satisfactionDefinition}
Given a variable assignment function $\sigma$, the relation $\models$ of \textit{satisfaction} between formulas, models and templates is given by:
\begin{enumerate}
\item $\ob \m{W} , \$ , \m{V} , \chi \cb \models \alpha$, $\alpha$ atomic, iff: $\chi \in \m{V}(\alpha)$. For every world $w \in \m{W}$, $w \in \m{V}(\top_{n})$ and $w \not\in \m{V}(\bot_{n})$;
\item $\ob \m{W} , \$ , \m{V} , \chi \cb \models \neg \; (\alpha^{\Sigma})$ iff: $\neg \; (\alpha^{\Sigma}) \in \bs{F}_{n}$ and $\ob \m{W} , \$ , \m{V} , \chi \cb \not\models \alpha^{\Sigma}$;
\item $\ob \m{W} , \$ , \m{V} , \chi \cb \models \alpha^{\Sigma} \wedge \beta^{\Omega}$ iff: $\alpha^{\Sigma} \wedge \beta^{\Omega} \in \bs{F}_{n}$ and\\$(\;\ob \m{W} , \$ , \m{V} , \chi \cb \models \alpha^{\Sigma} \mbox{ and } \ob \m{W} , \$ , \m{V} , \chi \cb \models \beta^{\Omega}$;
\item $\ob \m{W} , \$ , \m{V} , \chi \cb \models \alpha^{\Sigma} \vee \beta^{\Omega}\;)$ iff: $\alpha^{\Sigma} \vee \beta^{\Omega} \in \bs{F}_{n}$ and\\$(\; \ob \m{W} , \$ , \m{V} , \chi \cb \models \alpha^{\Sigma} \mbox{ or } \ob \m{W} , \$ , \m{V} , \chi \cb \models \beta^{\Omega} \;)$;
\item $\ob \m{W} , \$ , \m{V} , \chi \cb \models \alpha^{\Sigma} \ra \beta^{\Omega}$ iff: $\alpha^{\Sigma} \ra \beta^{\Omega} \in \bs{F}_{n}$ and\\$(\; \ob \m{W} , \$ , \m{V} , \chi \cb \models \neg (\alpha^{\Sigma}) \mbox{ or } \ob \m{W} , \$ , \m{V} , \chi \cb \models \beta^{\Omega} \;)$;
\item $\ob \m{W} , \$ , \m{V} , \chi \cb \models \alpha^{\Sigma,\circledast}$ iff: $\forall N \in \$(\chi) : \ob \m{W} , \$ , \m{V} , \chi , N \cb \models \alpha^{\Sigma}$;
\item $\ob \m{W} , \$ , \m{V} , \chi \cb \models \alpha^{\Sigma,\circledcirc}$ iff: $\exists N \in \$(\chi) : \ob \m{W} , \$ , \m{V} , \chi , N \cb \models \alpha^{\Sigma}$;
\item $\ob \m{W} , \$ , \m{V} , \chi \cb \models \alpha^{\Sigma,N}$ iff: $\ob \m{W} , \$ , \m{V} , \chi , \sigma(N) \cb \models \alpha^{\Sigma}$;
\item $\ob \m{W} , \$ , \m{V} , \chi , N \cb \models \shneg M$ iff: $\sigma(M) \in \$(\chi)$ and $\sigma(M) \subset N$;
\item $\ob \m{W} , \$ , \m{V} , \chi , N \cb \models \shpos M$ iff: $\sigma(M) \in \$(\chi)$ and $N \subset \sigma(M)$;
\item $\ob \m{W} , \$ , \m{V} , \chi , N \cb \models \alpha^{\Sigma,\ast}$ iff: $\forall w \in N : \ob \m{W} , \$ , \m{V} , w \cb \models \alpha^{\Sigma}$;
\item $\ob \m{W} , \$ , \m{V} , \chi , N \cb \models \alpha^{\Sigma,\bullet}$ iff: $\exists w \in N : \ob \m{W} , \$ , \m{V} , w \cb \models \alpha^{\Sigma}$;
\item $\ob \m{W} , \$ , \m{V} , \chi , N \cb \models \alpha^{\Sigma,u}$ iff: $\sigma(u) \in N$ and $\ob \m{W} , \$ , \m{V} , \sigma(u) \cb \models \alpha^{\Sigma}$;
\item $\ob \m{W} , \$ , \m{V} , \chi , N \cb \models \neg \; (\alpha^{\Sigma})$ iff: $\neg \; (\alpha^{\Sigma}) \in \bs{F}_{w}$ and $\ob \m{W} , \$ , \m{V} , \chi , N \cb \not\models \alpha^{\Sigma}$;
\item $\ob \m{W} , \$ , \m{V} , \chi , N \cb \models \alpha^{\Sigma} \wedge \beta^{\Omega}$ iff: $\alpha^{\Sigma} \wedge \beta^{\Omega} \in \bs{F}_{w}$ and\\$(\;\ob \m{W} , \$ , \m{V} , \chi , N \cb \models \alpha^{\Sigma} \mbox{ and }\ob \m{W} , \$ , \m{V} , \chi , N \cb \models \beta^{\Omega}\;)$;
\item $\ob \m{W} , \$ , \m{V} , \chi , N \cb \models \alpha^{\Sigma} \vee \beta^{\Omega}$ iff: $\alpha^{\Sigma} \vee \beta^{\Omega} \in \bs{F}_{w}$ and\\$(\; \ob \m{W} , \$ , \m{V} , \chi , N \cb \models \alpha^{\Sigma} \mbox{ or } \ob \m{W} , \$ , \m{V} , \chi , N \cb \models \beta^{\Omega} \;)$;
\item $\ob \m{W} , \$ , \m{V} , \chi , N \cb \models \alpha^{\Sigma} \ra \beta^{\Omega}$ iff: $\alpha^{\Sigma} \ra \beta^{\Omega} \in \bs{F}_{w}$ and\\$(\; \ob \m{W} , \$ , \m{V} , \chi , N \cb \models \neg (\alpha^{\Sigma})  \mbox{ or } \ob \m{W} , \$ , \m{V} , \chi , N \cb \models \beta^{\Omega} \;)$;
\item $\ob \m{W} , \$ , \m{V} , \chi , N \cb \models \top_{w}$ and $\ob \m{W} , \$ , \m{V} , \chi , N \cb \not\models \bot_{w}$, for every template.\end{enumerate}\end{definition}

\begin{definition} \label{logicalConsequenceDf}
The relation $\alpha^{\Sigma} \models \beta^{\Omega}$ of \textit{logical consequence} is defined iff $\alpha^{\Sigma},\beta^{\Omega} \in \bs{F}_{n}$ and for all model $\m{M} \models \alpha^{\Sigma}$, we have $\m{M} \models \beta^{\Omega}$. The relation is also defined iff $\alpha^{\Sigma},\beta^{\Omega} \in \bs{F}_{w}$ and for all template $\m{T} \models \alpha^{\Sigma}$, we have $\m{T} \models \beta^{\Omega}$. Given $\Gamma \cup \{\alpha^{\Sigma}\} \subset \bs{F}_{n}$, the relation $\Gamma \models \alpha^{\Sigma}$ of logical consequence is defined iff for all model $\m{M}$ that satisfies every formula of $\Gamma$, $\m{M} \models \alpha^{\Sigma}$. Given $\Gamma \cup \{\alpha^{\Sigma}\} \subset \bs{F}_{w}$, the relation $\Gamma \models \alpha^{\Sigma}$ is defined iff for all templates $\m{T}$ that satisfies every formula of $\Gamma$, $\m{T} \models \alpha^{\Sigma}$.\end{definition}

\begin{definition}
$\alpha^{\Sigma} \in \bs{F}_{n}$ ($\in \bs{F}_{w}$) is a n-tautology (w-tautology) iff for every model (template) $\m{M} \models \alpha^{\Sigma}$ ($\m{T} \models \alpha^{\Sigma}$).\end{definition}

\begin{lemma} \label{taut}
$\alpha^{\Sigma}$ is a n-tautology iff $\alpha^{\Sigma,\ast,\circledast}$ is a n-tautology.
\end{lemma}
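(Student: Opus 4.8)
The plan is to peel off the two outermost indices with the semantic clauses of Definition~\ref{satisfactionDefinition} and then treat the two directions separately, the converse needing a small model construction. First I would check the typing: since $\alpha^{\Sigma}\in\bs{F}_{n}$, labelling by the world index $\ast$ gives $\alpha^{\Sigma,\ast}\in\bs{F}_{w}$ and then labelling by the neighbourhood index $\circledast$ gives $\alpha^{\Sigma,\ast,\circledast}\in\bs{F}_{n}$ (Definition~\ref{wffDefinition}), so both sides are legitimate candidates for n-tautologies. Unfolding the satisfaction relation, first by the clause for $\circledast$ and then by the clause for $\ast$, gives for every model
\[
\ob \m{W},\$,\m{V},\chi \cb \models \alpha^{\Sigma,\ast,\circledast}
\quad\text{iff}\quad
\forall N\in\$(\chi)\;\forall w\in N:\; \ob \m{W},\$,\m{V},w \cb \models \alpha^{\Sigma}.
\]
This equivalence is the pivot of the whole argument.

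The forward direction is immediate: if $\alpha^{\Sigma}$ holds in every model, then in particular it holds at each $\ob \m{W},\$,\m{V},w \cb$ appearing on the right, so the universal condition is met at every model and $\alpha^{\Sigma,\ast,\circledast}$ is an n-tautology.

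For the converse, assume $\alpha^{\Sigma,\ast,\circledast}$ is an n-tautology and fix an arbitrary model $\m{M}=\ob \m{W},\$,\m{V},\chi \cb$; I must derive $\m{M}\models\alpha^{\Sigma}$. The obstacle is that the basic $\bs{V}$ semantics assumes no centering, so $\chi$ need not belong to any neighbourhood in $\$(\chi)$ and the hypothesis gives no immediate grip on $\chi$ itself. To manufacture one I would adjoin a fresh world $\chi_{0}\notin\m{W}$, put $\m{W}'=\m{W}\cup\{\chi_{0}\}$, let $\$'$ agree with $\$$ on $\m{W}$ and set $\$'(\chi_{0})=\{\{\chi\}\}$ (a single, trivially nested neighbourhood), and let $\m{V}'$ extend $\m{V}$ on $\chi_{0}$ in any way. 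Instantiating the hypothesis at $\ob \m{W}',\$',\m{V}',\chi_{0} \cb$ and unfolding as above, the only neighbourhood in $\$'(\chi_{0})$ is $\{\chi\}$ and its only world is $\chi$, so I obtain $\ob \m{W}',\$',\m{V}',\chi \cb \models \alpha^{\Sigma}$.

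The remaining and, I expect, genuinely delicate step is to transfer this satisfaction back to $\m{M}$. It rests on a locality property: satisfaction of a formula at a structure depends only on the substructure reachable from its reference world (and neighbourhood) through the projective relation $\leadsto$. Since $\chi_{0}$ is fresh and lies in no neighbourhood of any world of $\m{W}$, it is unreachable from $\chi$, so $\$'$ and $\m{V}'$ coincide with $\$$ and $\m{V}$ on every world $\leadsto$-reachable from $\chi$ (precisely, the worlds of $\bigtriangleup^{\$}(\chi)$), the variable assignment $\sigma$ being kept fixed throughout. I would make this rigorous by induction on the attribute of $\alpha^{\Sigma}$, verifying clause by clause in Definition~\ref{satisfactionDefinition} that each satisfaction step only consults $\$(w)$, membership of worlds in a neighbourhood, inclusions among neighbourhoods, or $\m{V}$ at reachable worlds --- data on which $\m{M}$ and $\ob \m{W}',\$',\m{V}',\chi \cb$ agree. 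The induction yields $\ob \m{W},\$,\m{V},\chi \cb \models \alpha^{\Sigma}$, and as $\m{M}$ was arbitrary, $\alpha^{\Sigma}$ is an n-tautology.
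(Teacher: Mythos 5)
Your proof is correct, and for the converse direction it takes a genuinely different --- and in fact more careful --- route than the paper. The forward direction is identical in both. For the converse, the paper fixes $\m{W}$, $\m{V}$ and the world at which $\alpha^{\Sigma}$ is to be verified, and then chooses the nesting function to be the \emph{constant} function $\{\m{W}\}$: instantiating the hypothesis in that single model, every world of $\m{W}$ satisfies $\alpha^{\Sigma}$ there, and the paper concludes at once that $\alpha^{\Sigma}$ is an n-tautology. This silently skips exactly the transfer step you flagged as delicate: satisfaction of $\alpha^{\Sigma}$ can genuinely depend on $\$$ (for instance $\top_{w}^{\circledcirc}$ holds at $\chi$ iff $\$(\chi)\neq\emptyset$, so it holds in every constant-$\{\m{W}\}$ model without being an n-tautology), so ``true in all models with the constant nesting function'' does not by itself give ``true in all models''. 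Your construction --- adjoining a fresh world $\chi_{0}$ with $\$'(\chi_{0})=\{\{\chi\}\}$ while leaving $\$$ and $\m{V}$ untouched on $\m{W}$ --- buys precisely what is needed to close that gap: the original model sits unchanged inside the extended one, $\chi_{0}$ lies in no neighbourhood of any world of $\m{W}$, and a routine induction on the formula (your locality lemma; note only that since the fixed $\sigma$ takes its values in $\m{W}$ and in subsets of $\m{W}$, the clauses for variables and for $\shneg N$, $\shpos N$ may consult neighbourhoods outside $\bigtriangleup^{\$}(\chi)$, but still only data on which the two models agree, so the induction should be phrased as agreement on all of $\m{W}$ rather than only on the reachable part) transfers satisfaction back to $\m{M}$. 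The price of your route is having to state and prove that locality lemma, which the paper's one-line choice of $\$$ avoids --- but the paper avoids it only by leaving its converse direction incomplete; your version is the one that actually establishes the lemma in full.
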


\begin{proof}
If $\alpha^{\Sigma}$ is a n-tautology, $\forall z \in \m{W}$, $\ob \m{W} , \$ , \m{V} , z \cb \models \alpha^{\Sigma}$. In particular, given a world $\chi \in \m{W}$, $\forall N \in \$(\chi) : \forall w \in N : \ob \m{W} , \$ , \m{V} , w \cb \models \alpha^{\Sigma}$ and, by definition, $\ob \m{W} , \$ , \m{V} , \chi \cb \models \alpha^{\Sigma,\ast,\circledast}$ for every world of $\m{W}$ and $\alpha^{\Sigma,\ast,\circledast}$ is also a n-tautology. Conversely, if $\alpha^{\Sigma,\ast,\circledast}$ is a n-tautology, then $\forall N \in \$(\chi) : \forall w \in N : \ob \m{W} , \$ , \m{V} , w \cb \models \alpha^{\Sigma}$ for every choice of $\m{W}$ , $\$$ , $\m{V}$ and $w$. So, given $\m{W}$ , $\m{V}$ and $w$, we can choose $\$$ to be the constant function $\{\m{W}\}$. So, $\forall z \in \m{W}$, $\ob \m{W} , \$ , \m{V} , z \cb \models \alpha^{\Sigma}$ and $\alpha^{\Sigma}$ must also be a n-tautology.\end{proof}

The relation defined below is motivated by the fact that, if a model $\m{M}$ satisfies a formula like $\alpha^{\circledast,\ast}$, then for every template $\m{T}$, such that $\m{M} \multimap \m{T}$, $\m{T}$ satisfies $\alpha^{\circledast}$ by definition. And also for every model $\m{H}$, such that $\m{M} \multimap_{2} \m{H}$, $\m{H}$ satisfies $\alpha$ by definition.

\begin{definition} \label{referentialConsequence}
Given a model $\m{M}$, called \textit{the reference model}, the relation $\alpha^{\Sigma} \models_{\m{M}:n} \beta^{\Omega}$ of \textit{referential consequence} is defined iff:
\begin{itemize}
\item $n > 0$ and ($\m{M} \models \alpha^{\Sigma}$ implies $\m{H} \models \beta^{\Omega}$, for any structure $\m{M} \multimap_{n} \m{H}$);
\item $n = 0$ and (if $\m{M} \models \alpha^{\Sigma}$ implies $\m{M} \models \beta^{\Omega}$).
\end{itemize} Given $\Gamma \cup \{\alpha^{\Sigma}\} \subset \bs{F}_{n}$, $\Gamma \models_{\m{M}:n} \alpha^{\Sigma}$ iff:
\begin{itemize}
\item $n > 0$ and ($\m{H} \models \alpha^{\Sigma}$, for any structure $\m{M} \multimap_{n} \m{H}$ that satisfies every formula of $\Gamma$);
\item $n = 0$ and ($\m{M} \models \alpha^{\Sigma}$ if $\m{M}$ satisfies every formula of $\Gamma$).
\end{itemize}\end{definition}

\begin{figure}[htbp] \label{theSystem}
\begin{dedsystem}
\hline

\AxiomC{$\Pi$} \RightLabel{$\Delta$}
\UnaryInfC{$\alpha^{\Sigma} \wedge \beta^{\Omega}$} \LeftLabel{1:} \RightLabel{$\Delta$}
\UnaryInfC{$\alpha^{\Sigma}$}  \DisplayProof &

\AxiomC{$\Pi$} \RightLabel{$\Delta$}
\UnaryInfC{$\alpha^{\Sigma} \wedge \beta^{\Omega}$} \LeftLabel{2:} \RightLabel{$\Delta$}
\UnaryInfC{$\beta^{\Omega}$}  \DisplayProof &

\AxiomC{$\Pi_{1}$} \RightLabel{$\Delta$}
\UnaryInfC{$\alpha^{\Sigma}$}
\AxiomC{$\Pi_{2}$} \RightLabel{$\Delta$}
\UnaryInfC{$\beta^{\Omega}$} \LeftLabel{3:} \RightLabel{$\Delta$}
\BinaryInfC{$\alpha^{\Sigma} \wedge \beta^{\Omega}$} \DisplayProof \\

\AxiomC{$\Pi$} \RightLabel{$\Delta$}
\UnaryInfC{$\alpha^{\Sigma}$} \LeftLabel{4:} \RightLabel{$\Delta$}
\UnaryInfC{$\alpha^{\Sigma} \vee \beta^{\Omega}$} \DisplayProof  &

\AxiomC{$\Pi_{1}$} \RightLabel{$\Delta$}
\UnaryInfC{$\alpha^{\Sigma} \vee \beta^{\Omega}$}
\AxiomC{$\phantom{-}$}
\alwaysNoLine
\UnaryInfC{[$\alpha^{\Sigma}$]} \RightLabel{$\Delta$}
\alwaysSingleLine
\UnaryInfC{$\Pi_{2}$} \RightLabel{$\Theta$}
\UnaryInfC{$\gamma^{\Lambda}$}
\AxiomC{$\phantom{-}$}
\alwaysNoLine
\UnaryInfC{[$\beta^{\Omega}$]} \RightLabel{$\Delta$}
\alwaysSingleLine
\UnaryInfC{$\Pi_{3}$} \RightLabel{$\Theta$}
\UnaryInfC{$\gamma^{\Lambda}$} \LeftLabel{5:} \RightLabel{$\Theta$}
\alwaysSingleLine
\TrinaryInfC{$\gamma^{\Lambda}$} \DisplayProof &

\AxiomC{$\Pi$} \RightLabel{$\Delta$}
\UnaryInfC{$\beta^{\Omega}$} \LeftLabel{6:} \RightLabel{$\Delta$}
\UnaryInfC{$\alpha^{\Sigma} \vee \beta^{\Omega}$} \DisplayProof \\

\AxiomC{$\phantom{-}$}
\alwaysNoLine
\UnaryInfC{[$\neg (\alpha^{\Sigma})$]} \RightLabel{$\Delta$}
\alwaysSingleLine
\UnaryInfC{$\Pi$} \RightLabel{$\Delta$}
\UnaryInfC{$\bot$} \LeftLabel{7:} \RightLabel{$\Delta$}
\UnaryInfC{$\alpha^{\Sigma}$} \DisplayProof &

\AxiomC{$\Pi$} \RightLabel{$\Delta$}
\UnaryInfC{$\bot$} \LeftLabel{8:} \RightLabel{$\Delta$}
\UnaryInfC{$\alpha^{\Sigma}$} \DisplayProof &

\AxiomC{$\Pi$} \RightLabel{$\Delta$}
\UnaryInfC{$\bot$} \LeftLabel{9:}
\UnaryInfC{$\bot_{n}$} \DisplayProof \\

\AxiomC{$\alpha^{\Sigma}$} \LeftLabel{10:} \RightLabel{$\Delta$}
\UnaryInfC{$\alpha^{\Sigma}$} \DisplayProof &

\AxiomC{$\phantom{-}$}
\alwaysNoLine
\UnaryInfC{[$\alpha^{\Sigma}$]} \RightLabel{$\Delta$}
\alwaysSingleLine
\UnaryInfC{$\Pi$} \RightLabel{$\Delta$}
\UnaryInfC{$\beta^{\Omega}$} 
\alwaysSingleLine \LeftLabel{11:} \RightLabel{$\Delta$}
\UnaryInfC{$\alpha^{\Sigma} \ra \beta^{\Omega}$}
\alwaysNoLine \UnaryInfC{$\phantom{-}$} \DisplayProof &

\AxiomC{$\Pi_{1}$} \RightLabel{$\Delta$}
\UnaryInfC{$\alpha^{\Sigma}$}
\AxiomC{$\Pi_{2}$} \RightLabel{$\Delta$}
\UnaryInfC{$\alpha^{\Sigma} \ra \beta^{\Omega}$} \LeftLabel{12:} \RightLabel{$\Delta$}
\BinaryInfC{$\beta^{\Omega}$} \DisplayProof \\

\AxiomC{$\Pi$} \RightLabel{$\Delta$}
\UnaryInfC{$\alpha^{\Sigma,\phi}$} \LeftLabel{13:} \RightLabel{$\Delta,\phi$}
\UnaryInfC{$\alpha^{\Sigma}$}
\alwaysNoLine \UnaryInfC{$\phantom{-}$} \DisplayProof &

\AxiomC{$\Pi$} \RightLabel{$\Delta,\phi$}
\UnaryInfC{$\alpha^{\Sigma}$} \LeftLabel{14:} \RightLabel{$\Delta$}
\UnaryInfC{$\alpha^{\Sigma,\phi}$}
\alwaysNoLine \UnaryInfC{$\phantom{-}$} \DisplayProof &

\AxiomC{$\Pi$} \RightLabel{$\Delta,u$}
\UnaryInfC{$\alpha^{\Sigma}$} \LeftLabel{15:} \RightLabel{$\Delta,\ast$}
\UnaryInfC{$\alpha^{\Sigma}$}
\alwaysNoLine \UnaryInfC{$\phantom{-}$} \DisplayProof \\

\AxiomC{$\Pi$} \RightLabel{$\Delta,\ast$}
\UnaryInfC{$\alpha^{\Sigma}$} \LeftLabel{16:} \RightLabel{$\Delta,u$}
\UnaryInfC{$\alpha^{\Sigma}$}
\alwaysNoLine \UnaryInfC{$\phantom{-}$} \DisplayProof &

\AxiomC{$\Pi$} \RightLabel{$\Delta,u$}
\UnaryInfC{$\alpha^{\Sigma}$} \LeftLabel{17:} \RightLabel{$\Delta,\bullet$}
\UnaryInfC{$\alpha^{\Sigma}$}
\alwaysNoLine \UnaryInfC{$\phantom{-}$} \DisplayProof &

\AxiomC{$\Pi_{1}$} \RightLabel{$\Delta,\bullet$}
\UnaryInfC{$\alpha^{\Sigma}$}
\AxiomC{[$\alpha^{\Sigma}$]} \RightLabel{$\Delta,u$}
\UnaryInfC{$\Pi_{2}$} \RightLabel{$\Theta$}
\UnaryInfC{$\beta^{\Omega}$} \LeftLabel{18:} \RightLabel{$\Theta$}
\BinaryInfC{$\beta^{\Omega}$} \DisplayProof \\

\AxiomC{$\Pi_{1}$} \RightLabel{$\Delta,N$}
\UnaryInfC{$\alpha^{\Sigma}$}
\AxiomC{$\Pi_{2}$} \RightLabel{$\Delta,\circledcirc$}
\UnaryInfC{$\beta^{\Omega}$} \LeftLabel{19:} \RightLabel{$\Delta,\circledcirc$}
\BinaryInfC{$\alpha^{\Sigma}$}
\alwaysNoLine \UnaryInfC{$\phantom{-}$} \DisplayProof &

\AxiomC{$\Pi_{1}$} \RightLabel{$\Delta,\circledcirc$}
\UnaryInfC{$\alpha^{\Sigma}$}
\AxiomC{[$\alpha^{\Sigma}$]} \RightLabel{$\Delta,N$}
\UnaryInfC{$\Pi_{2}$} \RightLabel{$\Theta$}
\UnaryInfC{$\beta^{\Omega}$} \LeftLabel{20:} \RightLabel{$\Theta$}
\BinaryInfC{$\beta^{\Omega}$} \DisplayProof &

\AxiomC{$\Pi$} \RightLabel{$\Delta,N$}
\UnaryInfC{$\alpha^{\Sigma}$} \LeftLabel{21:} \RightLabel{$\Delta,\circledast$}
\UnaryInfC{$\alpha^{\Sigma}$} \DisplayProof \\

\AxiomC{$\Pi_{1}$} \RightLabel{$\Delta,\circledast$}
\UnaryInfC{$\alpha^{\Sigma}$}
\AxiomC{$\Pi_{2}$} \RightLabel{$\Delta,N$}
\UnaryInfC{$\beta^{\Omega}$} \LeftLabel{22:} \RightLabel{$\Delta,N$}
\BinaryInfC{$\alpha^{\Sigma}$} \DisplayProof &

\AxiomC{$\Pi_{1}$} \RightLabel{$\Delta,N$}
\UnaryInfC{$\alpha^{\Sigma,\bullet}$}
\AxiomC{$\Pi_{2}$} \RightLabel{$\Delta,M$}
\UnaryInfC{$\shneg N$} \LeftLabel{23:} \RightLabel{$\Delta,M$}
\BinaryInfC{$\alpha^{\Sigma,\bullet}$} \DisplayProof &

\AxiomC{$\Pi_{1}$} \RightLabel{$\Delta,N$}
\UnaryInfC{$\alpha^{\Sigma,\ast}$}
\AxiomC{$\Pi_{2}$} \RightLabel{$\Delta,M$}
\UnaryInfC{$\shpos N$} \LeftLabel{24:} \RightLabel{$\Delta,M$}
\BinaryInfC{$\alpha^{\Sigma,\ast}$}
\alwaysNoLine \UnaryInfC{$\phantom{-}$} \DisplayProof \\

\AxiomC{$\Pi_{1}$} \RightLabel{$\Delta,N$}
\UnaryInfC{$\shneg M$}
\AxiomC{$\Pi_{2}$} \RightLabel{$\Delta,M$}
\UnaryInfC{$\shneg P$} \LeftLabel{25:} \RightLabel{$\Delta,N$}
\BinaryInfC{$\shneg P$}
\alwaysNoLine \UnaryInfC{$\phantom{-}$} \DisplayProof &

\AxiomC{$\Pi_{1}$} \RightLabel{$\Delta,N$}
\UnaryInfC{$\shpos M$}
\AxiomC{$\Pi_{2}$} \RightLabel{$\Delta,M$}
\UnaryInfC{$\shpos P$} \LeftLabel{26:} \RightLabel{$\Delta,N$}
\BinaryInfC{$\shpos P$}
\alwaysNoLine \UnaryInfC{$\phantom{-}$} \DisplayProof &

\AxiomC{[$\shneg M$]} \RightLabel{$\Delta,N$}
\UnaryInfC{$\Pi_{1}$} \RightLabel{$\Theta$}
\UnaryInfC{$\alpha^{\Sigma}$} 
\AxiomC{[$\shneg N$]} \RightLabel{$\Delta,M$}
\UnaryInfC{$\Pi_{2}$} \RightLabel{$\Theta$}
\UnaryInfC{$\alpha^{\Sigma}$} \LeftLabel{27:} \RightLabel{$\Theta$}
\BinaryInfC{$\alpha^{\Sigma}$}
\alwaysNoLine \UnaryInfC{$\phantom{-}$} \DisplayProof \\

\AxiomC{[$\shpos M$]} \RightLabel{$\Delta,N$}
\UnaryInfC{$\Pi_{1}$} \RightLabel{$\Theta$}
\UnaryInfC{$\alpha^{\Sigma}$} 
\AxiomC{[$\shpos N$]} \RightLabel{$\Delta,M$}
\UnaryInfC{$\Pi_{2}$} \RightLabel{$\Theta$}
\UnaryInfC{$\alpha^{\Sigma}$} \LeftLabel{28:} \RightLabel{$\Theta$}
\BinaryInfC{$\alpha^{\Sigma}$}
\alwaysNoLine \UnaryInfC{$\phantom{-}$} \DisplayProof &

\AxiomC{[$\shneg N$]} \RightLabel{$\Delta,M$}
\UnaryInfC{$\Pi_{1}$} \RightLabel{$\Theta$}
\UnaryInfC{$\alpha^{\Sigma}$} 
\AxiomC{[$\shpos N$]} \RightLabel{$\Delta,M$}
\UnaryInfC{$\Pi_{2}$} \RightLabel{$\Theta$}
\UnaryInfC{$\alpha^{\Sigma}$} \LeftLabel{29:} \RightLabel{$\Theta$}
\BinaryInfC{$\alpha^{\Sigma}$}
\alwaysNoLine \UnaryInfC{$\phantom{-}$} \DisplayProof &

\AxiomC{$\phantom{-}$} \LeftLabel{30:} \RightLabel{$\Delta$}
\UnaryInfC{$\top$} \alwaysNoLine \UnaryInfC{$\phantom{-}$} \DisplayProof \\
 
\hline
\end{dedsystem}
\caption{Natural Deduction System for PUC-Logic (PUC-ND)}
\end{figure}

Every rule of PUC-ND has a stack of labels, called its \textit{context}. The scope is represented by a capital Greek letter at the right of each rule. The \textit{scope} of a rule is the top label of its context. Given a context $\Delta$, we denote its scope by $\oc \Delta$. If the context is empty, then there is no scope. As in the case of labels and formulas, we want to separate the contexts into two disjoint sets: $\Delta \in \bs{C}_{n}$ if $\oc \Delta \in \bs{L}_{n}$; $\Delta \in \bs{C}_{w}$ if $\Delta$ is empty or $\oc \Delta \in \bs{L}_{w}$.

\begin{definition} \label{fitDf}
We say that a wff $\alpha^{\Sigma}$ \textit{fits} into a context $\Delta$ iff $\alpha^{\Sigma,\overline{\Delta}} \in \bs{F}_{n}$.
\end{definition}

The wff $\alpha^{\bullet} \ra \beta^{\bullet}$ and $\gamma^{u,\circledast,\ast}$ fit into the context $\{\circledcirc\}$, because $(\alpha^{\bullet} \ra \beta^{\bullet})^{\circledcirc} \in \bs{F}_{n}$ and $\gamma^{u,\circledast,\ast,\circledcirc} \in \bs{F}_{n}$. The wff $\alpha^{\bullet} \vee \beta^{\ast}$ and $\gamma^{\ast,N,u}$ do not fit into the context $\{\circledcirc,\ast\}$, because $(\alpha^{\bullet} \vee \beta^{\ast})^{\ast,\circledcirc}$ and $\gamma^{\ast,N,u,\ast,\circledcirc}$ are not wff and, therefore, cannot be in $\bs{F}_{n}$. There is no wff that fits into the context $\{\ast\}$, because the label $\ast \in \bs{L}_{w}$ and the rule of labelling can only include the resulting formula into $\bs{F}_{w}$. We can conclude that if a wff is in $\bs{F}_{n}$, then the context must be in $\bs{C}_{w}$ and the same for $\bs{F}_{w}$ and $\bs{C}_{n}$. The fitting restriction ensures that the conclusion of a rule is always a wff.

Moreover, the definition of fitting resembles the attribute grammar approach for context free languages \cite{Knuth}. This is the main reason to name the stack of labels of a formula as the attribute of the formula.

\newpage

\noindent Here it follows the names and restrictions of the rules of PUC-ND:
\begin{enumerate}
\item $\bs\wedge$\textbf{-elimination}: (a) $\alpha^{\Sigma}$ and $\beta^{\Omega}$ must fit into the context; (b) $\Delta$ has no existential quantifier;\\
The existential quantifier is excluded to make it possible to distribute the context over the $\wedge$ operator, what is shown in lemma \ref{resolution}.

\item $\bs\wedge$\textbf{-elimination}: (a) $\alpha^{\Sigma}$ and $\beta^{\Omega}$ must fit into the context; (b) $\Delta$ has no existential quantifier;\\
The existential quantifier is excluded to make it possible to distribute the context over the $\wedge$ operator, what is shown in lemma \ref{resolution}.

\item $\bs\wedge$\textbf{-introduction}: (a) $\alpha^{\Sigma}$ and $\beta^{\Omega}$ must fit into the context; (b) $\Delta$ has no existential quantifier;\\
The existential quantifier is excluded because the existence of some world (or neighbourhood) in which some wff $A$ holds and the existence of some world in which $B$ holds do not implies that there is some world in which $A$ and $B$ holds.

\item $\bs\vee$\textbf{-introduction}: (a) $\alpha^{\Sigma}$ and $\beta^{\Omega}$ must fit into the context; (b) $\Delta$ has no universal quantifier;\\
The universal quantifier is excluded to make it possible to distribute the context over the $\vee$ operator, what is shown in lemma \ref{resolution}.

\item $\bs\vee$\textbf{-elimination}: (a) $\alpha^{\Sigma}$ and $\beta^{\Omega}$ must fit into the context $\Delta$; (b) $\Delta$ has no universal quantifier;
The universal quantifier is excluded because the fact that for all worlds (or neighbourhoods) $A \vee B$ holds does not implies that for all worlds $A$ holds or for all worlds $B$ holds.

\item $\bs\vee$\textbf{-introduction}: (a) $\alpha^{\Sigma}$ and $\beta^{\Omega}$ must fit into the context; (b) $\Delta$ has no universal quantifier;\\
The universal quantifier is excluded to make it possible to distribute the context over the $\vee$ operator, what is shown in lemma \ref{resolution}.

\item $\bs\bot$\textbf{-classical}: (a) $\alpha^{\Sigma}$ and $\bot$ must fit into the context;

\item $\bs\bot$\textbf{-intuitionistic}: (a) $\alpha^{\Sigma}$ and $\bot$ must fit into the context;

\item \textbf{absurd expansion}: (a) $\Delta$ must have no occurrence of $\circledast$; (b) $\bot$ must fit into the context; (c) $\Delta$ must be non empty.\\
The symbol $\bot$ is used to denote a formula that may only be $\bot_{n}$ or $\bot_{w}$. In the occurrence of $\circledast$, we admit the possibility of an empty system of neighbourhoods. In that context, the absurd does not mean that we actually reach an absurd in our world. $\Delta$ must be non empty to avoid unnecessary detours, like the conclusion of $\bot_{n}$ from $\bot_{n}$ in the empty context;

\item \textbf{hypothesis-injection}: (a) $\alpha^{\Sigma}$ must fit into the context.\\
This rule permits an scope change before any formula change. It also avoids combinatorial definitions of rules with hypothesis and formulas inside a given context;

\item $\bs\ra$\textbf{-introduction}: (a) $\alpha^{\Sigma}$ and $\beta^{\Omega}$ must fit into the context;

\item $\bs\ra$\textbf{-elimination (modus ponens)}: (a) $\alpha^{\Sigma}$ and $\beta^{\Omega}$ must fit into the context; (b) $\Delta$ has no existential quantifier; (c) the premises may be in reverse order;\\
The existential quantifier is excluded because the existence of some world (or neighbourhood) in which some wff $A$ holds and the existence of some world in which $A \ra B$ holds do not implies that there is some world in which $B$ holds.

\item \textbf{context-introduction}: (a) $\alpha^{\Sigma,\phi}$ and $\alpha^{\Sigma}$ must fit into their contexts;

\item \textbf{context-elimination}: (a) $\alpha^{\Sigma,\phi}$ and $\alpha^{\Sigma}$ must fit into their contexts;

\item \textbf{world universal introduction}: (a) $\alpha^{\Sigma}$ must fit into the context; (b) $u$ must not occur in any hypothesis on which $\alpha^{\Sigma}$ depends; (c) $u$ must not occur in the context of any hypothesis on which $\alpha^{\Sigma}$ depends;

\item \textbf{world universal elimination}: (a) $\alpha^{\Sigma}$ must fit into the context; (b) $u$ must not occur in $\alpha^{\Sigma}$ or $\Delta$;

\item \textbf{world existential introduction}: (a) $\alpha^{\Sigma}$ must fit into the context;

\item \textbf{world existential elimination}: (a) the formula $\alpha^{\Sigma}$ must fit into the context; (b) $u$ must not occur in $\alpha^{\Sigma}$, $\Delta$, $\Theta$ or any open hypothesis on which $\beta^{\Omega}$ depends; (c) $u$ must not occur in the context of any open hypothesis on which $\beta^{\Omega}$ depends; (d) the premises may be in reverse order;

\item \textbf{neighbourhood existential introduction}: (a) $\alpha^{\Sigma}$ must fit into the context; (b) the premises may be in reverse order;

\item \textbf{neighbourhood existential elimination}: (a) the formula $\alpha^{\Sigma}$ must fit into the context; (b) $N$ must not occur in $\alpha^{\Sigma}$, $\Delta$, $\Theta$ or any open hypothesis on which $\beta^{\Omega}$ depends; (c) $N$ must not occur in the context of any open hypothesis on which $\beta^{\Omega}$ depends; (d) the premises may be in reverse order;

\item \textbf{neighbourhood universal introduction}: (a) the formula $\alpha^{\Sigma}$ must fit into the contexts; (b) $N$ must not occur in any open hypothesis on which $\alpha^{\Sigma}$ depends; (c) $N$ must not occur in the context of any open hypothesis on which $\alpha^{\Sigma}$ depends;

\item \textbf{neighbourhood universal wild-card}: (a) the formulas $\alpha^{\Sigma}$ and $\beta^{\Omega}$ must fit into their contexts; (b) the premises may be in reverse order;\\
This rule is necessary, because a system of neighbourhood may be empty and every variable must denote some neighbourhood because of the variable assignment function $\sigma$. The wild-card rule may be seen as a permition to use some available variable as an instantiation, by making explicit the choice of the variable.

\item \textbf{world existential propagation}: (a) $\alpha^{\Sigma,\bullet}$ and $\shneg N$ fit into their contexts; (b) the premises may be in reverse order;

\item \textbf{world universal propagation}: (a) $\alpha^{\Sigma,\ast}$ and $\shpos N$ fit into their contexts; (b) the premises may be in reverse order;

\item \textbf{transitive neighbourhood inclusion}: (a) $\shneg M$ and $\shneg P$ fit into their contexts; (b) the premises may be in reverse order;

\item \textbf{transitive neighbourhood inclusion}: (a) $\shpos M$ and $\shpos P$ fit into their contexts; (b) the premises may be in reverse order;

\item \textbf{neighbourhood total order}: (a) $\shneg M$, $\shneg N$ and $\alpha^{\Sigma}$ fit into their contexts; (b) the premises may be in reverse order;

\item \textbf{neighbourhood total order}: (a) $\shpos M$, $\shpos N$ and $\alpha^{\Sigma}$ fit into their contexts; (b) the premises may be in reverse order;

\item \textbf{neighbourhood total order}: (a) $\shneg N$, $\shpos N$ and $\alpha^{\Sigma}$ fit into their contexts. (b) the premises may be in reverse order;

\item \textbf{truth acceptance}: (a) $\Delta$ must have no occurrence of $\circledcirc$; (b) $\top$ must fit into the context.
The symbol $\top$ is used to denote a formula that may only be $\top_{n}$ or $\top_{w}$. If we accepted the occurrence of $\circledcirc$, the existence of some neighbourhood in every system of neighbourhoods would be necessary and the logic of PUC-ND should be normal according to Lewis classification \cite{Lewis}. $\Delta$ must be non empty to avoid unnecessary detours, like the conclusion of $\top_{n}$ from $\top_{n}$ in the empty context.
\end{enumerate}

We present here, as an example of the PUC-ND inference calculus, a proof of a tautology. Considering Lewis definitions, we understand that if there is some neighbourhood that has some $\beta^{\Omega}$-world but no $\alpha^{\Sigma}$-world, then, for all neighbourhoods, having some $\alpha^{\Sigma}$-world implies having some $\beta^{\Omega}$-world. The reason is the total order for the inclusion relation among neighbourhoods.\begin{center}\AxiomC{$^{4}$[$(\neg(\alpha^{\Sigma}))^{\ast} \wedge \beta^{\Omega,\bullet})^{\circledcirc}$]} 
\UnaryInfC{$(\neg(\alpha^{\Sigma}))^{\ast} \wedge \beta^{\Omega,\bullet})^{\circledcirc}$} \RightLabel{$\circledcirc$}
\UnaryInfC{$(\neg(\alpha^{\Sigma}))^{\ast} \wedge \beta^{\Omega,\bullet}$}
\AxiomC{$^{3}$[$(\neg(\alpha^{\Sigma}))^{\ast} \wedge \beta^{\Omega,\bullet}$]} \RightLabel{$N$}
\UnaryInfC{$\Pi$}
\UnaryInfC{$(\alpha^{\Sigma,\bullet} \ra \beta^{\Omega,\bullet})^{\circledast}$} \LeftLabel{3}
\BinaryInfC{$(\alpha^{\Sigma,\bullet} \ra \beta^{\Omega,\bullet})^{\circledast}$} \LeftLabel{4}
\UnaryInfC{$((\neg(\alpha^{\Sigma}))^{\ast} \wedge \beta^{\Omega,\bullet})^{\circledcirc} \ra (\alpha^{\Sigma,\bullet} \ra \beta^{\Omega,\bullet})^{\circledast}$} 
\alwaysNoLine
\UnaryInfC{$\phantom{.}$} 
\UnaryInfC{$\phantom{.}$} \DisplayProof

\begin{footnotesize}\AxiomC{$(\neg(\alpha^{\Sigma}))^{\ast} \wedge \beta^{\Omega,\bullet}$} \RightLabel{$N$}
\UnaryInfC{$(\neg(\alpha^{\Sigma}))^{\ast} \wedge \beta^{\Omega,\bullet}$} \RightLabel{$N$}
\UnaryInfC{$\beta^{\Omega,\bullet}$}
\AxiomC{$^{2}$[$\shneg N$]} \RightLabel{$M$}
\UnaryInfC{$\shneg N$} \RightLabel{$M$}
\BinaryInfC{$\beta^{\Omega,\bullet}$} \RightLabel{$M$}
\UnaryInfC{$\alpha^{\Sigma,\bullet} \ra \beta^{\Omega,\bullet}$}
\AxiomC{$(\neg(\alpha^{\Sigma}))^{\ast} \wedge \beta^{\Omega,\bullet}$} \RightLabel{$N$}
\UnaryInfC{$(\neg(\alpha^{\Sigma}))^{\ast} \wedge \beta^{\Omega,\bullet}$} \RightLabel{$N$}
\UnaryInfC{$(\neg(\alpha^{\Sigma}))^{\ast}$}
\AxiomC{$^{2}$[$\shpos N$]} \RightLabel{$M$}
\UnaryInfC{$\shpos N$} \RightLabel{$M$}
\BinaryInfC{$(\neg(\alpha^{\Sigma}))^{\ast}$} \RightLabel{$M,\ast$}
\UnaryInfC{$\neg(\alpha^{\Sigma})$} \RightLabel{$M,u$}
\UnaryInfC{$\neg(\alpha^{\Sigma})$}
\AxiomC{$^{1}$[$\alpha^{\Sigma,\bullet}$]} \RightLabel{$M$}
\UnaryInfC{$\alpha^{\Sigma,\bullet}$} \RightLabel{$M,\bullet$}
\UnaryInfC{$\alpha^{\Sigma}$} \RightLabel{$M,u$}
\UnaryInfC{$\alpha^{\Sigma}$} \RightLabel{$M,u$}
\BinaryInfC{$\bot$} \RightLabel{$M,u$}
\UnaryInfC{$\beta^{\Omega}$} \RightLabel{$M,\bullet$}
\UnaryInfC{$\beta^{\Omega}$} \RightLabel{$M$}
\UnaryInfC{$\beta^{\Omega,\bullet}$} \RightLabel{$M$} \LeftLabel{1}
\UnaryInfC{$\alpha^{\Sigma,\bullet} \ra \beta^{\Omega,\bullet}$} \LeftLabel{2} \RightLabel{$M$}
\BinaryInfC{$\alpha^{\Sigma,\bullet} \ra \beta^{\Omega,\bullet}$} \RightLabel{$\circledast$}
\UnaryInfC{$\alpha^{\Sigma,\bullet} \ra \beta^{\Omega,\bullet}$} \LeftLabel{$\bs{\Pi} \hspace*{1cm}$}
\UnaryInfC{$(\alpha^{\Sigma,\bullet} \ra \beta^{\Omega,\bullet})^{\circledast}$} \DisplayProof\end{footnotesize}\end{center}

\begin{lemma} \label{countingLemma}
If $\Delta \in \bs{C}_{n}$, then $s(\Delta)$ is odd. If $\Delta \in \bs{C}_{w}$, then $s(\Delta)$ is even.
\end{lemma}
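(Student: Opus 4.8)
The plan is to reduce the lemma to a single statement about the \emph{scope} of a context and then read the parity off the formation rules for attributes. Observe first that the claim concerns only contexts that actually occur in PUC-ND, i.e.\ contexts into which some wff fits in the sense of Definition~\ref{fitDf}; for an arbitrary stack of labels it would fail (e.g.\ the stack $\{\ast\}$ has scope $\ast\in\bs{L}_{w}$, hence lies in $\bs{C}_{w}$, yet has odd size). Since $\bs{C}_{n}$ and $\bs{C}_{w}$ are defined purely by $\oc\Delta$, and the empty context has $s(\Delta)=0$ and lies in $\bs{C}_{w}$, it suffices to prove: for every context $\Delta$ admitting a fitting wff, $\oc\Delta\in\bs{L}_{n}$ if and only if $s(\Delta)$ is odd. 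Both implications of the lemma then follow at once, because $\Delta\in\bs{C}_{n}$ means $\oc\Delta\in\bs{L}_{n}$ and $\Delta\in\bs{C}_{w}$ means $\Delta$ is empty or $\oc\Delta\in\bs{L}_{w}$.

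First I would isolate an \emph{alternation} property of attributes from Definition~\ref{wffDefinition}. Rules~9 and~10 are the only ones that attach a label, and they show that attaching a world label to a formula of $\bs{F}_{n}$ produces a formula of $\bs{F}_{w}$, while attaching a neighbourhood label to a formula of $\bs{F}_{w}$ produces a formula of $\bs{F}_{n}$; in particular one can never attach a label of the same sort as the current formula. By induction on the number of labels it then follows that, in any wff $\alpha^{\gamma_{1},\dots,\gamma_{m}}$, consecutive labels have opposite sort, and the sort of the whole wff equals the sort of the outermost (right-most, last-attached) label $\gamma_{m}$, i.e.\ it is neighbourhood exactly when $\gamma_{m}\in\bs{L}_{n}$.

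Finally I would apply this to a context $\Delta=[\delta_{1},\dots,\delta_{k}]$ with $\oc\Delta=\delta_{k}$, together with a wff $\alpha^{\Sigma}$ fitting into it. By Definition~\ref{fitDf} we have $\alpha^{\Sigma,\overline{\Delta}}\in\bs{F}_{n}$, and the attribute $\Sigma,\overline{\Delta}$ ends in $\delta_{k},\delta_{k-1},\dots,\delta_{1}$, so its outermost label is $\delta_{1}$. By the alternation property the outermost label must be a neighbourhood label, i.e.\ $\delta_{1}\in\bs{L}_{n}$, and the $k$ labels $\delta_{1},\dots,\delta_{k}$, being a consecutive block of the attribute, alternate in sort; hence $\delta_{i}\in\bs{L}_{n}$ precisely when $i$ is odd. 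Taking $i=k$ gives $\oc\Delta=\delta_{k}\in\bs{L}_{n}$ iff $k=s(\Delta)$ is odd, which is exactly what was needed.

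The main obstacle I anticipate is purely bookkeeping: the reversal $\overline{\Delta}$ in the definition of fitting means that the constraint $\alpha^{\Sigma,\overline{\Delta}}\in\bs{F}_{n}$ pins down directly only the \emph{bottom} label $\delta_{1}$ of the context, whereas the lemma speaks about the \emph{scope} $\delta_{k}$; the alternation property is precisely the bridge carrying the neighbourhood/world parity from the bottom of the context to its top. Care is also needed to make explicit the standing hypothesis that $\Delta$ is an occurring (fitting-admitting) context, without which the statement does not hold.
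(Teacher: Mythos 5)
You prove the statement correctly, but by a route that differs from the paper's. The paper's proof is dynamic: it follows contexts as they are created in PUC-ND derivations, noting that the empty context lies in $\bs{C}_{w}$ with even size $0$, that an empty context ``can only accept an additional label $\phi\in\bs{L}_{n}$'', and that, inductively, each further extension must carry a label of sort opposite to the current scope, so every push flips both the class ($\bs{C}_{w}$ versus $\bs{C}_{n}$) and the parity of the size. You instead argue statically: you extract an alternation property of attributes from the labelling clauses (9) and (10) of Definition~\ref{wffDefinition}, and then transport it through the fitting condition $\alpha^{\Sigma,\overline{\Delta}}\in\bs{F}_{n}$, using the reversal to carry the forced neighbourhood sort of the innermost-written (outermost-attached) label $\delta_{1}$ up to the scope $\delta_{k}$. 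The two proofs rest on the same alternation phenomenon, but yours makes explicit two things the paper leaves tacit: first, that the lemma only holds for contexts into which some wff fits (your counterexample $\{\ast\}\in\bs{C}_{w}$ of odd size shows the raw statement fails for arbitrary stacks; the paper sidesteps this by, in effect, quantifying only over contexts reachable in derivations); second, that the paper's key step --- a context can only be extended by a label of the opposite sort --- is precisely your alternation property, justified there only by an appeal to ``the rules of the PUC-ND'' rather than by the wff grammar. What the paper's version buys is brevity and a formulation matching how the lemma is later applied (to contexts of actual derivations); what yours buys is rigor and independence from any analysis of the inference rules.
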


\begin{proof}
By definition, if $\Delta$ is empty, then $\Delta \in \bs{C}_{w}$ and $s(\Delta)$ is even. According to the rules of the PUC-ND, if $\Delta$ is empty, it can only accept an additional label $\phi \in \bs{L}_{n}$, then $\{\Delta,\phi\} \in \bs{C}_{n}$ and $s(\Delta)$ is odd. We conclude that changing the context from $\bs{C}_{w}$ to $\bs{C}_{n}$ and vice-versa always involves adding one to the size of the label and the even sizes are only and always for contexts in $\bs{C}_{w}$.\end{proof}

\section{PUC Soundness and Completeness}

For the proof of soundness of PUC-Logic, we prove that the PUC-ND derivations preserves the relation of resolution, which is a relation that generalizes the satisfability relation. To do so, we need to prove some lemmas. In many cases we use the definition \ref{referentialConsequence} of the referential consequence relation.

\begin{definition} \label{resolutionDf}
Given a model $\m{M}$, a context $\Delta$ and a wff $\alpha^{\Sigma}$, the relation $\m{M} \models^{\Delta} \alpha^{\Sigma}$ of \textit{resolution} is defined iff $\alpha^{\Sigma}$ fits into the context $\Delta$ and $\m{M} \models \alpha^{\Sigma,\overline{\Delta}}$. If $\Gamma \subset \bs{F}_{n}$ or $\Gamma \subset \bs{F}_{w}$, then $\m{M} \models^{\Delta} \Gamma$ if the resolution relation holds for every formula of $\Gamma$.
\end{definition}

\begin{lemma} \label{lemmaConsequence}
Given a model $\m{M} = \ob \m{W} , \$ , \m{V} , \chi \cb$, if $\m{M} \models^{\Delta} \alpha^{\Sigma}$ and $\alpha^{\Sigma} \models_{\m{M}:s(\Delta)} \beta^{\Omega}$, then $\m{M} \models^{\Delta} \beta^{\Omega}$.
\end{lemma}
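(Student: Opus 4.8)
The plan is to reduce the resolution relation of Definition~\ref{resolutionDf} to ordinary satisfaction (Definition~\ref{satisfactionDefinition}) and then read off the referential consequence of Definition~\ref{referentialConsequence} along the resulting path. Write $n = s(\Delta)$. The resolution $\m{M} \models^{\Delta} \alpha^{\Sigma}$ requires $\alpha^{\Sigma,\overline{\Delta}} \in \bs{F}_{n}$, and since each label of $\overline{\Delta}$ flips a formula between $\bs{F}_{n}$ and $\bs{F}_{w}$ (rules~9 and~10 of Definition~\ref{wffDefinition}), the parity of $n$ governed by Lemma~\ref{countingLemma} decides the sort of $\alpha^{\Sigma}$ itself: $\alpha^{\Sigma} \in \bs{F}_{n}$ when $n$ is even and $\alpha^{\Sigma} \in \bs{F}_{w}$ when $n$ is odd. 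This is exactly the parity of the endpoint of a path $\m{M} \multimap_{n} \m{H}$, which is a model for even $n$ and a template for odd $n$, so that $\m{H} \models \alpha^{\Sigma}$ and $\m{H} \models \beta^{\Omega}$ always typecheck. I would first settle the base case $n = 0$: then $\Delta$ and $\overline{\Delta}$ are empty, resolution collapses to $\m{M} \models \alpha^{\Sigma}$, and the $n = 0$ clause of Definition~\ref{referentialConsequence} gives $\m{M} \models \beta^{\Omega}$, which is $\m{M} \models^{\Delta} \beta^{\Omega}$ once one observes that $\beta^{\Omega} \in \bs{F}_{n}$ fits the empty context.

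For $n > 0$ the argument rests on a correspondence, which I would prove by induction on the context size: for any structure $\m{S}$ and any wff $\gamma^{\Gamma}$ fitting a context $\Delta$, the satisfaction $\m{S} \models \gamma^{\Gamma,\overline{\Delta}}$ is equivalent to the statement, whose quantifier pattern is read off the labels of $\Delta$, that every (respectively some) structure $\m{H}$ with $\m{S} \multimap_{s(\Delta)} \m{H}$ satisfies $\gamma^{\Gamma}$. The inductive step peels the index (the rightmost label) of $\gamma^{\Gamma,\overline{\Delta}}$ via the matching clause among 6--8 and 11--13 of Definition~\ref{satisfactionDefinition}: a $\circledast$ or $\ast$ gives a universal step, a $\circledcirc$ or $\bullet$ an existential step, and a variable $N$ or $u$ the single step fixed by $\sigma$. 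Each clause is one step of the perspective relation $\multimap$ (model-to-template for a neighbourhood label, template-to-model for a world label), so iterating traverses the concatenation $\multimap_{s(\Delta)}$; the alternation of label sorts forced by rules~9 and~10 of Definition~\ref{wffDefinition} keeps every intermediate expression a wff, and Lemma~\ref{countingLemma} records the parity.

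With this correspondence, $\m{M} \models^{\Delta} \alpha^{\Sigma}$ unpacks into the assertion that the structures $\m{H}$ with $\m{M} \multimap_{n} \m{H}$ satisfy $\alpha^{\Sigma}$ in the pattern prescribed by $\Delta$. Applying the referential consequence $\alpha^{\Sigma} \models_{\m{M}:n} \beta^{\Omega}$ at each such $\m{H}$ (reading its antecedent $\alpha^{\Sigma}$ at the reached structure, in line with its $n = 0$ clause, where $\m{H} = \m{M}$) transfers satisfaction from $\alpha^{\Sigma}$ to $\beta^{\Omega}$, and re-encapsulating the same labels of $\overline{\Delta}$ through the correspondence in the reverse direction yields $\m{M} \models \beta^{\Omega,\overline{\Delta}}$, that is $\m{M} \models^{\Delta} \beta^{\Omega}$. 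Here I would use that $\alpha^{\Sigma}$ and $\beta^{\Omega}$ fit the same context $\Delta$ (Definition~\ref{fitDf}), so that they induce the same quantifier pattern and the same collection of reachable structures, which makes the re-encapsulation legitimate and its conclusion a wff.

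The main obstacle I anticipate is precisely this correspondence, and especially the alignment between the single ``for any structure $\m{M} \multimap_{n} \m{H}$'' quantification of Definition~\ref{referentialConsequence} and the mixed universal/existential pattern that the labels of $\Delta$ impose. For universal and variable labels the transfer is routine, the referential consequence being applied at every reachable $\m{H}$, respectively at the $\sigma$-fixed one. The delicate case is the existential labels: there the resolution of $\alpha^{\Sigma}$ supplies only a witnessing structure $\m{H}_{0}$ satisfying $\alpha^{\Sigma}$, so one must invoke the referential consequence at that very $\m{H}_{0}$ and reuse it as the witness for $\beta^{\Omega}$, verifying that it is reached by the same perspective choices on both sides so that $\m{M} \models \beta^{\Omega,\overline{\Delta}}$ genuinely follows.
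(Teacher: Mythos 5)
Your proposal is correct and takes essentially the same route as the paper: the paper likewise unfolds $\m{M} \models^{\Delta} \alpha^{\Sigma}$ into a directed graph of satisfaction statements whose endpoints are exactly the structures reached by perspective paths of length $s(\Delta)$, applies the referential consequence pointwise at those endpoints (keeping the same witnesses for existential labels, all endpoints for universal ones, and the $\sigma$-fixed endpoint for variables), and refolds to conclude $\m{M} \models^{\Delta} \beta^{\Omega}$. The only difference is presentational: where you prove the unfolding correspondence by induction on the size of $\Delta$, the paper verifies it case by case for contexts of size at most $2$ and dismisses the general case as following ``by analogy,'' so your version is, if anything, a tightening of the same argument.
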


\begin{proof}
If $\Delta$ is empty ($s(\Delta)=0$), the resolution gives us $\m{M} \models \alpha^{\Sigma}$. From $\alpha^{\Sigma} \models_{\m{M}:0} \beta^{\Omega}$ we know that $\m{M} \models \beta^{\Omega}$ if $\m{M} \models \alpha^{\Sigma}$ and, by the definition of resolution, $\m{M} \models^{\Delta} \beta^{\Omega}$;\\

\noindent If $\Delta = \{\circledast\}$ ($s(\Delta)=1$), then, by definition, $\m{M} \models^{\{\circledast\}} \alpha^{\Sigma}$ means $\m{M} \models \alpha^{\Sigma,\circledast}$ and for every template $\m{T}$, such that $\m{M} \multimap \m{T}$, $\m{T} \models \alpha^{\Sigma}$. 
$$
\begin{xy}
\xymatrix{
&\ob \m{W} , \$ , \m{V} , \chi \cb \models \alpha^{\Sigma,\circledast} \ar@{-o}[dl] \ar@{-o}[d] \ar@{-o}[dr] &\\
\ob \m{W} , \$ , \m{V} , \chi , N \cb \models \alpha^{\Sigma} & \ldots & \ob \m{W} , \$ , \m{V} , \chi , S \cb \models \alpha^{\Sigma}
}
\end{xy}
$$
$N, \ldots , S$ represent all neighbourhoods of $\$(\chi)$. From $s(\{\circledast\})=1$, we know that $\alpha^{\Sigma} \models_{\m{M}:1} \beta^{\Omega}$ and, by definition, we can change $\alpha^{\Sigma}$ by $\beta^{\Omega}$ in all endpoints of the directed graph and conclude $\ob \m{W} , \$ , \m{V} , \chi \cb \models \beta^{\Omega,\circledast}$ and $\m{M} \models^{\Delta} \beta^{\Omega}$;\\

\noindent If $\Delta = \{\circledcirc\}$ ($s(\Delta)=1$), then $\m{M} \models \alpha^{\Sigma,\circledcirc}$.
$$
\begin{xy}
\xymatrix{
&\ob \m{W} , \$ , \m{V} , \chi \cb \models \alpha^{\Sigma,\circledcirc} \ar@{-o}[dl] \ar@{-o}[d] \ar@{-o}[dr] &\\
\ob \m{W} , \$ , \m{V} , \chi , N \cb \models \alpha^{\Sigma} & \ldots & \ob \m{W} , \$ , \m{V} , \chi , S \cb \models \alpha^{\Sigma}
}
\end{xy}
$$
$N, \ldots , S$ represent all neighbourhoods of $\$(\chi)$ such that $\alpha^{\Sigma}$ holds. We know that there is at least one of such neighbourhoods. From $\alpha^{\Sigma} \models_{\m{M}:1} \beta^{\Omega}$, we can change $\alpha^{\Sigma}$ by $\beta^{\Omega}$ in all endpoints and conclude $\m{M} \models \beta^{\Omega,\circledcirc}$ because we know that there is at least one of such downward paths. By definition, $\m{M} \models^{\Delta} \beta^{\Omega}$;\\

\noindent If $\Delta = \{N\}$ ($s(\Delta)=1$), then $\m{M} \models \alpha^{\Sigma,N}$.
$$
\begin{xy}
\xymatrix{
\ob \m{W} , \$ , \m{V} , \chi \cb \models \alpha^{\Sigma,N} \ar@{-o}[d]\\
\ob \m{W} , \$ , \m{V} , \chi , \sigma(N) \cb \models \alpha^{\Sigma}
}
\end{xy}
$$
From $\alpha^{\Sigma} \models_{\m{M}:1} \beta^{\Omega}$, we change $\alpha^{\Sigma}$ by $\beta^{\Omega}$ in the endpoint and conclude $\m{M} \models \beta^{\Omega,N}$. By definition, $\m{M} \models^{\Delta} \beta^{\Omega}$;\\

\noindent If $\Delta = \{\circledast,\ast\}$ ($s(\Delta)=2$), then $\m{M} \models \alpha^{\Sigma,\ast,\circledast}$.
$$
\begin{scriptsize}
\begin{xy}
\xymatrix{
\ob \m{W} , \$ , \m{V} , \chi \cb \models \alpha^{\Sigma,\ast,\circledast} \ar@{-o}[d] \ar@{-o}[drrrr] &&&\\
\ob \m{W} , \$ , \m{V} , \chi , N \cb \models \alpha^{\Sigma,\ast} \ar@{-o}[d] \ar@{-o}[drr] && \ldots && \ob \m{W} , \$ , \m{V} , \chi , S \cb \models \alpha^{\Sigma,\ast} \ar@{-o}[d]\\
\ob \m{W} , \$ , \m{V} , \lambda_{1} \cb \models \alpha^{\Sigma} & \ldots & \ob \m{W} , \$ , \m{V} , \lambda_{t} \cb \models \alpha^{\Sigma} && \ldots
}
\end{xy}
\end{scriptsize}
$$
$N, \ldots , S$ represent all neighbourhoods of $\$(\chi)$. $\lambda_{1}, \ldots , \lambda_{t}$ represent all worlds of $N$. From $\alpha^{\Sigma} \models_{\m{M}:2} \beta^{\Omega}$, we can change $\alpha^{\Sigma}$ by $\beta^{\Omega}$ in all endpoints and conclude $\m{M} \models \beta^{\Omega,\ast,\circledast}$. By definition, $\m{M} \models^{\Delta} \beta^{\Omega}$;\\

\noindent If $\Delta = \{\circledast,\bullet\}$ ($s(\Delta)=2$), then $\m{M} \models \alpha^{\Sigma,\bullet,\circledast}$.
$$
\begin{scriptsize}
\begin{xy}
\xymatrix{
\ob \m{W} , \$ , \m{V} , \chi \cb \models \alpha^{\Sigma,\bullet,\circledast} \ar@{-o}[d] \ar@{-o}[drrrr] &&&\\
\ob \m{W} , \$ , \m{V} , \chi , N \cb \models \alpha^{\Sigma,\bullet} \ar@{-o}[d] \ar@{-o}[drr] && \ldots && \ob \m{W} , \$ , \m{V} , \chi , S \cb \models \alpha^{\Sigma,\bullet} \ar@{-o}[d]\\
\ob \m{W} , \$ , \m{V} , \lambda_{1} \cb \models \alpha^{\Sigma} & \ldots & \ob \m{W} , \$ , \m{V} , \lambda_{t} \cb \models \alpha^{\Sigma} && \ldots
}
\end{xy}
\end{scriptsize}
$$
$N, \ldots , S$ represent all neighbourhoods of $\$(\chi)$. $\lambda_{1}, \ldots , \lambda_{t}$ represent all worlds of $N$ in which $\alpha^{\Sigma}$ holds. We know that there is at least one of these worlds. From $\alpha^{\Sigma} \models_{\m{M}:2} \beta^{\Omega}$, we can change $\alpha^{\Sigma}$ by $\beta^{\Omega}$ in all endpoints and conclude $\m{M} \models \beta^{\Omega,\bullet,\circledast}$ and $\m{M} \models^{\Delta} \beta^{\Omega}$;\\

\noindent If $\Delta = \{\circledast,u\}$ ($s(\Delta)=2$), then $\m{M} \models \alpha^{\Sigma,u,\circledast}$.
$$
\begin{xy}
\xymatrix{
\ob \m{W} , \$ , \m{V} , \chi \cb \models \alpha^{\Sigma,u,\circledast} \ar@{-o}[d] \ar@{-o}[drrr] &&&\\
\ob \m{W} , \$ , \m{V} , \chi , N_{1} \cb \models \alpha^{\Sigma,u} \ar@{-o}[d] & \ldots && \ob \m{W} , \$ , \m{V} , \chi , N_{s} \cb \models \alpha^{\Sigma,u} \ar@{-o}[dlll]\\
\ob \m{W} , \$ , \m{V} , \sigma(u) \cb \models \alpha^{\Sigma} &&&& 
}
\end{xy}
$$
$N, \ldots , S$ represent all neighbourhoods of $\$(\chi)$. From $\alpha^{\Sigma} \models_{\m{M}:2} \beta^{\Omega}$, we can change $\alpha^{\Sigma}$ by $\beta^{\Omega}$ in the endpoint and conclude $\m{M} \models \beta^{\Omega,u,\circledast}$. So, by definition, $\m{M} \models^{\Delta} \beta^{\Omega}$;\\
\noindent Any combination of labels follows, by analogy, the same arguments for each label presented above.\end{proof}

\begin{lemma} \label{lemmaDirectConsequence}
Given a model $\m{M} = \ob \m{W} , \$ , \m{V} , \chi \cb$, if $\m{M} \models^{\Delta} \alpha^{\Sigma}$ and $\alpha^{\Sigma} \models \beta^{\Omega}$, then $\m{M} \models^{\Delta} \beta^{\Omega}$.
\end{lemma}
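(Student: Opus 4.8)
The plan is to obtain this lemma as an immediate corollary of Lemma~\ref{lemmaConsequence}. The two statements are identical except in their second hypothesis: here we assume the (unrelativised) logical consequence $\alpha^{\Sigma} \models \beta^{\Omega}$ of Definition~\ref{logicalConsequenceDf}, whereas Lemma~\ref{lemmaConsequence} assumes the referential consequence $\alpha^{\Sigma} \models_{\m{M}:s(\Delta)} \beta^{\Omega}$ of Definition~\ref{referentialConsequence}. Consequently it suffices to prove the single implication that $\alpha^{\Sigma} \models \beta^{\Omega}$ entails $\alpha^{\Sigma} \models_{\m{M}:s(\Delta)} \beta^{\Omega}$ for the given reference model $\m{M}$ and context $\Delta$; feeding this into Lemma~\ref{lemmaConsequence} together with the hypothesis $\m{M} \models^{\Delta} \alpha^{\Sigma}$ then yields $\m{M} \models^{\Delta} \beta^{\Omega}$.

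To establish that implication I would unfold Definition~\ref{referentialConsequence} according to the value of $n = s(\Delta)$. The guiding idea is that logical consequence is exactly the \emph{global} version of referential consequence: it asserts that $\m{S} \models \alpha^{\Sigma}$ implies $\m{S} \models \beta^{\Omega}$ at \emph{every} structure $\m{S}$ of the appropriate sort, so in particular at every structure reachable from $\m{M}$, which is all the referential relation asks for. Concretely, when $n = 0$ the context $\Delta$ is empty, so $\Delta \in \bs{C}_{w}$ and the fitting condition forces $\alpha^{\Sigma},\beta^{\Omega} \in \bs{F}_{n}$; then the model clause of Definition~\ref{logicalConsequenceDf}, applied to $\m{M}$ itself, delivers precisely the $n=0$ requirement of referential consequence. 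When $n > 0$ the structures $\m{H}$ with $\m{M} \multimap_{n} \m{H}$ are endpoints of a path of length $n$ starting at the model $\m{M}$, hence models when $n$ is even and templates when $n$ is odd; by Lemma~\ref{countingLemma} this parity is exactly the parity that places $\alpha^{\Sigma}$ and $\beta^{\Omega}$ in $\bs{F}_{n}$ or in $\bs{F}_{w}$. Thus the sort of each endpoint matches the sort of the two formulas, and the corresponding clause of Definition~\ref{logicalConsequenceDf} gives $\m{H} \models \beta^{\Omega}$ at every such $\m{H}$ at which $\alpha^{\Sigma}$ holds, which is what referential consequence demands.

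The only genuinely delicate step is this sort bookkeeping: one must check that the parity of $s(\Delta)$ simultaneously controls (i) whether the endpoints reached by $\multimap_{s(\Delta)}$ from a model are themselves models or templates, and (ii) whether $\alpha^{\Sigma}$ and $\beta^{\Omega}$ belong to $\bs{F}_{n}$ or to $\bs{F}_{w}$, so that the right clause of the definition of logical consequence fires at each endpoint. This is exactly the content of Lemma~\ref{countingLemma} together with the earlier remark that a path of even size relates a model to a model while an odd path relates a model to a template, so no new argument is required. Everything else is a direct unfolding of the definitions, and the conclusion $\m{M} \models^{\Delta} \beta^{\Omega}$ then follows from Lemma~\ref{lemmaConsequence}.
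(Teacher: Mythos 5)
Your proposal is correct and is essentially the paper's own argument: the paper's proof just says to follow the argument of Lemma~\ref{lemmaConsequence}, replacing $\alpha^{\Sigma}$ by $\beta^{\Omega}$ at all endpoints, which is licensed precisely because logical consequence holds at every structure of the matching sort and hence at every structure reachable from $\m{M}$. The only difference is organizational: you package that observation as the implication from logical consequence to referential consequence at level $s(\Delta)$ (with the same parity bookkeeping via Lemma~\ref{countingLemma}) and then invoke Lemma~\ref{lemmaConsequence} as a black box, instead of replaying its endpoint-substitution argument inline.
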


\begin{proof}
We follow the argument of lemma \ref{lemmaConsequence}, by changing $\alpha^{\Sigma}$ by $\beta^{\Omega}$ in all endpoints, what is possible by the definition of logical consequence.
\end{proof}

\begin{lemma} \label{lemmaVee}
Given $\Delta$ without universal quantifiers, if $\alpha^{\Sigma,\overline{\Delta}} \vee \beta^{\Omega,\overline{\Delta}}$ is wff, then $\alpha^{\Sigma,\overline{\Delta}} \vee \beta^{\Omega,\overline{\Delta}} \equiv (\alpha^{\Sigma} \vee \beta^{\Omega})^{\overline{\Delta}}$.
\end{lemma}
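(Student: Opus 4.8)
The plan is to prove the slightly more convenient statement that for \emph{every} label stack $\Xi$ free of the universal labels $\circledast,\ast$ one has $\alpha^{\Sigma,\Xi}\vee\beta^{\Omega,\Xi}\equiv(\alpha^{\Sigma}\vee\beta^{\Omega})^{\Xi}$; the lemma is then the instance $\Xi=\overline{\Delta}$, since $\overline{\Delta}$ carries the same labels as $\Delta$ and is therefore universal-free exactly when $\Delta$ is. In this form every label occurring in $\Xi$ is one of $\circledcirc,\bullet,N,u$, and the whole statement reduces to a single observation: existential and variable labels distribute over $\vee$, whereas universal labels do not. This is precisely the reason the hypothesis forbids $\circledast$ and $\ast$ in $\Delta$.

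First I would argue by induction on the length of $\Xi$. The base case $\Xi$ empty is immediate, since both sides are literally $\alpha^{\Sigma}\vee\beta^{\Omega}$. For the step, write $\Xi=\Xi_{0},\phi$ with $\phi$ the index (top) label, so that
$$
\alpha^{\Sigma,\Xi}\vee\beta^{\Omega,\Xi}
=\bigl(\alpha^{\Sigma,\Xi_{0}}\bigr)^{\phi}\vee\bigl(\beta^{\Omega,\Xi_{0}}\bigr)^{\phi},
\qquad
(\alpha^{\Sigma}\vee\beta^{\Omega})^{\Xi}
=\Bigl((\alpha^{\Sigma}\vee\beta^{\Omega})^{\Xi_{0}}\Bigr)^{\phi}.
$$
The induction hypothesis gives $(\alpha^{\Sigma}\vee\beta^{\Omega})^{\Xi_{0}}\equiv\alpha^{\Sigma,\Xi_{0}}\vee\beta^{\Omega,\Xi_{0}}$, so it remains only to prove the one-label distribution law $(\gamma\vee\delta)^{\phi}\equiv\gamma^{\phi}\vee\delta^{\phi}$ for $\gamma=\alpha^{\Sigma,\Xi_{0}}$, $\delta=\beta^{\Omega,\Xi_{0}}$ and $\phi\in\{\circledcirc,\bullet,N,u\}$. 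Here I would also record, using the sort-alternation of the formation rules (Definition~\ref{wffDefinition}) and the parity argument of Lemma~\ref{countingLemma}, that stripping the common trailing label $\phi$ keeps $\gamma$ and $\delta$ in a common sort, so $\gamma\vee\delta$ and both displayed expressions are genuinely well-formed.

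The one-label law I would verify by unfolding Definition~\ref{satisfactionDefinition} in the four cases. For $\phi=\circledcirc$ (clause 7), satisfaction of $(\gamma\vee\delta)^{\circledcirc}$ at a model asserts $\exists N\in\$(\chi)$ with the template satisfying $\gamma\vee\delta$, i.e.\ satisfying $\gamma$ or $\delta$; since $\exists$ commutes with ``or'', this is exactly $\models\gamma^{\circledcirc}$ or $\models\delta^{\circledcirc}$. The case $\phi=\bullet$ (clause 12) is identical with $\exists w\in N$ in place of $\exists N\in\$(\chi)$. For the variable label $\phi=N$ (clause 8) there is a single deterministic transition to the template $\ob\m{W},\$,\m{V},\chi,\sigma(N)\cb$, and ``or'' passes through unchanged. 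The only case needing care is $\phi=u$ (clause 13): satisfaction there carries the guard $\sigma(u)\in N$, so $(\gamma\vee\delta)^{u}$ unfolds to ``$\sigma(u)\in N$ and ($\models\gamma$ or $\models\delta$)'', and the distribution rests on the propositional identity $A\wedge(B\vee C)\equiv(A\wedge B)\vee(A\wedge C)$, which duplicates the shared guard into each disjunct to produce $\gamma^{u}\vee\delta^{u}$.

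The main obstacle is conceptual rather than computational: one must see exactly why the universal labels are excluded. Had $\phi$ been $\circledast$ or $\ast$, the same unfolding would yield ``$\forall N$ ($\models\gamma$ or $\models\delta$)'', and $\forall$ does not distribute over ``or'', so the equivalence would break; this is the crux forcing the hypothesis on $\Delta$. The remaining labor — tracking the $\bs{F}_{n}$ versus $\bs{F}_{w}$ sort flips as $\phi$ is added or removed, and the attendant well-formedness — is routine bookkeeping handled uniformly by Definition~\ref{wffDefinition} and Lemma~\ref{countingLemma}.
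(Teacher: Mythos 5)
Your proof is correct and follows essentially the same route as the paper's: induction on the length of the label stack, with the base step unfolding the satisfaction relation (Definition~\ref{satisfactionDefinition}) to show that $\circledcirc$, $\bullet$, $N$, $u$ all distribute over $\vee$ --- existentials commute with ``or'', and the guards $\sigma(u)\in N$ (resp.\ $\sigma(N)\in\$(\chi)$) distribute propositionally. The only differences are organizational: you peel the outermost (index) label uniformly and prove a single one-label law, whereas the paper peels the innermost label and arranges its base cases by sort (single neighbourhood labels, then neighbourhood--world label pairs); both arguments rely on the same implicit fact that labelling preserves semantic equivalence, so nothing essential is changed.
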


\begin{proof} We proceed by induction on the size of $\Delta$:\\
\noindent If $\Delta$ is empty, then equivalence is true;\\
\noindent (base) If $\Delta$ contains only one label, it must be a neighbourhood label:
\begin{itemize}
\item[-] $\alpha^{\Sigma,\circledcirc} \vee \beta^{\Omega,\circledcirc}$ may be read as $\exists N \in \$(\chi) : \ob \m{W} , \$ , \m{V} , \chi , N \cb \models \alpha^{\Sigma}$ or $\exists M \in \$(\chi) : \ob \m{W} , \$ , \m{V} , \chi , M \cb \models \beta^{\Omega}$. But $\exists N \in \$(\chi) : \ob \m{W} , \$ , \m{V} , \chi , N \cb \models \alpha^{\Sigma}$ implies, by definition, $\exists N \in \$(\chi) : \ob \m{W} , \$ , \m{V} , \chi , N \cb \models \alpha^{\Sigma} \vee \beta^{\Omega}$. Then we have $\exists N \in \$(\chi) : \ob \m{W} , \$ , \m{V} , \chi , N \cb \models \alpha^{\Sigma} \vee \beta^{\Omega}$ or $\exists M \in \$(\chi) : \ob \m{W} , \$ , \m{V} , \chi , M \cb \models \alpha^{\Sigma} \vee \beta^{\Omega}$. Since the neighbourhood variables are bound, we have $\exists N \in \$(\chi) : \ob \m{W} , \$ , \m{V} , \chi , N \cb \models \alpha^{\Sigma} \vee \beta^{\Omega}$, which is represented whit labels as $(\alpha^{\Sigma} \vee \beta^{\Omega})^{\circledcirc}$. Then $\alpha^{\Sigma,\circledcirc} \vee \beta^{\Omega,\circledcirc}$ implies $(\alpha^{\Sigma} \vee \beta^{\Omega})^{\circledcirc}$. On the other hand, $(\alpha^{\Sigma} \vee \beta^{\Omega})^{\circledcirc}$ may be read as $\exists N \in \$(\chi) : \ob \m{W} , \$ , \m{V} , \chi , N \cb \models \alpha^{\Sigma} \vee \beta^{\Omega}$, which means, by definition, $\exists N \in \$(\chi) : \ob \m{W} , \$ , \m{V} , \chi , N \cb \models \alpha^{\Sigma} \mbox{ or } \ob \m{W} , \$ , \m{V} , \chi , N \cb \models \beta^{\Omega}$. In the first case, $\exists N \in \$(\chi) : \ob \m{W} , \$ , \m{V} , \chi , N \cb \models \alpha^{\Sigma}$, which may be read as $\alpha^{\Sigma,\circledcirc}$. In the second case, $\exists N \in \$(\chi) : \ob \m{W} , \$ , \m{V} , \chi , N \cb \models \beta^{\Omega}$, which may be read as $\beta^{\Omega,\circledcirc}$. Since we have one or the other case, we have $\alpha^{\Sigma,\circledcirc} \vee \beta^{\Omega,\circledcirc}$. So, $(\alpha^{\Sigma} \vee \beta^{\Omega})^{\circledcirc} \equiv \alpha^{\Sigma,\circledcirc} \vee \beta^{\Omega,\circledcirc}$;
\item[-] $\alpha^{\Sigma,N} \vee \beta^{\Omega,N}$ may be read as $\sigma(N) \in \$(\chi)$ and $\ob \m{W} , \$ , \m{V} , \chi , \sigma(N) \cb \models \alpha^{\Sigma}$ or $\sigma(N) \in \$(\chi)$ and $\ob \m{W} , \$ , \m{V} , \chi , \sigma(N) \cb \models \beta^{\Omega}$. Then we have $\sigma(N) \in \$(\chi)$ and ($\ob \m{W} , \$ , \m{V} , \chi , \sigma(N) \cb \models \alpha^{\Sigma}$ or $\ob \m{W} , \$ , \m{V} , \chi , \sigma(N) \cb \models \beta^{\Omega}$), which is, by definition, $\ob \m{W} , \$ , \m{V} , \chi , \sigma(N) \cb \models \alpha^{\Sigma} \vee \beta^{\Omega}$.  Then $\alpha^{\Sigma,N} \vee \beta^{\Omega,N}$ implies $(\alpha^{\Sigma} \vee \beta^{\Omega})^{N}$. On the other hand, $(\alpha^{\Sigma} \vee \beta^{\Omega})^{N}$ may be read as $\sigma(N) \in \$(\chi)$ and $\ob \m{W} , \$ , \m{V} , \chi , \sigma(N) \cb \models \alpha^{\Sigma} \vee \beta^{\Omega}$, which means, by definition, $\sigma(N) \in \$(\chi)$ and $(\; \ob \m{W} , \$ , \m{V} , \chi , \sigma(N) \cb \models \alpha^{\Sigma} \mbox{ or } \ob \m{W} , \$ , \m{V} , \chi , \sigma(N) \cb \models \beta^{\Omega} \;)$. So, we have ($\sigma(N) \in \$(\chi)$ and $\ob \m{W} , \$ , \m{V} , \chi , \sigma(N) \cb \models \alpha^{\Sigma}$) or ($\sigma(N) \in \$(\chi)$ and $\ob \m{W} , \$ , \m{V} , \chi , \sigma(N) \cb \models \beta^{\Omega}$), which may be read as $\alpha^{\Sigma,N} \vee \beta^{\Omega,N}$. So, $(\alpha^{\Sigma} \vee \beta^{\Omega})^{N} \equiv \alpha^{\Sigma,N} \vee \beta^{\Omega,N}$;
\end{itemize}

\noindent (base) If $\Delta$ contains two labels, it may be $\{\circledcirc,\bullet\}$, $\{N,\bullet\}$, $\{\circledcirc,u\}$ or $\{N,u\}$. But we just need to look at the distributivity for the $\bullet$ label and for world variables, because we have already seen the distributivity of the $\vee$ connective for the label $\circledcirc$ and for any neighbourhood variable.
\begin{itemize}
\item[-] $\alpha^{\Sigma,\bullet,\circledcirc} \vee \beta^{\Omega,\bullet,\circledcirc}$ may be read as $\exists N \in \$(\chi) : \ob \m{W} , \$ , \m{V} , \chi , N \cb \models \alpha^{\Sigma,\bullet}$ or $\exists M \in \$(\chi) : \ob \m{W} , \$ , \m{V} , \chi , M \cb \models \beta^{\Omega,\bullet}$. But $\ob \m{W} , \$ , \m{V} , \chi , N \cb \models \alpha^{\Sigma,\bullet}$ implies, by definition, $\exists w \in N : \ob \m{W} , \$ , \m{V} , w \cb \models \alpha^{\Sigma}$, which implies $\exists w \in N : \ob \m{W} , \$ , \m{V} , w \cb \models \alpha^{\Sigma} \vee \beta^{\Omega}$. So, we have $\exists N \in \$(\chi) : \exists w \in N : \ob \m{W} , \$ , \m{V} , w \cb \models \alpha^{\Sigma} \vee \beta^{\Omega}$ or $\exists M \in \$(\chi) : \exists z \in N : \ob \m{W} , \$ , \m{V} , z \cb \models \alpha^{\Sigma} \vee \beta^{\Omega}$. Since every variable is bound, we have $\exists N \in \$(\chi) : \exists w \in N : \ob \m{W} , \$ , \m{V} , w \cb \models \alpha^{\Sigma} \vee \beta^{\Omega}$, which is, by definition, equivalent to $\exists N \in \$(\chi) : \ob \m{W} , \$ , \m{V} , \chi , N \cb \models (\alpha^{\Sigma} \vee \beta^{\Omega})^{\bullet}$, which is equivalent, by definition, to $(\alpha^{\Sigma} \vee \beta^{\Omega})^{\bullet,\circledcirc}$. On the other hand, $(\alpha^{\Sigma} \vee \beta^{\Omega})^{\bullet,\circledcirc}$ may be read as $\exists N \in \$(\chi) : \exists w \in N : \ob \m{W} , \$ , \m{V} , w \cb \models \alpha^{\Sigma} \vee \beta^{\Omega}$, which is, by definition, $\exists N \in \$(\chi) : \exists w \in N : \ob \m{W} , \$ , \m{V} , w \cb \models \alpha^{\Sigma} \mbox{ or } \ob \m{W} , \$ , \m{V} , w \cb \models \beta^{\Omega}$, which implies $\exists N \in \$(\chi) : \exists w \in N : \ob \m{W} , \$ , \m{V} , w \cb \models \alpha^{\Sigma} \mbox{ or } \exists z \in N : \ob \m{W} , \$ , \m{V} , z \cb \models \beta^{\Omega}$, which implies $\exists N \in \$(\chi) : \exists w \in N : \ob \m{W} , \$ , \m{V} , w \cb \models \alpha^{\Sigma} \mbox{ or } \exists M \in \$(\chi) : \exists z \in M : \ob \m{W} , \$ , \m{V} , z \cb \models \beta^{\Omega}$, which may be represented with labels as $\alpha^{\Sigma,\bullet,\circledcirc} \vee \beta^{\Omega,\bullet,\circledcirc}$. So, $\alpha^{\Sigma,\bullet,\circledcirc} \vee \beta^{\Omega,\bullet,\circledcirc} \equiv (\alpha^{\Sigma} \vee \beta^{\Omega})^{\bullet,\circledcirc}$;
\item[-] The proofs of $\alpha^{\Sigma,\bullet,N} \vee \beta^{\Omega,\bullet,N} \equiv (\alpha^{\Sigma} \vee \beta^{\Omega})^{\bullet,N}$, $\alpha^{\Sigma,u,\circledcirc} \vee \beta^{\Omega,u,\circledcirc} \equiv (\alpha^{\Sigma} \vee \beta^{\Omega})^{u,\circledcirc}$ and $\alpha^{\Sigma,u,N} \vee \beta^{\Omega,u,N} \equiv (\alpha^{\Sigma} \vee \beta^{\Omega})^{u,N}$ are analogous.
\end{itemize}

\noindent (induction) If $\alpha^{\Sigma} \vee \beta^{\Omega} \in \bs{F}_{w}$, $\Delta = \{\Delta',\phi\}$ and $s(\Delta) = n + 1$, then $\oc\Delta \in \bs{L}_{n}$ and $\alpha^{\Sigma,\overline{\Delta}} \vee \beta^{\Omega,\overline{\Delta}}$ may be written as $\alpha^{\Sigma,\phi,\overline{\Delta'}} \vee \beta^{\Omega,\phi,\overline{\Delta'}}$, where $s(\Delta') = n$. Then, by the induction hypothesis, $\alpha^{\Sigma,\phi,\overline{\Delta'}} \vee \beta^{\Omega,\phi,\overline{\Delta'}} = (\alpha^{\Sigma,\phi} \vee \beta^{\Omega,\phi})^{\overline{\Delta'}}$. From the base assertions, $(\alpha^{\Sigma,\phi} \vee \beta^{\Omega,\phi})^{\overline{\Delta'}} = ((\alpha^{\Sigma} \vee \beta^{\Omega})^{\phi})^{\overline{\Delta'}} = (\alpha^{\Sigma} \vee \beta^{\Omega})^{\phi,\overline{\Delta'}} = (\alpha^{\Sigma} \vee \beta^{\Omega})^{\overline{\Delta}}$;\\
\noindent (induction) If $\alpha^{\Sigma} \vee \beta^{\Omega} \in \bs{F}_{n}$ and $s(\Delta) = n + 2$, then $\oc\Delta \in \bs{L}_{w}$ and $\alpha^{\Sigma,\overline{\Delta}} \vee \beta^{\Omega,\overline{\Delta}}$ may be written as $\alpha^{\Sigma,\phi,\Theta,\overline{\Delta'}} \vee \beta^{\Omega,\phi,\Theta,\overline{\Delta'}}$, where $s(\Delta') = n$. Then, by induction hypothesis, $\alpha^{\Sigma,\phi,\Theta,\overline{\Delta'}} \vee \beta^{\Omega,\phi,\Theta,\overline{\Delta'}} = (\alpha^{\Sigma,\phi,\Theta} \vee \beta^{\Omega,\phi,\Theta})^{\overline{\Delta'}}$. By base, $(\alpha^{\Sigma,\phi,\Theta} \vee \beta^{\Omega,\phi,\Theta})^{\overline{\Delta'}} = ((\alpha^{\Sigma} \vee \beta^{\Omega})^{\phi,\Theta})^{\overline{\Delta'}} = (\alpha^{\Sigma} \vee \beta^{\Omega})^{\phi,\Theta,\overline{\Delta'}} = (\alpha^{\Sigma} \vee \beta^{\Omega})^{\overline{\Delta}}$.\end{proof}

\begin{lemma} \label{lemmaWedge}
Given $\Delta$ without existential quantifiers, if $\alpha^{\Sigma,\overline{\Delta}} \wedge \beta^{\Omega,\overline{\Delta}}$ is wff, then $\alpha^{\Sigma,\overline{\Delta}} \wedge \beta^{\Omega,\overline{\Delta}} \equiv (\alpha^{\Sigma} \wedge \beta^{\Omega})^{\overline{\Delta}}$.
\end{lemma}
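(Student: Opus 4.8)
The plan is to mirror the proof of Lemma~\ref{lemmaVee} essentially line for line, interchanging the connectives $\wedge$ and $\vee$ and swapping the roles of the universal and existential quantifiers throughout. As there, I would argue by induction on $s(\Delta)$, the empty-context case being immediate since the claimed equivalence is then literally $\alpha^{\Sigma} \wedge \beta^{\Omega} \equiv \alpha^{\Sigma} \wedge \beta^{\Omega}$.

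For the base case $s(\Delta)=1$, Lemma~\ref{countingLemma} forces $\Delta \in \bs{C}_{n}$, so its single label is a neighbourhood label; since $\Delta$ carries no existential quantifier, this label is either $\circledast$ or a neighbourhood variable $N$. In the $\circledast$ case, $\alpha^{\Sigma,\circledast} \wedge \beta^{\Omega,\circledast}$ unfolds, by the satisfaction clauses, to $(\forall N \in \$(\chi) : \ob \m{W},\$,\m{V},\chi,N \cb \models \alpha^{\Sigma})$ and $(\forall M \in \$(\chi) : \ob \m{W},\$,\m{V},\chi,M \cb \models \beta^{\Omega})$; because a universal quantifier distributes over a conjunction, for each fixed neighbourhood both conjuncts hold simultaneously, and this is exactly $\forall N \in \$(\chi) : \ob \m{W},\$,\m{V},\chi,N \cb \models \alpha^{\Sigma} \wedge \beta^{\Omega}$, i.e. $(\alpha^{\Sigma} \wedge \beta^{\Omega})^{\circledast}$. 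The converse direction reverses these equivalences. For the variable label $N$, the clause for $\alpha^{\Sigma,N}$ contributes the side condition $\sigma(N) \in \$(\chi)$ together with satisfaction at $\sigma(N)$, and the equivalence reduces to the propositional distribution of $\wedge$ over that conjunction, exactly as in Lemma~\ref{lemmaVee}.

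For $s(\Delta)=2$ the top label lies in $\bs{L}_{w}$ and, lacking existential quantifiers, is either $\ast$ or a world variable $u$; since the neighbourhood-label cases are already settled, I would only verify distribution for the $\ast$ label (again using that $\forall w \in N$ distributes over $\wedge$) and for $u$ (again a single-witness side condition $\sigma(u) \in N$). The induction step is then verbatim that of Lemma~\ref{lemmaVee}: writing $\Delta = \{\Delta',\phi\}$ when $s(\Delta)$ is odd, or peeling a world-label/neighbourhood-label pair when it is even, one applies the induction hypothesis to $\Delta'$ and then the base equivalences to the stripped label(s), using that attributes do not nest so that $((\alpha^{\Sigma}\wedge\beta^{\Omega})^{\phi})^{\overline{\Delta'}} = (\alpha^{\Sigma}\wedge\beta^{\Omega})^{\overline{\Delta}}$.

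The only real content, and hence the step I expect to be the crux, is the distributive law for the universal quantifiers in the $\circledast$ and $\ast$ cases: whereas $\exists$ fails to distribute over $\wedge$, the universal quantifier does, and this is precisely why the hypothesis excludes the existential labels $\circledcirc$ and $\bullet$ rather than the universal ones. Everything else is bookkeeping dual to the disjunction lemma, so I would keep those verifications terse and refer back to Lemma~\ref{lemmaVee} for the analogous manipulations.
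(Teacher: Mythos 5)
Your proposal is correct and follows essentially the same route as the paper's own proof: induction on $s(\Delta)$, base cases for a single neighbourhood label ($\circledast$ or $N$) and for a pair with a world label ($\ast$ or $u$) handled by unfolding the satisfaction clauses and using that universal quantification distributes over conjunction, then the same label-peeling induction step as in Lemma~\ref{lemmaVee}. The paper's proof is indeed just the $\wedge$/$\forall$ dual of the disjunction lemma, exactly as you describe.
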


\begin{proof} We proceed by induction on the size of $\Delta$:\\
\noindent If $\Delta$ is empty, then equivalence is true;\\
\noindent (base) If $\Delta$ contains only one label, it must be a neighbourhood label:
\begin{itemize}
\item[-] $\alpha^{\Sigma,\circledast} \wedge \beta^{\Omega,\circledast}$ may be read as $\forall N \in \$(\chi) : \ob \m{W} , \$ , \m{V} , \chi , N \cb \models \alpha^{\Sigma}$ and $\forall M \in \$(\chi) : \ob \m{W} , \$ , \m{V} , \chi , M \cb \models \beta^{\Omega}$. But then, we may conclude that, for every neighbourhood $L \in \$(\chi)$, $\ob \m{W} , \$ , \m{V} , \chi , L \cb \models \alpha^{\Sigma}$ and $\ob \m{W} , \$ , \m{V} , \chi , L \cb \models \beta^{\Omega}$, which can be represented with labels, since $L$ is arbitrary, as $(\alpha^{\Sigma} \wedge \beta^{\Omega})^{\circledast}$. On the other hand, $(\alpha^{\Sigma} \wedge \beta^{\Omega})^{\circledast}$ can be read as $\forall N \in \$(\chi) : \ob \m{W} , \$ , \m{V} , \chi , N \cb \models \alpha^{\Sigma} \wedge \beta^{\Omega}$, which is equivalent, by definition, to $\forall N \in \$(\chi) : \ob \m{W} , \$ , \m{V} , \chi , N \cb \models \alpha^{\Sigma} \mbox{ and } \ob \m{W} , \$ , \m{V} , \chi , N \cb \models \beta^{\Omega}$. So we have $\forall N \in \$(\chi) : \ob \m{W} , \$ , \m{V} , \chi , N \cb \models \alpha^{\Sigma}$ and $\forall N \in \$(\chi) : \ob \m{W} , \$ , \m{V} , \chi , N \cb \models \beta^{\Omega}$, that is equivalent to $\alpha^{\Sigma,\circledast} \wedge \beta^{\Omega,\circledast}$;

\item[-] $\alpha^{\Sigma,N} \wedge \beta^{\Omega,N}$ may be read as ($\sigma(N) \in \$(\chi)$ and $\ob \m{W} , \$ , \m{V} , \chi , \sigma(N) \cb \models \alpha^{\Sigma}$) and ($\sigma(N) \in \$(\chi)$ and $\ob \m{W} , \$ , \m{V} , \chi , \sigma(N) \cb \models \beta^{\Omega}$). But then, we may conclude, by definition, that $\ob \m{W} , \$ , \m{V} , \chi , \sigma(N) \cb \models \alpha^{\Sigma}$ and $\ob \m{W} , \$ , \m{V} , \chi , \sigma(N) \cb \models \beta^{\Omega}$, which can be represented with labels as $(\alpha^{\Sigma} \wedge \beta^{\Omega})^{N}$. On the other hand, $(\alpha^{\Sigma} \wedge \beta^{\Omega})^{N}$ can be read as $\sigma(N) \in \$(\chi)$ and $\ob \m{W} , \$ , \m{V} , \chi , \sigma(N) \cb \models \alpha^{\Sigma} \wedge \beta^{\Omega}$, which is equivalent, by definition, to $\sigma(N) \in \$(\chi)$ and $\ob \m{W} , \$ , \m{V} , \chi , \sigma(N) \cb \models \alpha^{\Sigma} \mbox{ and } \ob \m{W} , \$ , \m{V} , \chi , \sigma(N) \cb \models \beta^{\Omega}$. So we have ($\sigma(N) \in \$(\chi)$ and $\ob \m{W} , \$ , \m{V} , \chi , \sigma(N) \cb \models \alpha^{\Sigma}$ and ($\sigma(N) \in \$(\chi)$ and $\ob \m{W} , \$ , \m{V} , \chi , \sigma(N) \cb \models \beta^{\Omega}$), that is equivalent to $\alpha^{\Sigma,N} \wedge \beta^{\Omega,N}$;
\end{itemize}

\noindent (base) If $\Delta$ contains two labels, it may be $\{\circledast,\ast\}$, $\{N,\ast\}$, $\{\circledast,u\}$ or $\{N,u\}$. But we just need to look at the distributivity for the $\ast$ label and for world variables, because we have already seen the distributivity of the $\wedge$ connective for the label $\circledast$ and for any neighbourhood variable.
\begin{itemize}
\item[-] $\alpha^{\Sigma,\ast,\circledast} \wedge \beta^{\Omega,\ast,\circledast}$ may be read as $\forall N \in \$(\chi) : \ob \m{W} , \$ , \m{V} , \chi , N \cb \models \alpha^{\Sigma,\ast}$ and $\forall M \in \$(\chi) : \ob \m{W} , \$ , \m{V} , \chi , M \cb \models \beta^{\Omega,\ast}$. Then we have, by definition, $\forall w \in N : \ob \m{W} , \$ , \m{V} , w \cb \models \alpha^{\Sigma}$ and $\forall z \in M : \ob \m{W} , \$ , \m{V} , z \cb \models \beta^{\Omega}$. So, for every world $x$ of every neighbourhood $L$, $\ob \m{W} , \$ , \m{V} , x \cb \models \alpha^{\Sigma}$ and $\ob \m{W} , \$ , \m{V} , x \cb \models \beta^{\Omega}$. Then we may conclude, by definition, that $\ob \m{W} , \$ , \m{V} , x \cb \models \alpha^{\Sigma} \wedge \beta^{\Omega}$ and represent it with labels as $(\alpha^{\Sigma} \wedge \beta^{\Omega})^{\ast,\circledast}$ because $x$ and $L$ are arbitrary. On the other hand, $(\alpha^{\Sigma} \wedge \beta^{\Omega})^{\ast,\circledast}$ may be read as $\forall N \in \$(\chi) : \forall w \in N : \alpha^{\Sigma} \wedge \beta^{\Omega}$, which implies, by definition, $\forall N \in \$(\chi) : \forall w \in N : \alpha^{\Sigma}$ and also $\forall N \in \$(\chi) : \forall w \in N : \beta^{\Omega}$. So, we have $\ob \m{W} , \$ , \m{V} , \chi \cb \models \alpha^{\Sigma,\ast,\circledast}$ and $\ob \m{W} , \$ , \m{V} , \chi \cb \models \beta^{\Omega,\ast,\circledast}$. So, we may conclude, by definition, that $\alpha^{\Sigma,\ast,\circledast} \wedge \beta^{\Omega,\ast,\circledast}$;
\item[-] The proofs of $\alpha^{\Sigma,\ast,N} \wedge \beta^{\Omega,\ast,N} \equiv (\alpha^{\Sigma} \wedge \beta^{\Omega})^{\ast,N}$, $\alpha^{\Sigma,u,\circledast} \wedge \beta^{\Omega,u,\circledast} \equiv (\alpha^{\Sigma} \wedge \beta^{\Omega})^{u,\circledast}$ and$\alpha^{\Sigma,u,N} \wedge \beta^{\Omega,u,N} \equiv (\alpha^{\Sigma} \wedge \beta^{\Omega})^{u,N}$ are analogous.
\end{itemize}

\noindent (induction) If $\alpha^{\Sigma} \wedge \beta^{\Omega} \in \bs{F}_{w}$ and $s(\Delta) = n + 1$, then $\oc\Delta \in \bs{L}_{n}$ and $\alpha^{\Sigma,\overline{\Delta}} \wedge \beta^{\Omega,\overline{\Delta}}$ may be written as $\alpha^{\Sigma,\phi,\overline{\Delta'}} \wedge \beta^{\Omega,\phi,\overline{\Delta'}}$, where $s(\Delta') = n$. Then, by the induction hypothesis, $\alpha^{\Sigma,\phi,\overline{\Delta'}} \wedge \beta^{\Omega,\phi,\overline{\Delta'}} = (\alpha^{\Sigma,\phi} \wedge \beta^{\Omega,\phi})^{\overline{\Delta'}}$. From the base assertions, $(\alpha^{\Sigma,\phi} \wedge \beta^{\Omega,\phi})^{\overline{\Delta'}} = ((\alpha^{\Sigma} \wedge \beta^{\Omega})^{\phi})^{\overline{\Delta'}} = (\alpha^{\Sigma} \wedge \beta^{\Omega})^{\phi,\overline{\Delta'}} = (\alpha^{\Sigma} \wedge \beta^{\Omega})^{\overline{\Delta}}$;\\
\noindent (induction) If $\alpha^{\Sigma} \wedge \beta^{\Omega} \in \bs{F}_{n}$ and $s(\Delta) = n + 2$, then $\oc\Delta \in \bs{L}_{w}$ and $\alpha^{\Sigma,\overline{\Delta}} \wedge \beta^{\Omega,\overline{\Delta}}$ may be written as $\alpha^{\Sigma,\phi,\Theta,\overline{\Delta'}} \wedge \beta^{\Omega,\phi,\Theta,\overline{\Delta'}}$, where $s(\Delta') = n$. Then, by induction hypothesis, $\alpha^{\Sigma,\phi,\Theta,\overline{\Delta'}} \wedge \beta^{\Omega,\phi,\Theta,\overline{\Delta'}} = (\alpha^{\Sigma,\phi,\Theta} \wedge \beta^{\Omega,\phi,\Theta})^{\overline{\Delta'}}$. By base, $(\alpha^{\Sigma,\phi,\Theta} \wedge \beta^{\Omega,\phi,\Theta})^{\overline{\Delta'}} = ((\alpha^{\Sigma} \wedge \beta^{\Omega})^{\phi,\Theta})^{\overline{\Delta'}} = (\alpha^{\Sigma} \wedge \beta^{\Omega})^{\phi,\Theta,\overline{\Delta'}} = (\alpha^{\Sigma} \wedge \beta^{\Omega})^{\overline{\Delta}}$.\end{proof}

\begin{lemma} \label{lemmaRa}
Given $\Delta$ without existential quantifiers, if $(\alpha^{\Sigma} \ra \beta^{\Omega})^{\overline{\Delta}}$ is wff, then it implies $\alpha^{\Sigma,\overline{\Delta}} \ra \beta^{\Omega,\overline{\Delta}}$.
\end{lemma}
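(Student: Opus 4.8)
The plan is to mirror the inductions used for Lemmas \ref{lemmaVee} and \ref{lemmaWedge}, proceeding by induction on $s(\Delta)$ and reading everything off the satisfaction clauses of Definition \ref{satisfactionDefinition}. The decisive new feature is that implication, unlike $\wedge$ and $\vee$, yields only one direction: for a universal label the valid principle is $\forall x\,(P_{x} \ra Q_{x}) \Ra (\forall x\,P_{x} \ra \forall x\,Q_{x})$, whose converse fails. This asymmetry is exactly what downgrades the equivalences of the previous two lemmas to the single implication claimed here.

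First I would settle one-label distributivity. Since $\Delta$ carries no existential quantifier, every label occurring in it is a universal quantifier ($\circledast$ or $\ast$) or a variable ($N$ or $u$). For each such label $\psi$ I verify, from the matching clause of Definition \ref{satisfactionDefinition}, that $(\alpha^{\Sigma} \ra \beta^{\Omega})^{\psi} \models \alpha^{\Sigma,\psi} \ra \beta^{\Omega,\psi}$: for $\circledast$ and $\ast$ this is the $\forall$-principle displayed above, and for the variable labels $N$ and $u$ it is immediate, since the clause merely shifts evaluation to $\sigma(N)$ (resp. $\sigma(u)$), preserving the implication verbatim. The sort alternation (a world label applied to an $\bs{F}_{n}$ formula, a neighbourhood label to an $\bs{F}_{w}$ formula) is handled precisely as in the one- and two-label base cases of Lemmas \ref{lemmaVee} and \ref{lemmaWedge}. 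This is where the hypothesis is genuinely used: for the existential labels $\circledcirc$ and $\bullet$ even this forward implication breaks, because the neighbourhood (world) that witnesses $\alpha^{\Sigma} \ra \beta^{\Omega}$ only vacuously, i.e.\ where $\alpha^{\Sigma}$ fails, need not be the one witnessing the antecedent $\alpha^{\Sigma,\circledcirc}$ (resp. $\alpha^{\Sigma,\bullet}$), so the consequent may fail altogether.

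Next I would record the \emph{monotonicity of labelling}: if $\gamma \models \delta$ then $\gamma^{\psi} \models \delta^{\psi}$ for every label $\psi$, which is immediate from Definition \ref{satisfactionDefinition} because each clause interprets its label by a body-monotone operation ($\forall$, $\exists$, or a shift of the evaluation point). Whereas the equivalences of Lemmas \ref{lemmaVee} and \ref{lemmaWedge} compose for free, here the single surviving implication must be transported through the remaining context, and monotonicity is exactly what licenses this. The induction then closes: writing $\overline{\Delta} = \psi,\overline{\Delta'}$ with $\psi$ the top label of $\Delta$ and $\Delta'$ its remainder (still free of existential quantifiers), we have $(\alpha^{\Sigma} \ra \beta^{\Omega})^{\overline{\Delta}} = ((\alpha^{\Sigma} \ra \beta^{\Omega})^{\psi})^{\overline{\Delta'}}$ and
\[
(\alpha^{\Sigma} \ra \beta^{\Omega})^{\overline{\Delta}} \models (\alpha^{\Sigma,\psi} \ra \beta^{\Omega,\psi})^{\overline{\Delta'}} \models \alpha^{\Sigma,\overline{\Delta}} \ra \beta^{\Omega,\overline{\Delta}},
\]
where the first $\models$ is one-label distributivity pushed through $\overline{\Delta'}$ by monotonicity, and the second is the induction hypothesis applied to the formulas $\alpha^{\Sigma,\psi}$ and $\beta^{\Omega,\psi}$ with context $\Delta'$.

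The main obstacle is conceptual rather than computational: keeping the single direction intact while peeling labels. The base-case check for the universal labels, recognising that $\forall$ commutes with $\ra$ only in the direction $\forall(\alpha \ra \beta) \Ra (\forall\alpha \ra \forall\beta)$, is where the whole asymmetry originates; the monotonicity step is what prevents that one surviving direction from being lost as the context is reassembled. Without the ban on existential labels the base case itself would already be false, so no amount of care in the inductive step could rescue the statement.
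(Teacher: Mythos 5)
Your proof is correct and follows essentially the same route as the paper's: induction on $s(\Delta)$, one-label distributivity read directly off the satisfaction clauses (the $\forall$-direction for $\circledast$ and $\ast$, verbatim transfer for the variable labels $N$ and $u$), then composition of the surviving implication through the remaining context. The only substantive difference is that you state explicitly the monotonicity-of-labelling step ($\gamma \models \delta$ implies $\gamma^{\psi} \models \delta^{\psi}$) needed to push the one-label implication through $\overline{\Delta'}$ — a step the paper leaves implicit, its inductive step even writing equalities inherited from the templates of Lemmas \ref{lemmaVee} and \ref{lemmaWedge} where only one-way implications are actually justified — so your version is the more careful rendering of the same argument.
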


\begin{proof} We proceed by induction on the size of $\Delta$:\\
\noindent If $\Delta$ is empty, then the implication is true;\\
\noindent (base) If $\Delta$ contains only one label, it must be a neighbourhood label:
\begin{itemize}
\item[-] $(\alpha^{\Sigma} \ra \beta^{\Omega})^{\circledast}$ means, by definition, that $\forall N \in \$(\chi) : \ob \m{W} , \$ , \m{V} , \chi , N \cb \models \alpha^{\Sigma} \ra \beta^{\Omega}$. Then we know that $\forall N \in \$(\chi) : \ob \m{W} , \$ , \m{V} , \chi , N \cb \not\models \alpha^{\Sigma} \mbox{ or } \ob \m{W} , \$ , \m{V} , \chi , N \cb \models \beta^{\Omega}$. So, if we have $\forall N \in \$(\chi) : \ob \m{W} , \$ , \m{V} , \chi , N \cb \models \alpha^{\Sigma} $, we must have $\forall N \in \$(\chi) : \ob \m{W} , \$ , \m{V} , \chi , N \cb \models \beta^{\Omega}$. In other words, $\alpha^{\Sigma,\circledast} \ra \beta^{\Omega,\circledast}$;
\item[-] $(\alpha^{\Sigma} \ra \beta^{\Omega})^{N}$ means, by definition, that $\sigma(N) \in \$(\chi)$ and $\ob \m{W} , \$ , \m{V} , \chi , \sigma(N) \cb \models \alpha^{\Sigma} \ra \beta^{\Omega}$. Then we know that $\sigma(N) \in \$(\chi)$ and ($\ob \m{W} , \$ , \m{V} , \chi , \sigma(N) \cb \not\models \alpha^{\Sigma} \mbox{ or } \ob \m{W} , \$ , \m{V} , \chi , \sigma(N) \cb \models \beta^{\Omega}$). So, if we have $\ob \m{W} , \$ , \m{V} , \chi , \sigma(N) \cb \models \alpha^{\Sigma} $, we must have $\ob \m{W} , \$ , \m{V} , \chi , \sigma(N) \cb \models \beta^{\Omega}$. In other words, $\alpha^{\Sigma,N} \ra \beta^{\Omega,N}$.
\end{itemize}

\noindent (base) If $\Delta$ contains two labels, it may be $\{\circledast,\ast\}$, $\{N,\ast\}$, $\{\circledast,u\}$ or $\{N,u\}$. But we just need to look at the distributivity for the $\ast$ label and for world variables, because we have already seen the distributivity of the $\ra$ connective for the label $\circledast$ and for any neighbourhood variable.
\begin{itemize}
\item[-] $(\alpha^{\Sigma} \ra \beta^{\Omega})^{\ast,\circledast}$ means, by definition, that $\forall N \in \$(\chi) : \forall w \in N : \ob \m{W} , \$ , \m{V} , w \cb \models \alpha^{\Sigma} \ra \beta^{\Omega}$. Then we know that $\forall N \in \$(\chi) : \forall w \in N : \ob \m{W} , \$ , \m{V} , w \cb \not\models \alpha^{\Sigma} \mbox{ or } \ob \m{W} , \$ , \m{V} , w \cb \models \beta^{\Omega}$. So, if we have $\forall N \in \$(\chi) : \forall w \in N : \ob \m{W} , \$ , \m{V} , w \cb \models \alpha^{\Sigma} $, we must have $\forall N \in \$(\chi) : \forall w \in N : \ob \m{W} , \$ , \m{V} , w \cb \models \beta^{\Omega}$. In other words, $\alpha^{\Sigma,\ast,\circledast} \ra \beta^{\Omega,\ast,\circledast}$;
\item[-] The proofs of $(\alpha^{\Sigma} \ra \beta^{\Omega})^{\ast,N}$, $(\alpha^{\Sigma} \ra \beta^{\Omega})^{u,\circledast}$ and $(\alpha^{\Sigma} \ra \beta^{\Omega})^{u,N}$ are analogous.
\end{itemize}

\noindent (induction) If $\alpha^{\Sigma} \ra \beta^{\Omega} \in \bs{F}_{w}$ and $s(\Delta) = n + 1$, then $\oc\Delta \in \bs{L}_{n}$ and $\alpha^{\Sigma,\overline{\Delta}} \ra \beta^{\Omega,\overline{\Delta}}$ may be written as $\alpha^{\Sigma,\phi,\overline{\Delta'}} \ra \beta^{\Omega,\phi,\overline{\Delta'}}$, where $s(\Delta') = n$. Then, by the induction hypothesis, $\alpha^{\Sigma,\phi,\overline{\Delta'}} \ra \beta^{\Omega,\phi,\overline{\Delta'}} = (\alpha^{\Sigma,\phi} \ra \beta^{\Omega,\phi})^{\overline{\Delta'}}$. From the base assertions, $(\alpha^{\Sigma,\phi} \ra \beta^{\Omega,\phi})^{\overline{\Delta'}} = ((\alpha^{\Sigma} \ra \beta^{\Omega})^{\phi})^{\overline{\Delta'}} = (\alpha^{\Sigma} \ra \beta^{\Omega})^{\phi,\overline{\Delta'}} = (\alpha^{\Sigma} \ra \beta^{\Omega})^{\overline{\Delta}}$;\\
\noindent (induction) If $\alpha^{\Sigma} \ra \beta^{\Omega} \in \bs{F}_{n}$ and $s(\Delta) = n + 2$, then $\oc\Delta \in \bs{L}_{w}$ and $\alpha^{\Sigma,\overline{\Delta}} \ra \beta^{\Omega,\overline{\Delta}}$ may be written as $\alpha^{\Sigma,\phi,\Theta,\overline{\Delta'}} \ra \beta^{\Omega,\phi,\Theta,\overline{\Delta'}}$, where $s(\Delta') = n$. Then, by the induction hypothesis, $\alpha^{\Sigma,\phi,\Theta,\overline{\Delta'}} \ra \beta^{\Omega,\phi,\Theta,\overline{\Delta'}} = (\alpha^{\Sigma,\phi,\Theta} \ra \beta^{\Omega,\phi,\Theta})^{\overline{\Delta'}}$. By the base, $(\alpha^{\Sigma,\phi,\Theta} \ra \beta^{\Omega,\phi,\Theta})^{\overline{\Delta'}} = ((\alpha^{\Sigma} \ra \beta^{\Omega})^{\phi,\Theta})^{\overline{\Delta'}} = (\alpha^{\Sigma} \ra \beta^{\Omega})^{\phi,\Theta,\overline{\Delta'}} = (\alpha^{\Sigma} \ra \beta^{\Omega})^{\overline{\Delta}}$.\end{proof}

Now we prove one of the main lemmas, in which, from the resolution of the hypothesis, follow the resolution of the conclusion. We express this property by saying that PUC-ND preserves resolution. 

\begin{lemma} \label{resolution}
PUC-ND without the rules $5, 7, 11, 18, 20, 27, 28$ and $29$ preserves resolution.
\end{lemma}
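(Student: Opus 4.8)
The plan is to read Lemma~\ref{resolution} as a strictly local soundness statement and to verify it one rule at a time: for each of the twenty-two rules surviving the deletion of $5,7,11,18,20,27,28,29$, I assume that the reference model $\m{M}$ resolves every premise in that premise's own context and then derive that $\m{M}$ resolves the conclusion in the conclusion's context. The deleted rules are precisely those that discharge a hypothesis or branch on a case, so they cannot be phrased as ``all premises resolve $\Rightarrow$ conclusion resolves'' for a single $\m{M}$, and are instead deferred to the inductive step of the later soundness theorem. In every surviving case the routine is identical: unfold Definition~\ref{resolutionDf} to pass between $\m{M}\models^{\Delta}\alpha^{\Sigma}$ and $\m{M}\models\alpha^{\Sigma,\overline{\Delta}}$, then reason inside the fully reversed context using the clauses of Definition~\ref{satisfactionDefinition}. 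Context-preserving replacements will be justified through the consequence Lemmas~\ref{lemmaDirectConsequence} and~\ref{lemmaConsequence}, according as the replacement is a plain logical or a referential one at depth $s(\Delta)$, while replacements that cross a connective rely on the distribution Lemmas~\ref{lemmaWedge},~\ref{lemmaVee} and~\ref{lemmaRa}.

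First I would dispatch the rules that fix the context and replace a formula by a semantic consequence. The $\wedge$-eliminations $1,2$ and the $\vee$-introductions $4,6$ are immediate from Lemma~\ref{lemmaDirectConsequence} applied to the trivial consequences $\alpha^{\Sigma}\wedge\beta^{\Omega}\models\alpha^{\Sigma}$ and $\alpha^{\Sigma}\models\alpha^{\Sigma}\vee\beta^{\Omega}$; the two-premise rules $\wedge$-introduction $3$ and modus ponens $12$ instead invoke Lemmas~\ref{lemmaWedge} and~\ref{lemmaRa} to push $\overline{\Delta}$ through the connective (legitimate since in each case $\Delta$ carries no existential quantifier) and then close by elementary truth-functional reasoning at $\m{M}$. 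The $\bot$- and $\top$-rules $8,9,30$ follow from ``no structure satisfies $\bot$'' and ``every structure satisfies $\top$'': here the side-conditions forbidding $\circledast$ in rule $9$ and $\circledcirc$ in rule $30$, together with the standing non-emptiness of neighbourhoods, are exactly what rule out a vacuously satisfied $\bot^{\Sigma,\circledast,\dots}$ or an unsatisfiable $\top^{\Sigma,\circledcirc,\dots}$, so the implications hold (that of rule $9$ vacuously, since under its restriction the premise can never resolve). The hypothesis-injection rule $10$ and the context rules $13,14$ are essentially identities: after reversal the premise $\alpha^{\Sigma,\phi}$ in context $\Delta$ and the conclusion $\alpha^{\Sigma}$ in context $\Delta,\phi$ are the single formula $\alpha^{\Sigma,\phi,\overline{\Delta}}$.

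Next come the genuinely modal rules, handled by unfolding the quantifier and inclusion clauses of Definition~\ref{satisfactionDefinition}. The instantiations $16,17,19,22$ read off directly: a universal premise (labelled $\circledast$ or $\ast$) specialises to the witness $\sigma(N)$ or $\sigma(u)$, and a bare witness weakens to an existential conclusion (labelled $\circledcirc$ or $\bullet$); in the wild-card rule $22$ the auxiliary premise in context $\Delta,N$ is precisely what certifies that the variable $N$ denotes a genuine neighbourhood of $\$(\chi)$, so that the instance of the $\circledast$-premise is legitimate, and the side-conditions of $16$ and $19$ likewise guarantee that the chosen witness lies in the relevant neighbourhood. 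The propagation rules $23,24$ and the transitivity rules $25,26$ are where Lewis's nested structure is actually used: I unfold the inclusion atoms $\shneg N$ and $\shpos N$ through clauses $9$ and $10$ into the set-inclusions $\sigma(N)\subset\sigma(M)$ and $\sigma(M)\subset\sigma(N)$, and then appeal to monotonicity of ``$\exists w\in(\cdot)$'' and ``$\forall w\in(\cdot)$'' under $\subset$ together with transitivity of inclusion, the non-reference premise again only furnishing the membership in $\$(\chi)$ that the labelling requires.

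I expect the single real obstacle to be the two universal-introduction rules $15$ and $21$, where the conclusion trades a variable label ($u$, resp.\ $N$) for a universal one ($\ast$, resp.\ $\circledast$). Here resolution of the premise at the particular value $\sigma(u)$ (resp.\ $\sigma(N)$) must be upgraded to a statement about every world of the reference neighbourhood (resp.\ every neighbourhood of $\$(\chi)$), so the eigenvariable side-conditions do essential work: because $u$ (resp.\ $N$) occurs neither in an open hypothesis nor in the context of one, I may re-choose $\sigma$ on that variable without disturbing the satisfaction of the hypotheses, whence the premise holds at an arbitrary witness and the universal clauses $6$ and $11$ close the case. This forces resolution to be read as implicitly quantified over all assignments compatible with the open hypotheses, and it is the very same phenomenon that makes the hypothesis-discharging rules $5,7,11,18,20,27,28,29$ unamenable to this local argument --- hence their exclusion from the statement.
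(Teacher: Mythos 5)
Your proposal is correct and follows essentially the same route as the paper's proof: a rule-by-rule verification that unfolds Definition~\ref{resolutionDf} to pass between $\m{M} \models^{\Delta} \alpha^{\Sigma}$ and $\m{M} \models \alpha^{\Sigma,\overline{\Delta}}$, pushes the reversed context through connectives via Lemmas~\ref{lemmaWedge}, \ref{lemmaVee} and~\ref{lemmaRa}, routes label changes through the referential-consequence Lemma~\ref{lemmaConsequence}, and lets the eigenvariable side-conditions carry rules 15 and 21 exactly as the paper does. Your only departures are harmless streamlinings: treating rule 9 as vacuously sound under its no-$\circledast$ restriction (where the paper instead propagates the absurdity down the context by induction) and dispatching rules 1, 2, 4 and 6 through Lemma~\ref{lemmaDirectConsequence} applied to trivial consequences rather than through the distribution lemmas.
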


\begin{proof} Consider $\m{M} = \ob \m{W} , \$ , \m{V} , \chi \cb$.
\begin{enumerate}
\item If $\m{M} \models^{\Delta} \alpha^{\Sigma} \wedge \beta^{\Omega}$, then $\m{M} \models (\alpha^{\Sigma} \wedge \beta^{\Omega})^{\overline{\Delta}}$, and, by lemma \ref{lemmaWedge}, $\m{M} \models \alpha^{\Sigma,\overline{\Delta}} \wedge \beta^{\Omega,\overline{\Delta}}$, which means, by definition, $\m{M} \models \alpha^{\Sigma,\overline{\Delta}}$ and $\m{M} \models \beta^{\Omega,\overline{\Delta}}$. So, we have $\m{M} \models^{\Delta} \alpha^{\Sigma}$;
\item Follow the same argument for rule 1;
\item If $\m{M} \models^{\Delta} \alpha^{\Sigma}$ and $\m{M} \models^{\Delta} \beta^{\Omega}$, then $\m{M} \models \alpha^{\Sigma,\overline{\Delta}}$ and $\m{M} \models \beta^{\Omega,\overline{\Delta}}$, then, by definition, $\m{M} \models \alpha^{\Sigma,\overline{\Delta}} \wedge \beta^{\Omega,\overline{\Delta}}$, then, by lemma \ref{lemmaWedge}, $\m{M} \models (\alpha^{\Sigma} \wedge \beta^{\Omega})^{\overline{\Delta}}$, then, by definition, $\m{M} \models^{\Delta} \alpha^{\Sigma} \wedge \beta^{\Omega}$;
\item If $\m{M} \models^{\Delta} \alpha^{\Sigma}$, then $\m{M} \models \alpha^{\Sigma,\overline{\Delta}}$, and, by definition, $\m{M} \models \alpha^{\Sigma,\overline{\Delta}} \vee \beta^{\Omega,\overline{\Delta}}$, then, by lemma \ref{lemmaVee}, $\m{M} \models (\alpha^{\Sigma} \vee \beta^{\Omega})^{\overline{\Delta}}$, and, by definition, $\m{M} \models^{\Delta} \alpha^{\Sigma} \vee \beta^{\Omega}$;

\setcounter{enumi}{5}
\item Follow the same argument for rule 4;

\setcounter{enumi}{7}
\item By definition, there is no template $\m{T}$, such that $\m{T} \models \bot_{w}$. So, by definition, for every $\alpha^{\Sigma} \in \bs{F}_{w}$, $\bot_{w} \models \alpha^{\Sigma}$ and, by lemma \ref{lemmaDirectConsequence}, $\m{M} \models^{\Delta} \alpha^{\Sigma}$. The same argument holds for $\bot_{n}$ considering formulas in $\bs{F}_{n}$;
\item If $\Delta = \{\circledcirc\}$, then $\m{M} \models^{\Delta} \bot_{w}$ means $\ob \m{W} , \$ , \m{V} , \chi \cb \models \bot_{w}^{\circledcirc}$. This means that $\exists N \in \$(\chi) : \ob \m{W} , \$ , \m{V} , \chi , N \cb \models \bot_{w}$, but, by definition, $\nexists N \in \$(\chi) : \ob \m{W} , \$ , \m{V} , \chi , N \cb \models \bot_{w}$, so $\ob \m{W} , \$ , \m{V} , \chi \cb \models \neg (\bot_{w}^{\circledcirc})$. Then, by the rule 3, $\ob \m{W} , \$ , \m{V} , \chi \cb \models \bot_{n}$ and, by definition, $\m{M} \models \bot_{n}$. The case $\Delta = \{N\}$ is similar. If $\Delta = \{\circledcirc,\bullet\}$, then $\m{M} \models^{\Delta} \bot_{n}$ means $\ob \m{W} , \$ , \m{V} , \chi \cb \models \bot_{n}^{\bullet,\circledcirc}$. But this means that $\exists N \in \$(\chi) : \ob \m{W} , \$ , \m{V} , \chi , N \cb \models \bot_{n}^{\bullet}$ and $\exists w \in N : \ob \m{W} , \$ , \m{V} , w \cb \models \bot_{n}$. But, by definition, $\nexists w \in N : \ob \m{W} , \$ , \m{V} , w \cb \models \bot_{n}$, so $\ob \m{W} , \$ , \m{V} , \chi , N \cb \models \neg (\bot_{n}^{\bullet})$. Using rule 3, we conclude that $\ob \m{W} , \$ , \m{V} , \chi , N \cb \models \bot_{w}$ and, by a previous case, $\ob \m{W} , \$ , \m{V} , \chi \cb \models \bot_{n}$. The other cases where $s(\Delta) = 2$ are similar. If $\Delta = \{\circledcirc,\bullet,\circledcirc\}$, then $\m{M} \models^{\Delta} \bot_{w}$ means $\ob \m{W} , \$ , \m{V} , \chi \cb \models \bot_{w}^{\circledcirc,\bullet,\circledcirc}$. But this means that $\exists N \in \$(\chi) : \ob \m{W} , \$ , \m{V} , \chi , N \cb \models \bot_{w}^{\circledcirc,\bullet}$ and $\exists w \in N : \ob \m{W} , \$ , \m{V} , w \cb \models \bot_{w}^{\circledcirc}$. But, by a previous case, it means that $\exists w \in N : \ob \m{W} , \$ , \m{V} , w \cb \models \bot_{n}$ and $\ob \m{W} , \$ , \m{V} , \chi , N \cb \models \bot_{n}^{\bullet}$. But, by definition, $\nexists w \in N : \ob \m{W} , \$ , \m{V} , w \cb \models \bot_{n}$ and $\ob \m{W} , \$ , \m{V} , \chi , N \cb \models \neg(\bot_{n}^{\bullet})$. So, using rule 3, $\ob \m{W} , \$ , \m{V} , \chi , N \cb \models \bot_{w}$. Then $\exists N \in \$(\chi) : \ob \m{W} , \$ , \m{V} , \chi , N \cb \models \bot_{w}$ and $\ob \m{W} , \$ , \m{V} , \chi \cb \models \bot_{w}^{\circledcirc}$. By a previous case, we conclude that $\ob \m{W} , \$ , \m{V} , \chi \cb \models \bot_{n}$. The other cases where $s(\Delta) = 3$ are similar. If $\Delta = \{\circledcirc,\bullet,\circledcirc,\bullet\}$, then $\m{M} \models^{\Delta} \bot_{n}$ means $\ob \m{W} , \$ , \m{V} , \chi \cb \models \bot_{n}^{\bullet,\circledcirc,\bullet,\circledcirc}$. But this means that $\exists N \in \$(\chi) : \ob \m{W} , \$ , \m{V} , \chi , N \cb \models \bot_{n}^{\bullet,\circledcirc,\bullet}$ and $\exists w \in N : \ob \m{W} , \$ , \m{V} , w \cb \models \bot_{n}^{\bullet,\circledcirc}$ and, by the above arguments, $\ob \m{W} , \$ , \m{V} , w \cb \models \bot_{n}$. But, by definition,  $\nexists w \in N : \ob \m{W} , \$ , \m{V} , w \cb \models \bot_{n}$, so $\ob \m{W} , \$ , \m{V} , \chi , N \cb \models \neg (\bot_{n}^{\bullet,\circledcirc,\bullet})$ because of the implication of $\bot_{n}$ from $\bot_{n}^{\bullet,\circledcirc}$. Using rule 3, we conclude that $\ob \m{W} , \$ , \m{V} , \chi , N \cb \models \bot_{w}$ and $\ob \m{W} , \$ , \m{V} , \chi \cb \models \bot_{n}$ by a previous argument. The other cases are similar and the general case is treated by induction on the size of $\Delta$ following the previous arguments;
\item If $\m{M} \models^{\Delta} \alpha^{\Sigma}$, then $\m{M} \models^{\Delta} \alpha^{\Sigma}$;

\setcounter{enumi}{11}
\item If $\m{M} \models^{\Delta} \alpha^{\Sigma} \ra \beta^{\Omega}$, then $\m{M} \models (\alpha^{\Sigma} \ra \beta^{\Omega})^{\overline{\Delta}}$, then, by lemma \ref{lemmaRa}, $\m{M} \models \alpha^{\Sigma,\overline{\Delta}} \ra \beta^{\Omega,\overline{\Delta}}$. Then, by definition, $\m{M} \models \neg (\alpha^{\Sigma,\overline{\Delta}})$ or $\m{M} \models \beta^{\Omega,\overline{\Delta}}$. But we know from $\m{M} \models^{\Delta} \alpha^{\Sigma}$ that $\m{M} \models \alpha^{\Sigma,\overline{\Delta}}$. So, we can conclude $\m{M} \models^{\Delta} \beta^{\Omega}$;
\item If $\m{M} \models^{\Delta} \alpha^{\Sigma,\phi}$, then $\m{M} \models \alpha^{\Sigma,\phi,\overline{\Delta}}$. But, $\{\phi,\overline{\Delta}\} \equiv \overline{\{{\Delta,\phi}\}}$, then, by definition, $\m{M} \models^{\Delta,\phi} \alpha^{\Sigma}$;
\item If $\m{M} \models^{\Delta,\phi} \alpha^{\Sigma}$, then $\m{M} \models \alpha^{\Sigma,\overline{\{\Delta,\phi\}}}$. But, $\overline{\{{\Delta,\phi}\}} \equiv \{\phi,\overline{\Delta}\}$, and, by definition, $\m{M} \models^{\Delta} \alpha^{\Sigma,\phi}$;
\item If $\m{M} \models^{\Delta,u} \alpha^{\Sigma}$, then, by the rule 14, $\m{M} \models^{\Delta} \alpha^{\Sigma,u}$. By the fact that $\alpha^{\Sigma,u} \in \bs{F}_{w}$, the fitting relation and lemma \ref{countingLemma}, we know that $s(\Delta)$ is odd. If we take some template $\m{T} = \ob \m{W} , \$ , \m{V} , z , N \cb$, such that $\m{M} \multimap_{s(\Delta)} \m{T}$ and $\m{T} \models \alpha^{\Sigma,u}$, we can conclude that $N \in \$(z)$, $\sigma(u) \in N$ and $\ob \m{W} , \$ , \m{V} , \sigma(u) \cb \models \alpha^{\Sigma}$. The restrictions of the rule assures us that the variable $u$ is arbitrary and we may conclude that $\forall w \in N : \ob \m{W} , \$ , \m{V} , w \cb \models \alpha^{\Sigma}$. So, $\m{T} \models \alpha^{\Sigma,\ast}$ and, by definition, $\alpha^{\Sigma,u} \models_{\m{M}:s(\Delta)} \alpha^{\Sigma,\ast}$, which means, by lemma \ref{lemmaConsequence}, that $\m{M} \models^{\Delta} \alpha^{\Sigma,\ast}$ and, by rule 13, $\m{M} \models^{\Delta,\ast} \alpha^{\Sigma}$;
\item If $\m{M} \models^{\Delta,\ast} \alpha^{\Sigma}$, then, by the rule 14, $\m{M} \models^{\Delta} \alpha^{\Sigma,\ast}$. By the fact that $\alpha^{\Sigma,\ast} \in \bs{F}_{w}$, the fitting relation and lemma \ref{countingLemma}, we know that $s(\Delta)$ is odd. If we take some template $\m{T} = \ob \m{W} , \$ , \m{V} , z \cb$, such that $\m{M} \multimap_{s(\Delta)} \m{T}$ and $\m{T} \models \alpha^{\Sigma,\ast}$, then $N \in \$(z)$ and $\forall w \in N : \ob \m{W} , \$ , \m{V} , w \cb \models \alpha^{\Sigma}$. If we take a variable $u$ to denote a world of $N$ obeying the restrictions of the rule, then we may conclude that $u \in N$ and $\ob \m{W} , \$ , \m{V} , u \cb \models \alpha^{\Sigma}$. So, $\m{T} \models \alpha^{\Sigma,u}$ and, by definition, $\alpha^{\Sigma,\ast} \models_{\m{M}:s(\Delta)} \alpha^{\Sigma,u}$, which means, by lemma \ref{lemmaConsequence}, that $\m{M} \models^{\Delta} \alpha^{\Sigma,u}$ and, by rule 13, $\m{M} \models^{\Delta,u} \alpha^{\Sigma}$;

\item If $\m{M} \models^{\Delta,u} \alpha^{\Sigma}$, then, by the rule 14, $\m{M} \models^{\Delta} \alpha^{\Sigma,u}$. By the fact that $\alpha^{\Sigma,u} \in \bs{F}_{w}$, the fitting relation and lemma \ref{countingLemma}, we know that $s(\Delta)$ is odd. If we take some template $\m{T} = \ob \m{W} , \$ , \m{V} , z \cb$, such that $\m{M} \multimap_{s(\Delta)} \m{T}$ and $\m{T} \models \alpha^{\Sigma,u}$, then $N \in \$(z)$, $\sigma(u) \in N$ and $\ob \m{W} , \$ , \m{V} , \sigma(u) \cb \models \alpha^{\Sigma}$. Since we denote some world with the variable $u$, we know that there is some world in $N$ such that the formula $\alpha^{\Sigma}$ holds. Then we may conclude that $\exists w \in N : \ob \m{W} , \$ , \m{V} , w \cb \models \alpha^{\Sigma}$. So, $\m{T} \models \alpha^{\Sigma,\bullet}$ and, by definition, $\alpha^{\Sigma,u} \models_{\m{M}:s(\Delta)} \alpha^{\Sigma,\bullet}$, which means, by lemma \ref{lemmaConsequence}, that $\m{M} \models^{\Delta} \alpha^{\Sigma,\bullet}$ and, by rule 13, $\m{M} \models^{\Delta,\bullet} \alpha^{\Sigma}$;

\setcounter{enumi}{18}
\item If $\m{M} \models^{\Delta,N} \alpha^{\Sigma}$ and $\m{M} \models^{\Delta,\circledcirc} \beta^{\Omega}$, then, by the rule 14, $\m{M} \models^{\Delta} \alpha^{\Sigma,N}$ and $\m{M} \models^{\Delta} \beta^{\Omega,\circledcirc}$ and, by rule 3, $\m{M} \models^{\Delta} \alpha^{\Sigma,N} \wedge \beta^{\Omega,\circledcirc}$. By the fact that $\alpha^{\Sigma,N} \wedge \beta^{\Omega,\circledcirc} \in \bs{F}_{n}$, the fitting relation and lemma \ref{countingLemma}, we know that $s(\Delta)$ is even. If we take some model $\m{H} = \ob \m{W} , \$ , \m{V} , z \cb$, such that $\m{M} \multimap_{s(\Delta)} \m{H}$ and $\m{H} \models \alpha^{\Sigma,N} \wedge \beta^{\Omega,\circledcirc}$, then from $\beta^{\Omega,\circledcirc}$ we know that $\$(z) \neq \emptyset$, $\sigma(N) \in \$(z)$ and $\ob \m{W} , \$ , \m{V} , z , \sigma(N) \cb \models \alpha^{\Sigma}$. Since we denote some neighbourhood with the variable $N$, we know that there is some neighbourhood in $\$(z)$, such that the formula $\alpha^{\Sigma}$ holds. Then $\exists M \in \$(z) : \ob \m{W} , \$ , \m{V} , z , M \cb \models \alpha^{\Sigma}$ and $\m{H} \models \alpha^{\Sigma,\circledcirc}$. So, by definition, $\alpha^{\Sigma,N} \wedge \beta^{\Omega,\circledcirc} \models_{\m{M}:s(\Delta)} \alpha^{\Sigma,\circledcirc}$, which means, by lemma \ref{lemmaConsequence}, that $\m{M} \models^{\Delta} \alpha^{\Sigma,\circledcirc}$ and, by rule 13, $\m{M} \models^{\Delta,\circledcirc} \alpha^{\Sigma}$;

\setcounter{enumi}{20}
\item If $\m{M} \models^{\Delta,N} \alpha^{\Sigma}$, then, by the rule 14, $\m{M} \models^{\Delta} \alpha^{\Sigma,N}$. By the fact that $\alpha^{\Sigma,N} \in \bs{F}_{n}$, the fitting relation and lemma \ref{countingLemma}, we know that $s(\Delta)$ is even. If we take some model $\m{H} = \ob \m{W} , \$ , \m{V} , z \cb$, such that $\m{M} \multimap_{s(\Delta)} \m{H}$ and $\m{H} \models \alpha^{\Sigma,N}$, then $\sigma(N) \in \$(z)$ and $\ob \m{W} , \$ , \m{V} , z , \sigma(N) \cb \models \alpha^{\Sigma}$. From the restrictions of the rule, we know that $N$ is arbitrary, so, $\forall M \in \$(z) : \ob \m{W} , \$ , \m{V} , z , M \cb \models \alpha^{\Sigma}$, which means that $\m{H} \models \alpha^{\Sigma,\circledast}$. So, by definition, $\alpha^{\Sigma,N} \models_{\m{M}:s(\Delta)} \alpha^{\Sigma,\circledast}$, which means, by lemma \ref{lemmaConsequence}, that $\m{M} \models^{\Delta} \alpha^{\Sigma,\circledast}$ and, by rule 13, $\m{M} \models^{\Delta,\circledast} \alpha^{\Sigma}$;
\item If $\m{M} \models^{\Delta,\circledast} \alpha^{\Sigma}$ and $\m{M} \models^{\Delta,N} \beta^{\Omega}$, then, by the rule 14, $\m{M} \models^{\Delta} \alpha^{\Sigma,\circledast}$ and $\m{M} \models^{\Delta} \beta^{\Omega,N}$. So, by rule 3, $\m{M} \models^{\Delta} \alpha^{\Sigma,\circledast} \wedge \beta^{\Omega,N}$. By the fact that $\alpha^{\Sigma,\circledast} \wedge \beta^{\Omega,N} \in \bs{F}_{n}$, the fitting relation and lemma \ref{countingLemma}, we know that $s(\Delta)$ is even. If we take some model $\m{H} = \ob \m{W} , \$ , \m{V} , z \cb$, such that $\m{M} \multimap_{s(\Delta)} \m{H}$ and $\m{H} \models \alpha^{\Sigma,\circledast} \wedge \beta^{\Omega,N}$, then $\m{H} \models \alpha^{\Sigma,\circledast}$ and $\m{H} \models \beta^{\Omega,N}$. By definition, $\sigma(N) \in \$(z)$ and $\ob \m{W} , \$ , \m{V} , z , \sigma(N) \cb \models \beta^{\Omega}$ and $\forall M \in \$(z) : \ob \m{W} , \$ , \m{V} , z , M \cb \models \alpha^{\Sigma}$. So, $\sigma(N) \in \$(z)$ and, by the universal quantification, $\ob \m{W} , \$ , \m{V} , z , \sigma(N) \cb \models \alpha^{\Sigma}$. This means that $\m{H} \models \alpha^{\Sigma,N}$ and, by definition, $\alpha^{\Sigma,\circledast} \wedge \beta^{\Omega,N} \models_{\m{M}:s(\Delta)} \alpha^{\Sigma,N}$, which means, by lemma \ref{lemmaConsequence}, that $\m{M} \models^{\Delta} \alpha^{\Sigma,N}$ and, by rule 13, $\m{M} \models^{\Delta,N} \alpha^{\Sigma}$;
\item If $\m{M} \models^{\Delta,N} \alpha^{\Sigma,\bullet}$ and $\m{M} \models^{\Delta,M} \shneg N$, then, by the rule 14, $\m{M} \models^{\Delta} \alpha^{\Sigma,\bullet,N}$ and $\m{M} \models^{\Delta} (\shneg N)^{M}$. By the rule 3, $\m{M} \models^{\Delta} \alpha^{\Sigma,\bullet,N} \wedge (\shneg N)^{M}$. By the fact that $\alpha^{\Sigma,\bullet,N} \wedge (\shneg N)^{M} \in \bs{F}_{n}$, the fitting relation and lemma \ref{countingLemma}, we know that $s(\Delta)$ is even. If we take some model $\m{H} = \ob \m{W} , \$ , \m{V} , z \cb$, such that $\m{M} \multimap_{s(\Delta)} \m{H}$ and $\m{H} \models \alpha^{\Sigma,\bullet,N} \wedge (\shneg N)^{M}$, then $\sigma(N) \in \$(z)$ and $\exists w \in \sigma(N) : \ob \m{W} , \$ , \m{V} , w \cb \models \alpha^{\Sigma}$. From $(\shneg N)^{M}$, we know that $\sigma(M) \in \$(z)$ and $\sigma(N) \subset \sigma(M)$, then $\exists w \in \sigma(M) : \ob \m{W} , \$ , \m{V} , w \cb \models \alpha^{\Sigma}$. We conclude that $\m{H} \models \alpha^{\Sigma,\bullet,M}$ and, by definition, $\alpha^{\Sigma,\bullet,N} \wedge (\shneg N)^{M} \models_{\m{M}:s(\Delta)} \alpha^{\Sigma,\bullet,M}$, which means, by lemma \ref{lemmaConsequence}, that $\m{M} \models^{\Delta} \alpha^{\Sigma,\bullet,M}$ and, by rule 13, $\m{M} \models^{\Delta,M} \alpha^{\Sigma,\bullet}$;
\item If $\m{M} \models^{\Delta,N} \alpha^{\Sigma,\ast}$ and $\m{M} \models^{\Delta,M} \shpos N$, then, by the rule 14, $\m{M} \models^{\Delta} \alpha^{\Sigma,\ast,N}$ and $\m{M} \models^{\Delta} (\shpos N)^{M}$. By the rule 3, $\m{M} \models^{\Delta} \alpha^{\Sigma,\ast,N} \wedge (\shpos N)^{M}$. By the fact that $\alpha^{\Sigma,\ast,N} \wedge (\shpos N)^{M} \in \bs{F}_{n}$, the fitting relation and lemma \ref{countingLemma}, we know that $s(\Delta)$ is even. If we take some model $\m{H} = \ob \m{W} , \$ , \m{V} , z \cb$, such that $\m{M} \multimap_{s(\Delta)} \m{H}$ and $\m{H} \models \alpha^{\Sigma,\ast,N} \wedge (\shpos N)^{M}$, then $\sigma(N) \in \$(z)$ and $\forall w \in \sigma(N) : \ob \m{W} , \$ , \m{V} , w \cb \models \alpha^{\Sigma}$. From $(\shpos N)^{M}$, we know that $\sigma(M) \in \$(z)$ and $\sigma(M) \subset \sigma(N)$, then $\forall w \in \sigma(M) : \ob \m{W} , \$ , \m{V} , w \cb \models \alpha^{\Sigma}$. We conclude that $\m{H} \models \alpha^{\Sigma,\ast,M}$ and, by definition, $\alpha^{\Sigma,\ast,N} \wedge (\shpos N)^{M} \models_{\m{M}:s(\Delta)} \alpha^{\Sigma,\ast,M}$, which means, by lemma \ref{lemmaConsequence}, that $\m{M} \models^{\Delta} \alpha^{\Sigma,\ast,M}$ and, by rule 13, $\m{M} \models^{\Delta,M} \alpha^{\Sigma,\ast}$;
\item If $\m{M} \models^{\Delta,N} \shneg M$ and $\m{M} \models^{\Delta,M} \shneg P$, then, by the rule 14, $\m{M} \models^{\Delta} (\shneg M)^{N}$ and $\m{M} \models^{\Delta} (\shneg P)^{M}$. By the rule 3, $\m{M} \models^{\Delta} (\shneg M)^{N} \wedge (\shneg P)^{M}$. By the fact that $(\shneg M)^{N} \wedge (\shneg P)^{M} \in \bs{F}_{n}$, the fitting relation and lemma \ref{countingLemma}, we know that $s(\Delta)$ is even. If we take some model $\m{H} = \ob \m{W} , \$ , \m{V} , z \cb$, such that $\m{M} \multimap_{s(\Delta)} \m{H}$ and $\m{H} \models (\shneg M)^{N} \wedge (\shneg P)^{M}$, then $\sigma(N) \in \$(z)$ and $\sigma(M) \subset \sigma(N)$. From $(\shneg P)^{M}$, we know that $\sigma(M) \in \$(z)$ and $\sigma(P) \subset \sigma(M)$, then $\sigma(P) \subset \sigma(N)$. We conclude that $\m{H} \models (\shneg P)^{N}$ and, by definition, $(\shneg M)^{N} \wedge (\shneg P)^{M} \models_{\m{M}:s(\Delta)} (\shneg P)^{N}$, which means, by lemma \ref{lemmaConsequence}, that $\m{M} \models^{\Delta} (\shneg P)^{N}$ and, by rule 13, $\m{M} \models^{\Delta,N} \shneg P$;
\item It follows the same argument of rule 25;

\setcounter{enumi}{29}
\item According to the satisfaction relation, every model must model $\top_{n}$ and every template must model $\top_{w}$. So, given a model $\m{M}$, if $s(\Delta)$ is even, then, for every model $\m{H}$, such that $\m{M} \multimap_{s(\Delta)} \m{H}$, $\m{H} \models \top_{n}$ and, by lemma \ref{lemmaConsequence}, $\m{M} \models^{\Delta} \top_{n}$. The argument for odd $s(\Delta)$ is analogous.\end{enumerate}\end{proof}

\begin{lemma} \label{withOrWithoutYou}
Given a context $\Delta$ with no existential label, and a wff $\alpha^{\Sigma}$ that fits on $\Delta$, then, for any model, $\m{M} \models^{\Delta} \alpha^{\Sigma} \vee \neg ( \alpha^{\Sigma} )$.
\end{lemma}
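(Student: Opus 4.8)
The plan is to unfold Definition \ref{resolutionDf} and reduce the statement to a single satisfaction claim, which I then establish by induction on $s(\Delta)$. Since $\alpha^{\Sigma}$ fits into $\Delta$, we have $\alpha^{\Sigma,\overline{\Delta}} \in \bs{F}_{n}$; and because the formation rules for $\neg$ and $\vee$ in Definition \ref{wffDefinition} stay within a single one of $\bs{F}_{n}$, $\bs{F}_{w}$, both $\neg(\alpha^{\Sigma})$ and $\alpha^{\Sigma} \vee \neg(\alpha^{\Sigma})$ have the same sort as $\alpha^{\Sigma}$, so $(\alpha^{\Sigma} \vee \neg(\alpha^{\Sigma}))^{\overline{\Delta}} \in \bs{F}_{n}$ as well. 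Thus $\alpha^{\Sigma} \vee \neg(\alpha^{\Sigma})$ fits into $\Delta$, and by Definition \ref{resolutionDf} it only remains to prove $\m{M} \models (\alpha^{\Sigma} \vee \neg(\alpha^{\Sigma}))^{\overline{\Delta}}$.

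The semantic engine is the bivalence of $\models$ at a single point: at any structure at which $\alpha^{\Sigma}$ is a wff, either that structure satisfies $\alpha^{\Sigma}$ or, by the clauses for $\neg$ in Definition \ref{satisfactionDefinition}, it satisfies $\neg(\alpha^{\Sigma})$, and in both cases the clauses for $\vee$ yield $\alpha^{\Sigma} \vee \neg(\alpha^{\Sigma})$. So the unlabelled disjunction holds at every point. I would then propagate this through the labels of $\overline{\Delta}$ by induction on $s(\Delta)$, peeling the scope $\oc\Delta$ first and recursing on the remainder, whose size is strictly smaller and which again contains no existential label, exactly in the style of Lemmas \ref{lemmaVee}--\ref{lemmaRa}. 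A universal label ($\circledast$ or $\ast$) is handled by its clause, which quantifies the inner formula over all neighbourhoods of $\$(\chi)$, respectively over all worlds of the reference neighbourhood; since the inner formula holds at each such point the universally quantified formula holds, and it holds \emph{vacuously} when $\$(\chi)$ is empty. A variable label ($N$ or $u$) is handled by its clause, which merely relocates evaluation to the single sub-structure named by $\sigma$, where bivalence applies again.

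The step demanding care, and the precise reason the hypothesis forbids existential labels, is the possibly-empty neighbourhood system: normality is not among our assumed conditions, so $\$(\chi)$ may be empty. A universal neighbourhood label survives this by vacuous truth, but an existential neighbourhood label $\circledcirc$ would require a witnessing neighbourhood that need not exist, so $(\alpha^{\Sigma} \vee \neg(\alpha^{\Sigma}))^{\circledcirc}$ could fail --- which is exactly what the ``no existential label'' hypothesis excludes. A secondary subtlety is the world-variable clause, which carries the side condition $\sigma(u) \in N$; I would discharge it by reading $\sigma$ as interpreting the context variables compatibly with the structure, the same convention used when eigenvariables are introduced in the soundness argument of Lemma \ref{resolution}. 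Reassembling the induction then gives $\m{M} \models (\alpha^{\Sigma} \vee \neg(\alpha^{\Sigma}))^{\overline{\Delta}}$, that is $\m{M} \models^{\Delta} \alpha^{\Sigma} \vee \neg(\alpha^{\Sigma})$.
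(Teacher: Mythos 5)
Your proposal is correct and shares the paper's skeleton --- unfold Definition \ref{resolutionDf}, use bivalence of $\models$ at a single structure, and induct on $s(\Delta)$ --- but the inductive engine is genuinely different. The paper treats the base contexts $\{\circledast\}$, $\{N\}$, $\{\circledast,\ast\}$, $\{\circledast,u\}$, $\{N,\ast\}$, $\{N,u\}$ via the tautology argument together with Lemma \ref{taut}, and its inductive step peels the scope label $\phi$ and invokes Lemma \ref{lemmaVee} twice to distribute $\vee$ across $\phi$ before applying the induction hypothesis. You instead propagate the pointwise tautology through the labels of $\overline{\Delta}$ directly by the satisfaction clauses of Definition \ref{satisfactionDefinition}: a universal label holds because the inner formula holds at every neighbourhood or world (vacuously when $\$(\chi)=\emptyset$), and a variable label merely relocates evaluation to the structure named by $\sigma$. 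Your route is more self-contained and in fact more robust: Lemma \ref{lemmaVee} is stated only for contexts \emph{without universal quantifiers}, whereas the contexts of the present lemma have no existential labels and hence consist precisely of universal labels and variables, and moreover the formula $\alpha^{\Sigma,\phi} \vee (\neg(\alpha^{\Sigma}))^{\phi}$ to which the paper applies its induction hypothesis is not literally of the form $\beta \vee \neg(\beta)$; your clause-by-clause propagation --- in effect a generalized Lemma \ref{taut}, that n-/w-tautologies are preserved by appending non-existential labels --- sidesteps both points. You are also more explicit on the one genuinely delicate step, the side condition $\sigma(u) \in N$ in the world-variable clause: the paper passes over it with ``in particular'' in its $\{\circledast,u\}$ and $\{N,u\}$ cases, and both proofs need the same compatibility convention on $\sigma$ that you state.
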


\begin{proof}
We proceed by induction on the size of $\Delta$.\\
\noindent If $\Delta$ is empty, then $\alpha^{\Sigma} \in \bs{F}_{n}$. $\alpha^{\Sigma} \vee \neg ( \alpha^{\Sigma} )$ is a tautology because of the satisfaction relation definition: given any model $\m{M}$, if $\m{M} \models \alpha^{\Sigma}$, then $\m{M} \models \alpha^{\Sigma} \vee \neg ( \alpha^{\Sigma} )$. If  $\m{M} \not\models \alpha^{\Sigma}$, then $\m{M} \models \neg ( \alpha^{\Sigma} )$ and $\m{M} \models \alpha^{\Sigma} \vee \neg ( \alpha^{\Sigma} )$.\\
\noindent (base) If $\Delta = \{\circledast\}$, then $\alpha^{\Sigma} \in \bs{F}_{w}$. $\alpha^{\Sigma} \vee \neg ( \alpha^{\Sigma} )$ is a tautology because of the satisfaction relation definition: given any template $\m{T}$, if $\m{T} \models \alpha^{\Sigma}$, then $\m{T} \models \alpha^{\Sigma} \vee \neg ( \alpha^{\Sigma} )$. If $\m{T} \not\models \alpha^{\Sigma}$, then $\m{T} \models \neg ( \alpha^{\Sigma} )$ and $\m{T} \models \alpha^{\Sigma} \vee \neg ( \alpha^{\Sigma} )$. Given any model $\m{M} = \ob \m{W} , \$ , \m{V} , \chi \cb$, then for every template $\ob \m{W} , \$ , \m{V} , \chi , N \cb \models \alpha^{\Sigma} \vee \neg ( \alpha^{\Sigma} )$ and, by definition, $\m{M} \models (\alpha^{\Sigma} \vee \neg ( \alpha^{\Sigma} ))^{\circledast}$. So, $\m{M} \models (\alpha^{\Sigma} \vee \neg ( \alpha^{\Sigma} ))^{\overline{\Delta}}$ and, by definition, $\m{M} \models^{\Delta} \alpha^{\Sigma} \vee \neg ( \alpha^{\Sigma} )$.\\
\noindent (base) If $\Delta = \{N\}$: by the previous case, $\m{M} \models (\alpha^{\Sigma} \vee \neg ( \alpha^{\Sigma} ))^{\circledast}$ and, in particular, $\m{M} \models (\alpha^{\Sigma} \vee \neg ( \alpha^{\Sigma} ))^{N}$, for any neighbourhood variable $N$.\\
\noindent (base) If $\Delta = \{\circledast,\ast\}$, then $\alpha^{\Sigma} \in \bs{F}_{n}$. $\alpha^{\Sigma} \vee \neg ( \alpha^{\Sigma} )$ is a tautology because of the satisfaction relation definition: given any model $\m{H}$, if $\m{H} \models \alpha^{\Sigma}$, then $\m{H} \models \alpha^{\Sigma} \vee \neg ( \alpha^{\Sigma} )$. If $\m{H} \not\models \alpha^{\Sigma}$, then $\m{H} \models \neg ( \alpha^{\Sigma} )$ and $\m{H} \models \alpha^{\Sigma} \vee \neg ( \alpha^{\Sigma} )$. We apply lemma \ref{taut} to conclude that $(\alpha^{\Sigma} \vee \neg ( \alpha^{\Sigma} ))^{\ast,\circledast}$ is also a tautology. So, for any model $\m{M} \models (\alpha^{\Sigma} \vee \neg ( \alpha^{\Sigma} ))^{\ast,\circledast}$ and, by definition, $\m{M} \models^{\Delta} \alpha^{\Sigma} \vee \neg ( \alpha^{\Sigma} )$.\\
\noindent (base) If $\Delta = \{\circledast,u\}$: by the previous case $(\alpha^{\Sigma} \vee \neg ( \alpha^{\Sigma} ))^{\ast,\circledast}$ is a tautology. So, in particular, $\m{M} \models (\alpha^{\Sigma} \vee \neg ( \alpha^{\Sigma} ))^{u,\circledast}$ for any world variable and, by definition, $\m{M} \models^{\Delta} \alpha^{\Sigma} \vee \neg ( \alpha^{\Sigma} )$.\\
\noindent (base) If $\Delta = \{N,\ast\}$ and $\Delta = \{N,u\}$ are analogous to the previous case.\\
\noindent (induction) If $\Delta = \{\phi,\Delta'\}$: by lemma \ref{lemmaVee}, $(\alpha^{\Sigma} \vee \neg ( \alpha^{\Sigma} ))^{\phi,\Delta'} \equiv (\alpha^{\Sigma,\phi} \vee (\neg ( \alpha^{\Sigma} )^{\phi} )^{\Delta'}$. By the induction hypothesis, $\m{M} \models^{\Delta'} \alpha^{\Sigma,\phi} \vee (\neg ( \alpha^{\Sigma} ))^{\phi}$. By lemma \ref{lemmaVee} again, $\m{M} \models^{\Delta'} (\alpha^{\Sigma} \vee (\neg ( \alpha^{\Sigma} ))^{\phi}$ and, by definition, $\m{M} \models^{\Delta',\phi} \alpha^{\Sigma} \vee \neg ( \alpha^{\Sigma} )$.\end{proof}

\begin{lemma} \label{completeResolutionPUC-ND}
PUC-ND preserves resolution.
\end{lemma}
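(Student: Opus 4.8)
The plan is to induct on the structure of the PUC-ND derivation, reducing every case to its immediate subderivations. Lemma \ref{resolution} already establishes that each rule \emph{other} than the hypothesis-discharging rules $5,7,11,18,20,27,28$ and $29$ sends resolved premises to a resolved conclusion, so any derivation whose last inference is non-discharging inherits resolution preservation directly from the induction hypothesis applied to its premises. The whole remaining content of the lemma is therefore to reinstate these eight discharge rules, and in each case I would feed the induction hypothesis into the subderivations --- which may now carry the extra, discharged hypotheses --- and combine it with one of the distribution lemmas or with the referential-consequence machinery.

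The propositional discharge rules go through the resolution-level tautology of Lemma \ref{withOrWithoutYou}. For rule $5$ ($\vee$-elimination) the context has no universal label, so the induction hypothesis on the major premise gives $\m{M}\models^{\Delta}\alpha^{\Sigma}\vee\beta^{\Omega}$, and Lemma \ref{lemmaVee} distributes this to $\m{M}\models\alpha^{\Sigma,\overline{\Delta}}$ or $\m{M}\models\beta^{\Omega,\overline{\Delta}}$; whichever disjunct holds resolves the corresponding discharged hypothesis, and the induction hypothesis on $\Pi_{2}$ or $\Pi_{3}$ delivers $\m{M}\models^{\Theta}\gamma^{\Lambda}$. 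For rule $7$ (classical $\bot$) I would split on $\m{M}\models^{\Delta}\alpha^{\Sigma}\vee\neg(\alpha^{\Sigma})$ given by Lemma \ref{withOrWithoutYou}: the $\neg(\alpha^{\Sigma})$ alternative resolves the discharged hypothesis, so the induction hypothesis forces a resolution of $\bot$, which is excluded, leaving $\m{M}\models^{\Delta}\alpha^{\Sigma}$. Rule $11$ ($\ra$-introduction) is the same case split read through Lemma \ref{lemmaRa}: under $\neg(\alpha^{\Sigma})$ the implication resolves outright, while under $\alpha^{\Sigma}$ the induction hypothesis resolves $\beta^{\Omega}$ and hence the implication.

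The existential eliminations $18$ and $20$ are where I expect the real work. From the resolved existential premise $\m{M}\models^{\Delta,\bullet}\alpha^{\Sigma}$ (respectively $\m{M}\models^{\Delta,\circledcirc}\alpha^{\Sigma}$) I would extract, at the structure reached by a path of length $s(\Delta)$, a witnessing world (respectively neighbourhood) and reassign the eigenvariable $u$ (respectively $N$) of the assignment $\sigma$ to that witness. The freshness clauses of the rule guarantee that this variable occurs in neither the surviving open hypotheses and their contexts nor in $\beta^{\Omega}$ and $\Theta$, so after the reassignment $\m{M}$ still resolves the surviving hypotheses and now resolves the discharged hypothesis $\alpha^{\Sigma}$; the induction hypothesis on $\Pi_{2}$ then gives $\m{M}\models^{\Theta}\beta^{\Omega}$, and independence from the reassigned variable carries this back to the original $\sigma$. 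Keeping the depth $s(\Delta)$ of the witness aligned with the path length --- through Lemma \ref{countingLemma} and the referential-consequence relation of Definition \ref{referentialConsequence} and Lemma \ref{lemmaConsequence} --- is the bookkeeping that must be done carefully, exactly as in the quantifier cases of Lemma \ref{resolution}.

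For the total-order rules $27$, $28$ and $29$ I would use that $\$(\chi)$ is totally ordered by inclusion: given a model resolving the surviving hypotheses, the two neighbourhoods named in the two branches are comparable, so one of the two inclusion hypotheses discharged by the rule is resolved in $\m{M}$, and the induction hypothesis on the matching subderivation yields $\m{M}\models^{\Theta}\alpha^{\Sigma}$; the three rules jointly exhaust the trichotomy (strict inclusion either way, and the borderline coincidence) supplied by totality. Assembling the non-discharge steps of Lemma \ref{resolution} with these eight cases closes the induction. The genuine difficulty underlying all of them is the interaction with possibly empty neighbourhood systems and with quantified (rather than variable) scopes, where disjunctions and absurdities do not distribute across a scope; this is exactly why Lemmas \ref{lemmaVee}, \ref{lemmaWedge} and \ref{withOrWithoutYou} carry their no-universal and no-existential side conditions, and the main obstacle is to check that each discharge rule is only ever applied in a context meeting the side condition its case analysis requires.
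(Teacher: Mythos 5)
Your overall plan is the paper's own: delegate every non-discharging inference to Lemma \ref{resolution}, and handle the eight discharging rules by case analysis --- Lemma \ref{lemmaVee} for rule 5, an excluded-middle resolution (Lemma \ref{withOrWithoutYou}) for rules 7 and 11, a fresh-witness argument routed through Definition \ref{referentialConsequence}, Lemma \ref{countingLemma} and Lemma \ref{lemmaConsequence} for rules 18 and 20, and the total order of $\$(\chi)$ under inclusion for rules 27--29; your structural induction on the derivation is just a cleaner packaging of the paper's induction on the nesting of discharge-rule applications. However, two of your individual steps would fail as written. First, rule 11 cannot be ``read through Lemma \ref{lemmaRa}'': that lemma distributes a scoped implication $(\alpha^{\Sigma} \ra \beta^{\Omega})^{\overline{\Delta}}$ \emph{outward} to $\alpha^{\Sigma,\overline{\Delta}} \ra \beta^{\Omega,\overline{\Delta}}$, which is what the modus-ponens case (rule 12) of Lemma \ref{resolution} needs; the converse direction, which rule 11 requires, is false in general (for $\Delta = \{\circledast\}$, ``every neighbourhood satisfying $\alpha^{\Sigma}$ satisfies $\beta^{\Omega}$'' is strictly stronger than ``if all neighbourhoods satisfy $\alpha^{\Sigma}$ then all satisfy $\beta^{\Omega}$''). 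The correct move, which the paper makes and which you yourself use for rules 18 and 20, is pointwise at the endpoints of the path: from $\m{M} \models^{\Delta} \beta^{\Omega}$ (or $\m{M} \models^{\Delta} \neg(\alpha^{\Sigma})$) one gets $\beta^{\Omega} \models_{\m{M}:s(\Delta)} \alpha^{\Sigma} \ra \beta^{\Omega}$ by the satisfaction clause for $\ra$ at each endpoint, and then Lemma \ref{lemmaConsequence} yields $\m{M} \models^{\Delta} \alpha^{\Sigma} \ra \beta^{\Omega}$.

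Second, in rule 7 you argue that a resolution of $\bot$ ``is excluded.'' That is not true in general: if $\Delta$ contains $\circledast$ and the relevant neighbourhood system is empty, then $\m{M} \models^{\Delta} \bot$ holds vacuously --- this is exactly the reason rule 9 (absurd expansion) forbids $\circledast$ in its context, while rule 7 carries no such restriction. So the branch under the discharged hypothesis $\neg(\alpha^{\Sigma})$ cannot be dismissed by contradiction; instead, from $\m{M} \models^{\Delta} \bot$ one concludes $\m{M} \models^{\Delta} \alpha^{\Sigma}$ anyway, because $\bot$ logically entails every fitting formula (no structure satisfies $\bot_{n}$ or $\bot_{w}$), exactly as in the rule-8 case of Lemma \ref{resolution} via Lemma \ref{lemmaDirectConsequence}; both branches of the split then deliver $\m{M} \models^{\Delta} \alpha^{\Sigma}$. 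With these two repairs --- both available inside the toolkit you already invoke --- your argument coincides with the paper's proof.
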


\begin{proof} We present the proof for each remaining rule of the PUC-ND inside an induction. Base argument:
\begin{enumerate}
\setcounter{enumi}{4}
\item If $\m{M} \models^{\Delta} \alpha^{\Sigma} \vee \beta^{\Omega}$, then $\m{M} \models (\alpha^{\Sigma} \vee \beta^{\Omega})^{\overline{\Delta}}$, then, by lemma \ref{lemmaVee}, $\m{M} \models \alpha^{\Sigma,\overline{\Delta}} \vee \beta^{\Omega,\overline{\Delta}}$, then, by definition, $\m{M} \models \alpha^{\Sigma,\overline{\Delta}}$ or $\m{M} \models \beta^{\Omega,\overline{\Delta}}$. This means, by definition, that $\m{M} \models^{\Delta} \alpha^{\Sigma}$ or $\m{M} \models^{\Delta} \beta^{\Omega}$. So, if $\Pi_{1}$ and $\Pi_{2}$ only contains the rules from lemma \ref{resolution}, $\m{M} \models^{\Theta} \gamma^{\Lambda}$ in both cases, because of the preservation of the resolution relation. And, for that conclusion, the hypothesis are no longer necessary and may be discharged;

\setcounter{enumi}{6}
\item We know from classical logic that $\m{M} \models \alpha^{\Sigma,\overline{\Delta}} \vee \neg (\alpha^{\Sigma,\overline{\Delta}})$, which means that $\m{M} \models \alpha^{\Sigma,\overline{\Delta}}$ or $\m{M} \models \neg (\alpha^{\Sigma,\overline{\Delta}})$. In the first case, we know that $\m{M} \models^{\Delta} \alpha^{\Sigma}$. In the second case, we know that $\m{M} \models^{\Delta} \neg (\alpha^{\Sigma})$. If the subderivation $\Pi$ only contains the rules from lemma \ref{resolution}, we can conclude that $\m{M} \models^{\Delta} \bot$. But, from rule 7, this means that $\m{M} \models^{\Delta} \alpha^{\Sigma}$. So, in either case, we can conclude $\m{M} \models^{\Delta} \alpha^{\Sigma}$ and we are able to discharge the hypothesis;

\setcounter{enumi}{10}
\item From lemma \ref{withOrWithoutYou}, we know that $\m{M} \models^{\Delta} \alpha^{\Sigma} \vee \neg ( \alpha^{\Sigma} )$, so $\m{M} \models^{\Delta} \alpha^{\Sigma}$ or $\m{M} \models^{\Delta} \neg ( \alpha^{\Sigma} )$. In the first case, if $\Pi$ only contains the rules of lemma \ref{resolution}, then the derivation gives us $\m{M} \models^{\Delta} \beta^{\Omega}$. If $\beta^{\Omega} \in \bs{F}_{n}$, then, by the fitting relation and lemma \ref{countingLemma}, we know that $s(\Delta)$ is even. If we take some model $\m{H} = \ob \m{W} , \$ , \m{V} , z \cb$, such that $\m{M} \multimap_{s(\Delta)} \m{H}$ and $\m{H} \models \beta^{\Omega}$, then, by definition, $\m{H} \models \alpha^{\Sigma} \ra \beta^{\Omega}$. So, by definition, $\beta^{\Omega} \models_{\m{M}:s(\Delta)} \alpha^{\Sigma} \ra \beta^{\Omega}$, which means, by lemma \ref{lemmaConsequence}, that $\m{M} \models^{\Delta} \alpha^{\Sigma} \ra \beta^{\Omega}$. If $\beta^{\Omega} \in \bs{F}_{w}$, then, by the fitting relation and lemma \ref{countingLemma}, we know that $s(\Delta)$ is odd. If we take some template $\m{T} = \ob \m{W} , \$ , \m{V} , z , L\cb$, such that $\m{M} \multimap_{s(\Delta)} \m{T}$ and $\m{T} \models \beta^{\Omega}$, then, by definition, $\m{T} \models \alpha^{\Sigma} \ra \beta^{\Omega}$. So, by definition, $\beta^{\Omega} \models_{\m{M}:s(\Delta)} \alpha^{\Sigma} \ra \beta^{\Omega}$, which means, by lemma \ref{lemmaConsequence}, that $\m{M} \models^{\Delta} \alpha^{\Sigma} \ra \beta^{\Omega}$. In the case where $\m{M} \models^{\Delta} \neg (\alpha^{\Sigma})$, if $\neg (\alpha^{\Sigma}) \in \bs{F}_{n}$, then, by the fitting relation and lemma \ref{countingLemma}, we know that $s(\Delta)$ is even. If we take some model $\m{H} = \ob \m{W} , \$ , \m{V} , z \cb$, such that $\m{M} \multimap_{s(\Delta)} \m{H}$ and $\m{H} \models \neg (\alpha^{\Sigma})$, then, by definition, $\m{H} \models \alpha^{\Sigma} \ra \beta^{\Omega}$. So, by definition, $\neg (\alpha^{\Sigma}) \models_{\m{M}:s(\Delta)} \alpha^{\Sigma} \ra \beta^{\Omega}$, which means, by lemma \ref{lemmaConsequence}, that $\m{M} \models^{\Delta} \alpha^{\Sigma} \ra \beta^{\Omega}$. If $\neg (\alpha^{\Sigma}) \in \bs{F}_{w}$, then, by the fitting relation and lemma \ref{countingLemma}, we know that $s(\Delta)$ is odd. If we take some template $\m{T} = \ob \m{W} , \$ , \m{V} , z , L\cb$, such that $\m{M} \multimap_{s(\Delta)} \m{T}$ and $\m{T} \models \neg (\alpha^{\Sigma})$, then, by definition, $\m{T} \models \alpha^{\Sigma} \ra \beta^{\Omega}$. So, by definition, $\neg (\alpha^{\Sigma}) \models_{\m{M}:s(\Delta)} \alpha^{\Sigma} \ra \beta^{\Omega}$, which means, by lemma \ref{lemmaConsequence}, that $\m{M} \models^{\Delta} \alpha^{\Sigma} \ra \beta^{\Omega}$. So the hypothesis is unnecessary and may be discharged;

\setcounter{enumi}{17}
\item If $\m{M} \models^{\Delta,\bullet} \alpha^{\Sigma}$, then, by the rule 14, $\m{M} \models^{\Delta} \alpha^{\Sigma,\bullet}$. By the fact that $\alpha^{\Sigma,\bullet} \in \bs{F}_{w}$, the fitting relation and lemma \ref{countingLemma}, we know that $s(\Delta)$ is odd. If we take some template $\m{T} = \ob \m{W} , \$ , \m{V} , z , N \cb$, such that $\m{M} \multimap_{s(\Delta)} \m{T}$ and $\m{T} \models \alpha^{\Sigma,\bullet}$, then, $N \in \$(z)$ and $\exists w \in N : \ob \m{W} , \$ , \m{V} , w \cb \models \alpha^{\Sigma}$. Since the variable $u$ occurs nowhere else in the derivation, $u$ can be taken as a denotation of the given existential and we conclude that $\ob \m{W} , \$ , \m{V} , \sigma(u) \cb \models \alpha^{\Sigma}$, what means that $\m{T} \models \alpha^{\Sigma,u}$. So, by definition, $\alpha^{\Sigma,\bullet} \models_{\m{M}:s(\Delta)} \alpha^{\Sigma,u}$, which means, by lemma \ref{lemmaConsequence}, that $\m{M} \models^{\Delta} \alpha^{\Sigma,u}$. We conclude, using the rule 13, that $\m{M} \models^{\Delta,u} \alpha^{\Sigma}$. If $\Pi$ only contains rules of the lemma \ref{resolution}, then we can conclude $\m{M} \models^{\Theta} \beta^{\Omega}$. Then we can discharge the hypothesis because we know that any denotation of the existential may provide the same conclusion;

\setcounter{enumi}{19}
\item If $\m{M} \models^{\Delta,\circledcirc} \alpha^{\Sigma}$, then, by the rule 14, $\m{M} \models^{\Delta} \alpha^{\Sigma,\circledcirc}$. By the fact that $\alpha^{\Sigma,\circledcirc} \in \bs{F}_{n}$, the fitting relation and lemma \ref{countingLemma}, we know that $s(\Delta)$ is even. If we take some model $\m{H} = \ob \m{W} , \$ , \m{V} , z \cb$, such that $\m{M} \multimap_{s(\Delta)} \m{H}$ and $\m{H} \models \alpha^{\Sigma,\circledcirc}$, then $\exists M \in \$(z) : \ob \m{W} , \$ , \m{V} , z , M \cb \models \alpha^{\Sigma}$. Since the variable $N$ occurs nowhere else in the derivation, $N$ can be taken as a denotation of the given existential and we conclude that $\ob \m{W} , \$ , \m{V} , z , \sigma(N) \cb \models \alpha^{\Sigma}$, what means that $\m{H} \models \alpha^{\Sigma,N}$. So, by definition, $\alpha^{\Sigma,\circledcirc} \models_{\m{M}:s(\Delta)} \alpha^{\Sigma,N}$, which means, by lemma \ref{lemmaConsequence}, that $\m{M} \models^{\Delta} \alpha^{\Sigma,N}$. We conclude, using the rule 13, that $\m{M} \models^{\Delta,N} \alpha^{\Sigma}$. If $\Pi$ only contains rules of the lemma \ref{resolution}, then we can conclude $\m{M} \models^{\Theta} \beta^{\Omega}$. Then we can discharge the hypothesis because we know that any denotation of the existential may provide the same conclusion;

\setcounter{enumi}{26}
\item From rule 14, the fitting relation, and lemma \ref{countingLemma}, we know that $s(\Delta)$ is even. If we take some model $\m{H} = \ob \m{W} , \$ , \m{V} , z \cb$, such that $\m{M} \multimap_{s(\Delta)} \m{H}$, we know that the neighbourhoods of $\$(z)$ are in total order for the inclusion relation. Given any two neighbourhood variables $M$ and $N$, we know that $\sigma(M) \in \$(z)$, $\sigma(N) \in \$(z)$ and either $\sigma(M) \subset \sigma(N)$ or $\sigma(N) \subset \sigma(M)$. This can be expressed by $\m{H} \models (\shneg N)^{M} \vee (\shneg M)^{N}$. By definition, $\m{H} \models (\shneg N)^{M}$ or $\m{H} \models (\shneg M)^{N}$, then, by definition, $\m{M} \models^{\Delta} (\shneg N)^{M}$ or $\m{M} \models^{\Delta} (\shneg M)^{N}$ and, using rule 13, $\m{M} \models^{\Delta,M} \shneg N$ or $\m{M} \models^{\Delta,N} \shneg M$. If the subderivations $\Pi_{1}$ and $\Pi_{2}$ only contains the rules of lemma 8, then $\m{M} \models^{\Theta} \alpha^{\Sigma}$ and the hypothesis may be discharged.
\item Follow the same argument for rule 28.
\item From rule 14, the fitting relation, and lemma \ref{countingLemma}, we know that $s(\Delta)$ is even. If we take some model $\m{H} = \ob \m{W} , \$ , \m{V} , z \cb$, such that $\m{M} \multimap_{s(\Delta)} \m{H}$, we know that the neighbourhoods of $\$(z)$ are in total order for the inclusion relation. Given a neighbourhood variable $M$, we know that, for every neighbourhood variable $N$, either $\sigma(M) \subset \sigma(N)$ or $\sigma(N) \subset \sigma(M)$. This can be expressed by $\m{H} \models (\shneg N)^{M} \vee (\shpos N)^{M}$. By definition, $\m{H} \models (\shneg N)^{M}$ or $\m{H} \models (\shpos N)^{M}$, then, by definition, $\m{M} \models^{\Delta} (\shneg N)^{M}$ or $\m{M} \models^{\Delta} (\shpos N)^{M}$ and, using rule 13, $\m{M} \models^{\Delta,M} \shneg N$ or $\m{M} \models^{\Delta,M} \shpos N$. If the subderivations $\Pi_{1}$ and $\Pi_{2}$ only contains the rules of lemma 8, then $\m{M} \models^{\Theta} \alpha^{\Sigma}$ and the hypothesis may be discharged.
\end{enumerate}
\noindent Inductive case: for every rule, we suppose that the subderivations ($\Pi$) were only composed by rules of the lemma \ref{resolution}. If some derivation may contains all rules of the PUC-ND, then there must be an application of the rules of the present lemma that contains only the rules of the lemma \ref{resolution}, because the derivation is finite and the subderivations have a positive number of application of rules. Those cases are covered by the Base argument and, for that reason, they preserve the resolution relation. The next step is to consider all application of the rules of the present lemma that may have one application of the rules $5, 7, 11, 19, 20, 28, 29$ or $30$. Then, step by step, we cover all possible nested application of the rules of the present lemma.\end{proof}

\begin{definition} \label{derivability}
Given the formulas $\alpha^{\Sigma}$ and $\beta^{\Omega}$, the relation $\alpha^{\Sigma} \vdash^{\Delta}_{\Theta} \beta^{\Omega}$ of \textit{derivability} is defined iff there is a derivation that concludes $\beta^{\Omega}$ in the context $\Theta$ and that may only have $\alpha^{\Sigma}$ in the context $\Delta$ as open hypothesis. If $\Gamma \subset \bs{F}_{n}$ or $\Gamma \subset \bs{F}_{w}$, the relation $\Gamma \vdash^{\Delta}_{\Theta} \alpha^{\Sigma}$ of derivability is defined iff there is a derivation that concludes $\alpha^{\Sigma}$ in the context $\Theta$ and that only has as open hypothesis the formulas of $\Gamma$ in the context $\Delta$.
\end{definition}

\begin{definition}
$\alpha^{\Sigma}$ is a \textit{theorem} iff $\vdash \alpha^{\Sigma}$.
\end{definition}

\begin{theorem} \label{soundness}
$\Gamma \vdash \alpha^{\Sigma}$ implies $\Gamma \models \alpha^{\Sigma}$ (Soundness).
\end{theorem}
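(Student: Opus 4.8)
The plan is to reduce Soundness to the resolution-preservation statement already established in Lemma \ref{completeResolutionPUC-ND}, after checking that satisfaction and resolution coincide in the empty context. First I would unfold the notation: by Definition \ref{derivability} the bare judgement $\Gamma \vdash \alpha^{\Sigma}$ is the instance $\Gamma \vdash^{\emptyset}_{\emptyset} \alpha^{\Sigma}$, so there is a derivation $\mathcal{D}$ whose conclusion $\alpha^{\Sigma}$ sits in the empty context and all of whose open hypotheses are formulas of $\Gamma$ taken in the empty context. By Lemma \ref{countingLemma} the empty context lies in $\bs{C}_{w}$, so $\alpha^{\Sigma}$ and every formula of $\Gamma$ belong to $\bs{F}_{n}$, and the relevant clause of Definition \ref{logicalConsequenceDf} is the model-based one.

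Next I would observe that for the empty context resolution is nothing but satisfaction: since $\overline{\emptyset}$ is empty, Definition \ref{resolutionDf} gives $\m{M} \models^{\emptyset} \beta^{\Omega}$ iff $\m{M} \models \beta^{\Omega}$. Now fix an arbitrary model $\m{M} = \ob \m{W} , \$ , \m{V} , \chi \cb$ that satisfies every formula of $\Gamma$; by the previous remark this is the same as $\m{M} \models^{\emptyset} \gamma$ for each $\gamma \in \Gamma$, i.e. $\m{M}$ resolves every open hypothesis of $\mathcal{D}$ in its (empty) context.

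The core step is then an induction on the structure of $\mathcal{D}$, establishing the general claim that whenever a model resolves all the open hypotheses of a subderivation in their respective contexts, it also resolves the conclusion in its context. The base cases are the assumption leaves and the axiomatic rule $30$, both immediate. Every inductive step is exactly one of the rule-by-rule arguments of Lemmas \ref{resolution} and \ref{completeResolutionPUC-ND}, which show, for the fixed model $\m{M}$, that resolution of the premises forces resolution of the conclusion, and which already account for the rules that discharge hypotheses ($5, 7, 11, 18, 20, 27, 28, 29$). Applying this to $\mathcal{D}$ yields $\m{M} \models^{\emptyset} \alpha^{\Sigma}$, hence $\m{M} \models \alpha^{\Sigma}$; as $\m{M}$ was an arbitrary model of $\Gamma$, Definition \ref{logicalConsequenceDf} gives $\Gamma \models \alpha^{\Sigma}$.

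I expect the main obstacle to be bookkeeping rather than mathematics: the induction hypothesis must be phrased so that it quantifies over \emph{all} open hypotheses of each subderivation together with their individual contexts $\Delta_{i}$, not merely the empty-context conclusion, since the context-shifting rules $13$--$22$ move the context between $\bs{C}_{n}$ and $\bs{C}_{w}$ as one ascends the tree. Because Lemma \ref{completeResolutionPUC-ND} has already discharged this difficulty rule by rule for a single reference model, the remaining work is only to thread the fixed $\m{M}$ through the derivation and to verify that the eigenvariable side conditions (the restrictions of rules $15, 18, 20, 21$) are respected, exactly as those lemmas assume.
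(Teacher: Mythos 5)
Your proposal is correct and follows essentially the same route as the paper: reduce $\Gamma \vdash \alpha^{\Sigma}$ to a derivation whose open hypotheses lie in $\Gamma$ in the empty context, note that resolution in the empty context is just satisfaction, and conclude via the resolution-preservation property of PUC-ND (Lemma \ref{completeResolutionPUC-ND}). The only difference is presentational — you make explicit the induction over the derivation tree that the paper leaves packaged inside that lemma.
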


\begin{proof}
The fitting restriction of the rules of PUC-ND ensures that $\alpha^{\Sigma} \in \bs{F}_{n}$ because it appears in the empty context. The same conclusion follows for every formula of $\Gamma$. The derivability assures that there is a derivation that concludes $\alpha^{\Sigma}$ and takes as open hypothesis a subset of $\Gamma$, which we call $\Gamma'$. If we take a model $\m{M}$ that satisfies every formula of $\Gamma$, then it also satisfies every formula of $\Gamma'$. So, $\m{M} \models \gamma^{\Theta}$, for every $\gamma^{\Theta} \in \Gamma'$. But this means, by definition, that, for every wff of $\Gamma'$, the resolution relation holds with the empty context. Then, from lemma \ref{completeResolutionPUC-ND}, we know that $\m{M} \models \alpha^{\Sigma}$. So, every model, that satisfies every formula of $\Gamma$, satisfies $\alpha^{\Sigma}$ and, by definition, $\Gamma \models \alpha^{\Sigma}$.\end{proof}

In order to prove the converse implication, we use maximal consistent sets to prove completeness for the fragment $\{\wedge, \ra, \bullet, \circledcirc, \circledast\}$ of the language. The label $\circledcirc$ is not definable from $\circledast$ and vice-versa because the chosen logic for neighbourhoods is a free logic \cite{Lambert}. The reader can see the propositional classic logic case of this way of proving completeness in [17]. But for the completeness proof we must restrict the formulas to \textit{sentences} due to occurrences of variables.

\begin{definition}
Given $\alpha^{\Sigma} \in \bs{F}_{n}$, if $\alpha^{\Sigma}$ has no variables in the attributes of its subformulas nor any subformula of the shape $\shneg N$ or $\shpos N$, then $\alpha^{\Sigma} \in \bs{S}_{n}$. By analogy, we can construct $\bs{S}_{w}$ from $\bs{F}_{w}$.
\end{definition}

\begin{definition}
Given $\Gamma \subset \bs{S}_{n}$ ($\Gamma \subset \bs{S}_{w}$), we say that $\Gamma$ is \textit{n-inconsistent} (\textit{w-inconsistent}) if $\Gamma \vdash \bot_{n}$ ($\Gamma \vdash^{N}_{N} \bot_{w}$, where $N$ is a neighbourhood variable that does not occur in $\Gamma$) and \textit{n-consistent} (\textit{w-consistent}) if $\Gamma \not\vdash \bot_{n}$ ($\Gamma \not\vdash^{N}_{N} \bot_{w}$).
\end{definition}

\begin{lemma}
Given $\Gamma \subset \bs{S}_{n}$ ($\Gamma \subset \bs{S}_{w}$), the following three conditions are equivalents:
\begin{enumerate}[1.]
\item $\Gamma$ is n-inconsistent;
\item $\Gamma \vdash \phi^{\Theta}$, for any formula $\phi^{\Theta}$ that fits into the empty context;
\item There is at least a formula $\phi^{\Theta}$, such that $\Gamma \vdash \phi^{\Theta}$ and $\Gamma \vdash \phi^{\Theta} \ra \bot_{n}$
\end{enumerate}
\end{lemma}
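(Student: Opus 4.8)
The plan is to prove the three conditions equivalent by establishing the cycle $1 \Rightarrow 2 \Rightarrow 3 \Rightarrow 1$. I would carry out the argument in full for $\Gamma \subset \bs{S}_{n}$ and then observe that the case $\Gamma \subset \bs{S}_{w}$ is word-for-word the same once the empty context is replaced throughout by the context $\{N\}$ for a neighbourhood variable $N$ not occurring in $\Gamma$, $\bot_{n}$ is replaced by $\bot_{w}$, and $\vdash$ by $\vdash^{N}_{N}$. Every rule invoked below applies equally in that context, since $N$ is a neighbourhood variable and not an existential label.

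For $1 \Rightarrow 2$, suppose $\Gamma \vdash \bot_{n}$ and let $\phi^{\Theta}$ be any wff fitting the empty context, that is $\phi^{\Theta} \in \bs{F}_{n}$. Appending to a derivation of $\bot_{n}$ a single application of rule $8$ (the intuitionistic absurdity rule) in the empty context yields $\phi^{\Theta}$; its side condition is met because both $\bot_{n}$ and $\phi^{\Theta}$ fit the empty context. Hence $\Gamma \vdash \phi^{\Theta}$. For $2 \Rightarrow 3$, condition $2$ furnishes a derivation of \emph{every} fitting formula, so it suffices to exhibit a single witness: taking $\phi^{\Theta} = \bot_{n}$, both $\bot_{n}$ and $\bot_{n} \ra \bot_{n}$ fit the empty context (the latter by the wff-formation rule for $\ra$, since $\bot_{n} \in \bs{F}_{n}$), and condition $2$ then gives $\Gamma \vdash \bot_{n}$ and $\Gamma \vdash \bot_{n} \ra \bot_{n}$, which is precisely condition $3$.

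For $3 \Rightarrow 1$, given derivations of $\phi^{\Theta}$ and of $\phi^{\Theta} \ra \bot_{n}$ from $\Gamma$ in the empty context, one application of rule $12$ (modus ponens) concludes $\bot_{n}$. The restriction that the context carry no existential quantifier is vacuously satisfied since the context is empty, the fitting conditions hold by hypothesis, and the open hypotheses remain a subset of $\Gamma$; thus $\Gamma \vdash \bot_{n}$, i.e.\ $\Gamma$ is n-inconsistent.

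Since each implication reduces to a single rule application, there is no genuine obstacle here; the only points needing care are the book-keeping of the fitting relation (checking that $\phi^{\Theta} \ra \bot_{n}$ is a wff fitting the empty context whenever $\phi^{\Theta}$ is) and, in the w-case, confirming that the fresh neighbourhood variable $N$ is not an existential label so that rules $8$ and $12$ remain applicable in the context $\{N\}$ where the relevant absurdity is $\bot_{w}$.
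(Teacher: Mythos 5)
Your proposal is correct and follows essentially the same route as the paper: $1 \Rightarrow 2$ by appending one application of rule 8 to a derivation of $\bot_{n}$, $2 \Rightarrow 3$ trivially (you supply the explicit witness $\bot_{n}$, which the paper leaves implicit), and $3 \Rightarrow 1$ by rule 12, noting the empty context contains no existential quantifier. Your extra care with the w-case (replacing the empty context by $\{N\}$ with $N$ fresh and checking that a neighbourhood variable is not an existential label) is just a spelled-out version of the paper's closing remark that the same argument holds for $\Gamma \subset \bs{S}_{w}$.
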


\begin{proof} $1 \Rightarrow 2)$ If $\Gamma \vdash \bot_{n}$, then there is a derivation $\m{D}$ with conclusion $\bot_{n}$ and hypothesis in $\Gamma$. To $\m{D}$ we can add one inference using the rule 8 of PUC-ND to conclude any formula that fits into the empty context. $2 \Rightarrow 3)$ Trivial; $3 \Rightarrow 1)$ If $\Gamma \vdash \phi^{\Theta}$ and $\Gamma \vdash \phi^{\Theta} \ra \bot_{n}$, then there is a derivation for each formula with the hypothesis in $\Gamma$. Combining the derivations, we conclude $\bot_{n}$ using rule 12 of the PUC-ND. There is no problem with existential quantifiers in the context because we conclude the formulas in the empty context. So, $\Gamma \vdash \bot_{n}$. The same holds for $\Gamma \subset \bs{S}_{w}$.\end{proof}

\begin{lemma} \label{modelForSet}
Given $\Gamma \subset \bs{S}_{n}$ ($\Gamma \subset \bs{S}_{w}$), if there is a model (template) that satisfies every formula of $\Gamma$, then $\Gamma$ is n-consistent (w-consistent).
\end{lemma}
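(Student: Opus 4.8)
The plan is to prove the contrapositive in each case, namely that an inconsistent $\Gamma$ cannot be satisfied. Both cases rest on the fact that no structure satisfies the relevant falsum constant, combined with the preservation of the resolution relation established in Lemma~\ref{completeResolutionPUC-ND}.

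For the $n$-case I would suppose that $\Gamma \subset \bs{S}_{n}$ is $n$-inconsistent, so $\Gamma \vdash \bot_{n}$, and let $\m{M} = \ob \m{W} , \$ , \m{V} , \chi \cb$ be any model satisfying every formula of $\Gamma$. Reading satisfaction as resolution in the empty context, $\m{M} \models^{\emptyset} \gamma^{\Theta}$ for each $\gamma^{\Theta} \in \Gamma$, since a formula of $\bs{S}_{n} \subset \bs{F}_{n}$ fits the empty context and $\overline{\emptyset} = \emptyset$. Applying Lemma~\ref{completeResolutionPUC-ND} along the derivation of $\bot_{n}$ would then yield $\m{M} \models^{\emptyset} \bot_{n}$, i.e. $\m{M} \models \bot_{n}$. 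But clause~(1) of Definition~\ref{satisfactionDefinition} gives $\chi \notin \m{V}(\bot_{n})$, so no model satisfies $\bot_{n}$ — a contradiction. (Equivalently one may quote Theorem~\ref{soundness} directly here, since the derivation sits in the empty context: $\Gamma \vdash \bot_{n}$ gives $\Gamma \models \bot_{n}$.) Hence any $\Gamma$ with a satisfying model is $n$-consistent.

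For the \emph{$w$-case} the extra difficulty is the non-empty context: $w$-inconsistency means $\Gamma \vdash^{N}_{N} \bot_{w}$ for a neighbourhood variable $N$ not occurring in $\Gamma$, and the derivation lives in the context $\{N\}$, so Theorem~\ref{soundness} (stated only for the empty context) does not apply verbatim and I would instead use resolution preservation at $\{N\}$. I would assume a template $\m{T} = \ob \m{W} , \$ , \m{V} , \chi , N_{0} \cb$ satisfying every formula of $\Gamma$, where $N_{0} \in \$(\chi)$ by the definition of template, set $\m{M} = \ob \m{W} , \$ , \m{V} , \chi \cb$, and fix a variable assignment $\sigma$ with $\sigma(N) = N_{0}$ — legitimate precisely because $N$ is fresh for $\Gamma$. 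Each $\gamma^{\Theta} \in \Gamma$ lies in $\bs{S}_{w} \subset \bs{F}_{w}$, so $\gamma^{\Theta,N} \in \bs{F}_{n}$ and $\gamma^{\Theta}$ fits into $\{N\}$; by clause~(8) of Definition~\ref{satisfactionDefinition}, $\m{M} \models \gamma^{\Theta,N}$ iff $\m{T} \models \gamma^{\Theta}$, and the right-hand side holds by assumption, giving $\m{M} \models^{\{N\}} \gamma^{\Theta}$ for every $\gamma^{\Theta} \in \Gamma$.

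Applying Lemma~\ref{completeResolutionPUC-ND} to the derivation of $\bot_{w}$ in context $\{N\}$ from the hypotheses $\Gamma$ in context $\{N\}$ would then give $\m{M} \models^{\{N\}} \bot_{w}$, i.e. $\m{M} \models \bot_{w}^{N}$, which by clause~(8) again means $\m{T} \models \bot_{w}$, contradicting clause~(18) of Definition~\ref{satisfactionDefinition} that $\m{T} \not\models \bot_{w}$ for every template. Thus no template satisfies a $w$-inconsistent $\Gamma$, completing the contrapositive. The only delicate point is the bookkeeping in the $w$-case: one must line up the fresh variable $N$ with the reference neighbourhood $N_{0}$ of the given template so that satisfaction of the hypotheses transfers to resolution at $\{N\}$. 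Because the formulas of $\Gamma$, being in $\bs{S}_{w}$, carry no variables in their attributes, this transfer is immediate and the freshness of $N$ guarantees no clash with the variable conditions of the derivation.
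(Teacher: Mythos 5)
Your proof is correct, and for the $n$-case it is essentially the paper's own argument: the contrapositive via Theorem~\ref{soundness} together with the fact that no model satisfies $\bot_{n}$. The genuine difference is in the $w$-case, which the paper dismisses with the single sentence that the same holds for $\Gamma \subset \bs{S}_{w}$. You rightly observe that this is not immediate: $w$-inconsistency is defined as $\Gamma \vdash^{N}_{N} \bot_{w}$, a derivation living in the non-empty context $\{N\}$, whereas Theorem~\ref{soundness} is stated only for hypotheses and conclusion in the empty context. Your repair --- fixing $\sigma(N) = N_{0}$ for the reference neighbourhood $N_{0}$ of the given template (legitimate because $N$ is fresh for $\Gamma$, and sentences of $\bs{S}_{w}$ carry no variables whose denotation could clash), converting satisfaction by the template $\m{T}$ into resolution of each hypothesis at $\{N\}$ through clause (8) of Definition~\ref{satisfactionDefinition}, and then invoking the resolution-preservation Lemma~\ref{completeResolutionPUC-ND} to force $\m{M} \models \bot_{w}^{N}$, hence $\m{T} \models \bot_{w}$, against clause (18) --- is exactly the argument needed to make the paper's claim literally true with the definitions as stated. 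So your route is the same in spirit (soundness machinery plus unsatisfiability of falsum) but is more careful precisely where the paper hand-waves; in the $n$-case your alternative direct use of resolution preservation buys only uniformity of method, since there Theorem~\ref{soundness} already applies verbatim.
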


\begin{proof}
If $\Gamma \vdash \bot_{n}$, then, by theorem \ref{soundness}, $\Gamma \models \bot_{n}$. If there is model that satisfies every formula of $\Gamma$, then it also satisfies $\bot_{n}$ by the definition of logical consequence. But there is no model that satisfies $\bot_{n}$ because of the definition of the truth evaluation function. The same holds for $\Gamma \subset \bs{S}_{w}$.\end{proof}

\begin{lemma} \label{consistency}
Given $\Gamma \subset \bs{S}_{n}$: 1. If $\Gamma \cup \{\phi^{\Theta} \ra \bot_{n}\} \vdash \bot_{n}$, then $\Gamma \vdash \phi^{\Theta}$; 2. If $\Gamma \cup \{\phi^{\Theta}\} \vdash \bot_{n}$, then $\Gamma \vdash \phi^{\Theta} \ra \bot_{n}$. Likewise for $\Gamma \subset \bs{S}_{w}$.
\end{lemma}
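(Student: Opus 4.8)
The plan is to read both implications as instances of the deduction-theorem machinery already built into PUC-ND: part 2 is exactly $\ra$-introduction (rule 11) and part 1 is exactly the classical absurdity rule (rule 7). Since the completeness fragment $\{\wedge,\ra,\bullet,\circledcirc,\circledast\}$ carries no primitive $\neg$, the hypothesis $\phi^{\Theta}\ra\bot_{n}$ is precisely what plays the role of the discharged negation $\neg(\phi^{\Theta})$ appearing in rule 7, so no separate translation between $\neg$ and $\ra\bot$ is needed. Throughout, $\vdash$ means derivability in the empty context (Definition \ref{derivability} with both contexts empty), so every open hypothesis sits in the empty context and the two rules will be applied with empty $\Delta$.

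For part 2, I would take a derivation witnessing $\Gamma\cup\{\phi^{\Theta}\}\vdash\bot_{n}$, i.e.\ a derivation of $\bot_{n}$ in the empty context whose open hypotheses lie in $\Gamma\cup\{\phi^{\Theta}\}$. Applying rule 11 with $\Delta$ empty, $\alpha^{\Sigma}=\phi^{\Theta}$ and $\beta^{\Omega}=\bot_{n}$ discharges $\phi^{\Theta}$ and yields $\phi^{\Theta}\ra\bot_{n}$ in the empty context, leaving only hypotheses from $\Gamma$; hence $\Gamma\vdash\phi^{\Theta}\ra\bot_{n}$. The only side condition to check is fitting, which holds because $\phi^{\Theta},\bot_{n}\in\bs{F}_{n}$ and therefore $\phi^{\Theta}\ra\bot_{n}\in\bs{F}_{n}$ fits the empty context.

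For part 1, I would start from a derivation witnessing $\Gamma\cup\{\phi^{\Theta}\ra\bot_{n}\}\vdash\bot_{n}$ and apply rule 7 with $\Delta$ empty and $\alpha^{\Sigma}=\phi^{\Theta}$, reading the open hypothesis $\phi^{\Theta}\ra\bot_{n}$ as the discharged $\neg(\phi^{\Theta})$. This discharges that hypothesis and concludes $\phi^{\Theta}$ in the empty context, so $\Gamma\vdash\phi^{\Theta}$; again the fitting condition is immediate since $\phi^{\Theta},\bot_{n}\in\bs{F}_{n}$. Vacuous discharge causes no trouble: if the hypothesis is never actually used, rule 8 gives the same conclusion directly.

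Finally, the $\bs{S}_{w}$ case is identical after replacing the empty context by the singleton $\{N\}$ for a fresh neighbourhood variable $N$, $\bot_{n}$ by $\bot_{w}$, and $\vdash$ by $\vdash^{N}_{N}$, so that rules 11 and 7 are applied with $\Delta=\{N\}$. Here one checks that $\phi^{\Theta}$ and $\bot_{w}$ fit into $\{N\}$ --- indeed $\phi^{\Theta,N},\bot_{w}^{N}\in\bs{F}_{n}$ by the labelling clauses of Definition \ref{wffDefinition} --- and that $N$ stays absent from $\Gamma$, which is automatic since $N$ is chosen fresh and the formulas of $\bs{S}_{w}$ carry no variables. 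The work is entirely bookkeeping; the only point needing care, and the closest thing to an obstacle, is verifying that the contexts line up (empty versus $\{N\}$) and that $\phi^{\Theta}\ra\bot$ is legitimately identified with the negation discharged by rule 7 in the $\neg$-free fragment.
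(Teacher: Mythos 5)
Your proposal is correct and matches the paper's own proof: part 2 is discharged by $\ra$-introduction (rule 11), part 1 by the classical absurdity rule (rule 7) under the identification $\neg(\phi^{\Theta}) \equiv \phi^{\Theta} \ra \bot_{n}$, and the $\bs{S}_{w}$ case is the same argument relativized to the fresh-variable context. Your treatment of fitting conditions and the $\vdash^{N}_{N}$ bookkeeping is more explicit than the paper's, but the underlying argument is identical.
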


\begin{proof}
The first (second) assumption implies that there is a derivation $\m{D}$ ($\m{D}'$) with hypothesis in $\Gamma \cup \{\phi^{\Theta} \ra \bot_{n}\}$ ($\Gamma \cup \{\phi^{\Theta}\}$) and conclusion $\bot_{n}$. Since $\neg (\phi^{\Theta}) \equiv \phi^{\Theta} \ra \bot_{n}$, we can apply the rule $\bot$-classical ($\ra$-introduction) and eliminate all occurrences of $\phi^{\Theta} \ra \bot_{n}$ ($\phi^{\Theta}$) as hypothesis, then we obtain a derivation with hypothesis in $\Gamma$ and conclusion $\phi^{\Theta}$ ($\phi^{\Theta} \ra \bot_{n}$). The same argument holds for $\Gamma \subset \bs{S}_{w}$.\end{proof}

\begin{lemma} \label{counting}
$\bs{S}_{n}$ and $\bs{S}_{w}$ are denumerable.
\end{lemma}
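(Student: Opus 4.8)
The plan is to bound $\bs{S}_{n}$ and $\bs{S}_{w}$ both above and below. For the upper bound, observe that by Definition~\ref{wffDefinition} every sentence is a finite string over the alphabet consisting of the connectives $\wedge,\vee,\ra,\neg$, the quantifier labels $\circledast,\circledcirc,\ast,\bullet$, the auxiliary symbols $(,)$, the constants $\top_{n},\bot_{n},\top_{w},\bot_{w}$, and the proposition symbols $p_{0},p_{1},\ldots$. Since the sentence restriction forbids variable labels and the inclusion atoms, this alphabet excludes the (themselves countable) families of neighbourhood and world variables; what remains is a finite set of symbols together with the denumerable set of proposition symbols, hence a denumerable alphabet, call it $A$. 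First I would invoke the standard fact that the set of all finite strings over a countable alphabet is itself countable (for instance via a G\"odel-style prime coding into $\mathbb{N}$, or by enumerating strings first by length and then lexicographically). Because $\bs{S}_{n}$ and $\bs{S}_{w}$ are subsets of the set of finite strings over $A$, each is at most denumerable.

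For the lower bound I would exhibit an infinite family of pairwise distinct sentences in each set. The atomic formulas $p_{0},p_{1},p_{2},\ldots$ all lie in $\bs{F}_{n}$ by clause~4 of Definition~\ref{wffDefinition}; each has empty attribute and contains no variable label and no inclusion atom, so $\{\,p_{i} : i \in \mathbb{N}\,\} \subseteq \bs{S}_{n}$, and as the $p_{i}$ are pairwise distinct this shows $\bs{S}_{n}$ is infinite. For $\bs{S}_{w}$, applying clause~9 with the world label $\ast \in \bs{L}_{w}$ gives $p_{i}^{\ast} \in \bs{F}_{w}$; these are again sentences and pairwise distinct, so $\{\,p_{i}^{\ast} : i \in \mathbb{N}\,\} \subseteq \bs{S}_{w}$ witnesses that $\bs{S}_{w}$ is infinite.

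Combining the two bounds, each of $\bs{S}_{n}$ and $\bs{S}_{w}$ is an infinite subset of a denumerable set and is therefore denumerable. I expect no genuine obstacle here: the only point that needs care is the bookkeeping in the definition of a sentence, namely confirming that removing the variable labels and the inclusion atoms still leaves a countable alphabet and that the exhibited witnesses really meet the sentence restriction, rather than any substantive combinatorial argument.
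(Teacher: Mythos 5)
Your proof is correct and follows essentially the same route as the paper, which likewise observes that each sentence is a finite string over the (countable) stock of symbols and then enumerates $\bs{S}_{n}$ and $\bs{S}_{w}$ lexically. The only difference is that you are more careful: you make the countable-alphabet fact and the infinitude witnesses ($p_{i}$ and $p_{i}^{\ast}$) explicit, whereas the paper's one-line argument leaves the lower bound implicit.
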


\begin{proof}
Every $\alpha^{\Sigma} \in \bs{S}_{n}$ contains a finite number of proposition symbols and logical operators. So, any lexical order provide a bijection from $\bs{S}_{n}$ to the natural numbers. The same argument works for $\bs{S}_{w}$.\end{proof}

\begin{definition}
$\Gamma \subset \bs{S}_{n}$ ($\Gamma \subset \bs{S}_{w}$) is \textit{maximally n-consistent} (\textit{maximally w-consistent}) iff $\Gamma$ is n-consistent (w-consistent) and it cannot be a proper subset of any other n-consistent (w-consistent) set.
\end{definition}

\begin{lemma} \label{subMax}
Every n-consistent (w-consistent) set is subset of a maximally n-consistent (w-consistent) set.
\end{lemma}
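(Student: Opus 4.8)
The plan is to run a Lindenbaum-style construction, treating the n-consistent case explicitly; the w-consistent case is identical modulo carrying along the fixed fresh neighbourhood variable $N$ and the context $\{N\}$ that appears in the definition of w-consistency via $\vdash^{N}_{N}$. By Lemma \ref{counting}, $\bs{S}_{n}$ is denumerable, so I first fix an enumeration $\phi_{0}^{\Theta_{0}}, \phi_{1}^{\Theta_{1}}, \ldots$ of all its formulas. Starting from the given n-consistent set $\Gamma$, I build an increasing chain by setting $\Gamma_{0} = \Gamma$ and
$$
\Gamma_{k+1} = \begin{cases} \Gamma_{k} \cup \{\phi_{k}^{\Theta_{k}}\} & \text{if } \Gamma_{k} \cup \{\phi_{k}^{\Theta_{k}}\} \text{ is n-consistent,}\\ \Gamma_{k} & \text{otherwise.}\end{cases}
$$
Every $\Gamma_{k}$ is then n-consistent by construction, and I take $\Gamma^{*} = \bigcup_{k \in \Nat} \Gamma_{k}$, which contains $\Gamma$.

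Next I show $\Gamma^{*}$ is n-consistent. Suppose not; then $\Gamma^{*} \vdash \bot_{n}$, so there is a PUC-ND derivation concluding $\bot_{n}$ whose open hypotheses all lie in $\Gamma^{*}$. Since every derivation is a finite tree, it uses only finitely many such hypotheses; as the chain $\Gamma_{0} \subseteq \Gamma_{1} \subseteq \cdots$ is increasing, all of them already occur in a single $\Gamma_{m}$, where $m$ is the maximum of the finitely many indices involved. Hence $\Gamma_{m} \vdash \bot_{n}$, contradicting the n-consistency of $\Gamma_{m}$.

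For maximality, let $\Delta \supsetneq \Gamma^{*}$ be any proper superset (necessarily $\Delta \subseteq \bs{S}_{n}$) and pick $\phi_{k}^{\Theta_{k}} \in \Delta \setminus \Gamma^{*}$. Since $\phi_{k}^{\Theta_{k}} \notin \Gamma^{*}$ and $\Gamma_{k+1} \subseteq \Gamma^{*}$, we have $\Gamma_{k+1} \neq \Gamma_{k} \cup \{\phi_{k}^{\Theta_{k}}\}$, so by the construction the set $\Gamma_{k} \cup \{\phi_{k}^{\Theta_{k}}\}$ was n-inconsistent, i.e. $\Gamma_{k} \cup \{\phi_{k}^{\Theta_{k}}\} \vdash \bot_{n}$. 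As $\Gamma_{k} \cup \{\phi_{k}^{\Theta_{k}}\} \subseteq \Delta$, it follows that $\Delta \vdash \bot_{n}$, so $\Delta$ is n-inconsistent. Thus $\Gamma^{*}$ cannot be a proper subset of any n-consistent set, and, being itself n-consistent, it is maximally n-consistent.

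The main obstacle is the compactness step used for consistency of $\Gamma^{*}$: I must be certain that derivability from the infinite set $\Gamma^{*}$ reduces to derivability from one of the finite stages, which is exactly the finiteness of PUC-ND derivation trees. Additional care is needed in the w-case, where w-consistency is defined through $\vdash^{N}_{N}$ with the side condition that $N$ does not occur in the hypothesis set: I keep this $N$ fixed throughout the enumeration and observe that it remains absent from every $\phi_{k}^{\Theta_{k}}$ added, since by the definition of $\bs{S}_{w}$ sentences carry no neighbourhood variables in the attributes of their subformulas, so the fresh-variable condition survives every extension.
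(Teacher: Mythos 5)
Your proof is correct and takes essentially the same route as the paper's: the identical Lindenbaum chain built over the enumeration supplied by Lemma \ref{counting}, with n-consistency of the union established by the finiteness of PUC-ND derivations. The only differences are that you spell out the maximality argument and the survival of the freshness condition on $N$ in the w-case (via the fact that sentences in $\bs{S}_{w}$ contain no neighbourhood variables), both of which the paper leaves implicit.
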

\begin{proof}
According to the lemma \ref{counting}, we may have a list $\varphi_{0}, \varphi_{1}, \ldots$ of all wff of $\bs{S}_{n}$. We build a non-decreasing sequence of sets $\Gamma_{i}$ such that the union is maximally n-consistent.\\
\noindent $\Gamma_{0} = \Gamma$;\\
\noindent $\Gamma_{k+1} = \Gamma_{k} \cup \{\varphi_{k}\}$ if n-consistent, $\Gamma_{k} $ otherwise;\\
\noindent $\hat{\Gamma} = \bigcup\{\Gamma_{k} \; | \; k \geq 0 \}$.\\
(a) $\Gamma_{k}$ is n-consistent for all $k$: by induction; (b) $\hat{\Gamma}$ is n-consistent: suppose that $\hat{\Gamma} \vdash \bot_{n}$, then for every derivation $\m{D}$ of $\bot_{n}$ with hypothesis in $\hat{\Gamma}$ we have a finite set of hypothesis. By definition, every wff is included in $\hat{\Gamma}$ via a set $\Gamma_{k}$. Then, because the sequence of construction of $\hat{\Gamma}$ is non-decreasing, there is a number $m$, such that $\Gamma_{m}$ contains all hypothesis of $\m{D}$. But $\Gamma_{m}$ is n-consistent and, therefore, cannot derive $\bot_{n}$. The same holds for w-consistent sets.\end{proof}

\begin{lemma} \label{closedDeriv}
If $\Gamma$ is maximally n-consistent (w-consistent) set, then $\Gamma$ is closed under derivability.
\end{lemma}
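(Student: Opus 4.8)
The plan is to argue by contradiction, using the maximality of $\Gamma$ together with the deduction-style Lemma \ref{consistency}. I will treat the n-consistent case in detail; the w-consistent case is entirely analogous, replacing the empty context by the context $\{N\}$ (with $N$ a neighbourhood variable not occurring in $\Gamma$), $\bot_n$ by $\bot_w$, and $\vdash$ by $\vdash^N_N$.

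First I would fix what ``closed under derivability'' means here: if $\Gamma \vdash \alpha^{\Sigma}$ for some sentence $\alpha^{\Sigma} \in \bs{S}_{n}$, then $\alpha^{\Sigma} \in \Gamma$. The fitting restriction guarantees that a formula concluded in the empty context lies in $\bs{F}_{n}$, and restricting to $\bs{S}_{n}$ keeps $\Gamma \cup \{\alpha^{\Sigma}\}$ inside the sentence fragment, so that the predicate of n-consistency genuinely applies to it.

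Next, suppose towards a contradiction that $\Gamma \vdash \alpha^{\Sigma}$ but $\alpha^{\Sigma} \notin \Gamma$. Since $\Gamma$ is maximally n-consistent and $\alpha^{\Sigma} \notin \Gamma$, the set $\Gamma \cup \{\alpha^{\Sigma}\}$ strictly contains $\Gamma$, so by maximality it cannot be n-consistent; hence $\Gamma \cup \{\alpha^{\Sigma}\} \vdash \bot_{n}$. Applying Lemma \ref{consistency}(2) with $\phi^{\Theta} = \alpha^{\Sigma}$ yields $\Gamma \vdash \alpha^{\Sigma} \ra \bot_{n}$. Combining this with the assumed derivation of $\alpha^{\Sigma}$ by one application of $\ra$-elimination (rule 12) in the empty context gives $\Gamma \vdash \bot_{n}$, contradicting the n-consistency of $\Gamma$. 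Therefore $\alpha^{\Sigma} \in \Gamma$.

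The only delicate points are side conditions of the final inference. First, rule 12 forbids an existential quantifier in its context: here the context is empty (respectively $\{N\}$ in the w-case, whose scope is a neighbourhood variable, not $\circledcirc$), so the restriction is met and both premises share the same context. Second, in the w-case one must check that the fixed witness variable $N$ remains fresh after augmenting $\Gamma$; this holds because sentences in $\bs{S}_{w}$ contain no neighbourhood variables at all, so $N$ does not occur in $\Gamma \cup \{\alpha^{\Sigma}\}$ either, and the relation $\vdash^{N}_{N}$ is the correct notion of derivability throughout. I expect this bookkeeping around contexts and freshness to be the main thing to verify, the logical skeleton being the standard maximal-consistent-set argument.
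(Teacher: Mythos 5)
Your proof is correct and follows essentially the same route as the paper's: assume $\Gamma \vdash \alpha^{\Sigma}$ with $\alpha^{\Sigma} \notin \Gamma$, use maximality to get n-inconsistency of $\Gamma \cup \{\alpha^{\Sigma}\}$, apply Lemma \ref{consistency} to obtain $\Gamma \vdash \alpha^{\Sigma} \ra \bot_{n}$, and combine with the assumed derivation via rule 12 to contradict the n-consistency of $\Gamma$. Your additional checks of the side conditions (no existential quantifier in the context for rule 12, freshness of $N$ in the w-case) only make explicit what the paper leaves implicit in its appeal to the earlier three-way equivalence lemma.
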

\begin{proof}
Suppose that $\Gamma \vdash \varphi^{\Theta}$ and $\varphi^{\Theta} \not\in \Gamma$. Then $\Gamma \cup \{\varphi^{\Theta}\}$ must be n-inconsistent by the definition of maximally n-consistent set. By lemma \ref{consistency}, $\Gamma \vdash \varphi^{\Theta} \ra \bot_{n}$, so $\Gamma$ is n-inconsistent. The same argument holds for w-consistent sets.\end{proof}

\begin{lemma} \label{dual}
If $\Gamma$ is maximally n-consistent (w-consistent), then:
\begin{itemize}
\item[(a)] For all $\varphi^{\Theta} \in \bs{S}_{n}$ ($\in \bs{S}_{w}$), either $\varphi^{\Theta} \in \Gamma$ or $\varphi^{\Theta} \ra \bot_{n} \in \Gamma$ ($\varphi^{\Theta} \ra \bot_{w}$);
\item[(b)] For all $\varphi^{\Theta},\psi^{\Upsilon} \in \bs{S}_{n}$ ($\in \bs{S}_{w}$), $\varphi^{\Theta} \ra \psi^{\Upsilon} \in \Gamma$ iff $\varphi^{\Theta} \in \Gamma$ implies $\psi^{\Upsilon} \in \Gamma$. 
\end{itemize}
\end{lemma}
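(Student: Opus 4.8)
The plan is to derive both parts directly from maximal consistency together with the two tools already in place: closure under derivability (Lemma \ref{closedDeriv}) and the deduction-style Lemma \ref{consistency}. I would carry out the argument explicitly for the maximally n-consistent case; the w-consistent case is identical after working throughout in the context $\{N\}$ (the neighbourhood variable witnessing w-consistency) and replacing $\bot_{n}$ by $\bot_{w}$. For that transfer I would only remark that the PUC-ND rules invoked below still apply, since the scope of $\{N\}$ is a neighbourhood variable and hence carries no existential quantifier.

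For part (a), I would split on whether $\varphi^{\Theta} \in \Gamma$. If it is, the disjunction already holds. If $\varphi^{\Theta} \notin \Gamma$, then maximality forces $\Gamma \cup \{\varphi^{\Theta}\}$ to be n-inconsistent, that is $\Gamma \cup \{\varphi^{\Theta}\} \vdash \bot_{n}$; Lemma \ref{consistency}(2) then yields $\Gamma \vdash \varphi^{\Theta} \ra \bot_{n}$, and closure under derivability (Lemma \ref{closedDeriv}) gives $\varphi^{\Theta} \ra \bot_{n} \in \Gamma$. So at least one of the two membership statements holds.

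For part (b), the forward direction is immediate modus ponens: assuming $\varphi^{\Theta} \ra \psi^{\Upsilon} \in \Gamma$ and $\varphi^{\Theta} \in \Gamma$, rule 12 gives $\Gamma \vdash \psi^{\Upsilon}$, hence $\psi^{\Upsilon} \in \Gamma$ by closure. For the converse I would use part (a) to split on $\varphi^{\Theta}$. If $\varphi^{\Theta} \in \Gamma$, the hypothesis delivers $\psi^{\Upsilon} \in \Gamma$, and a vacuous $\ra$-introduction (rule 11, discharging an unused assumption $\varphi^{\Theta}$) produces $\Gamma \vdash \varphi^{\Theta} \ra \psi^{\Upsilon}$. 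If instead $\varphi^{\Theta} \notin \Gamma$, then part (a) gives $\varphi^{\Theta} \ra \bot_{n} \in \Gamma$, so from the assumption $\varphi^{\Theta}$ I would derive $\bot_{n}$ by modus ponens, then $\psi^{\Upsilon}$ by the intuitionistic absurdity rule (rule 8), and finally discharge $\varphi^{\Theta}$ by $\ra$-introduction to obtain $\Gamma \vdash \varphi^{\Theta} \ra \psi^{\Upsilon}$. In either case closure places $\varphi^{\Theta} \ra \psi^{\Upsilon}$ in $\Gamma$.

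I expect the only delicate point to be bookkeeping rather than mathematics: I must confirm that every inference used (modus ponens, $\ra$-introduction, absurdity) respects the fitting and quantifier restrictions of PUC-ND in the ambient context — trivial in the empty context of the n-case, and unproblematic in the w-case since the context $\{N\}$ has a neighbourhood-variable scope and therefore no existential quantifier. Appealing to part (a) inside part (b) is legitimate because (a) is established first and its proof does not depend on (b).
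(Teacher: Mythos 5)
Your proposal is correct and follows essentially the same route as the paper's proof: part (a) by the maximality case split combined with Lemma \ref{consistency} and Lemma \ref{closedDeriv}, and part (b) by $\ra$-elimination plus closure in the forward direction, and in the converse direction the case split via (a), using a vacuous $\ra$-introduction when $\varphi^{\Theta} \in \Gamma$ and the modus-ponens-to-$\bot_{n}$-then-intuitionistic-absurd derivation when $\varphi^{\Theta} \notin \Gamma$. Your added bookkeeping remarks about the context $\{N\}$ in the w-consistent case only make explicit what the paper leaves implicit in ``the same argument holds for w-consistent sets.''
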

\begin{proof}
(a) Both $\varphi^{\Theta}$ and $\varphi^{\Theta} \ra \bot_{n}$ cannot belong to $\Gamma$. If $\Gamma \cup \varphi^{\Theta}$ is n-consistent, then, by the definition of maximally n-consistent set, $\varphi^{\Theta} \in \Gamma$. If it is n-inconsistent, then by lemmas \ref{consistency} and \ref{closedDeriv}, $\varphi^{\Theta} \ra \bot_{n} \in \Gamma$.
(b) If $\varphi^{\Theta} \ra \psi^{\Upsilon} \in \Gamma$ and $\varphi^{\Theta} \in \Gamma$, then $ \Gamma \vdash \psi^{\Upsilon}$ by $\ra$-elimination and, by lemma \ref{closedDeriv}, $\psi^{\Upsilon} \in \Gamma$. In other way, supposing that $\varphi^{\Theta} \in \Gamma$ implies $\psi^{\Upsilon} \in \Gamma$, if $\varphi^{\Theta} \in \Gamma$, then obviously $\Gamma \vdash \psi^{\Upsilon}$ and $\Gamma \vdash \varphi^{\Theta} \ra \psi^{\Upsilon}$ by $\ra$-introduction. If $\varphi^{\Theta} \not\in \Gamma$, then, by the (a) conclusion, $\varphi^{\Theta} \ra \bot_{n} \in \Gamma$. The conclusion $\varphi^{\Theta} \ra \psi^{\Upsilon} \in \Gamma$ comes from a simple derivation with $\varphi^{\Theta}$ as a discharged hypothesis of a $\ra$-introduction that follows an application of the intuitionistic absurd. The same argument holds for w-consistent sets.\end{proof}

\begin{corollary} \label{corIff}
If $\Gamma$ is maximally n-consistent (w-consistent), then $\varphi^{\Theta} \in \Gamma$ iff $\varphi^{\Theta} \ra \bot_{n} \not\in \Gamma$.
\end{corollary}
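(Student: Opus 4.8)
The plan is to read this corollary as nothing more than the combination of the two halves already contained in Lemma~\ref{dual}(a): the \emph{exhaustive} half, that at least one of $\varphi^{\Theta}$ and $\varphi^{\Theta} \ra \bot_{n}$ belongs to $\Gamma$, and the \emph{exclusive} half, that they cannot both belong to $\Gamma$. Writing $P$ for ``$\varphi^{\Theta} \in \Gamma$'' and $Q$ for ``$\varphi^{\Theta} \ra \bot_{n} \in \Gamma$'', Lemma~\ref{dual}(a) gives $P \vee Q$ and its proof records $\neg(P \wedge Q)$; from these two facts the biconditional $P \Leftrightarrow \neg Q$ is immediate by propositional reasoning about set membership, so the whole corollary reduces to citing what has just been proved.

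For the forward direction I would assume $\varphi^{\Theta} \in \Gamma$ and derive $\varphi^{\Theta} \ra \bot_{n} \not\in \Gamma$ from the exclusivity. The exclusivity itself is the observation opening the proof of Lemma~\ref{dual}(a): if both formulas lay in $\Gamma$, then the derivations witnessing $\Gamma \vdash \varphi^{\Theta}$ and $\Gamma \vdash \varphi^{\Theta} \ra \bot_{n}$ combine by $\ra$-elimination (rule~12) to give $\Gamma \vdash \bot_{n}$, contradicting n-consistency. I would simply invoke this rather than reprove it. For the backward direction I would assume $\varphi^{\Theta} \ra \bot_{n} \not\in \Gamma$ and apply Lemma~\ref{dual}(a) directly: since at least one of the two formulas is in $\Gamma$ and the second is excluded by hypothesis, $\varphi^{\Theta} \in \Gamma$ follows. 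The w-consistent case is verbatim after replacing $\bot_{n}$ by $\bot_{w}$ and carrying the derivation inside the neighbourhood context $N$ used in the definition of w-consistency.

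The main point to watch—rather than a genuine obstacle—is that the modus-ponens step in the exclusivity argument is performed in a context where rule~12 is legal, namely one with no existential label; the empty context (in which n-consistency is tested) qualifies, and for the w-consistent version the context $N$ likewise carries a neighbourhood variable rather than an existential. Granting this, the corollary is an immediate consequence of Lemma~\ref{dual}(a) and requires no further machinery.
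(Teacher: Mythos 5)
Your proposal is correct and matches the paper's intent exactly: the corollary is stated without proof precisely because it follows immediately from Lemma~\ref{dual}(a) (exhaustivity) together with the exclusivity observation opening that lemma's proof (``both $\varphi^{\Theta}$ and $\varphi^{\Theta} \ra \bot_{n}$ cannot belong to $\Gamma$''), which is what you invoke. Your additional care about rule~12 being applicable in the empty context (and in the context $N$ for the w-case) is a sound filling-in of a detail the paper leaves implicit.
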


\begin{definition}
Given the maximally n-consistent set $\Gamma \subset \bs{S}_{n}$ and the maximally w-consistent set $\Lambda \subset \bs{S}_{w}$, we say that $\Gamma$ \textit{accepts} $\Lambda$ ($\Gamma \propto \Lambda$) if $\alpha^{\Sigma} \in \Lambda$ implies $\alpha^{\Sigma,\circledcirc} \in \Gamma$. If $\alpha^{\Sigma} \in \Gamma$ implies $\alpha^{\Sigma,\bullet} \in \Lambda$, then $\Lambda \propto \Gamma$.
\end{definition}

\begin{definition}
Given maximally w-consistent sets $\Gamma$ and $\Lambda$, we say that $\Gamma$ \textit{subordinates} $\Lambda$ ($\Lambda \sqsubset \Gamma$) iff $\alpha^{\Sigma,\bullet} \in \Lambda$ implies $\alpha^{\Sigma,\bullet} \in \Gamma$ and $\alpha^{\Sigma,\ast} \in \Gamma$ implies $\alpha^{\Sigma,\ast} \in \Lambda$.
\end{definition}

\begin{lemma} \label{consistentModel}
If $\Gamma$ is n-consistent, then there is a model $\m{M}$, such that $\m{M} \models \alpha^{\Sigma}$, for every $\alpha^{\Sigma} \in \Gamma$.
\end{lemma}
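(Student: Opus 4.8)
The plan is to run a Henkin-style canonical-model construction over maximally consistent sets and then prove a truth lemma tying membership to satisfaction. Because completeness is only asserted for the fragment $\{\wedge,\ra,\bullet,\circledcirc,\circledast\}$ restricted to sentences, the truth lemma only has to treat the Boolean clauses and the three labels $\bullet$, $\circledcirc$, $\circledast$, so there is no $\vee$, $\ast$, $\shneg$ or $\shpos$ to handle.

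First I would use Lemma \ref{subMax} to extend the given n-consistent $\Gamma$ to a maximally n-consistent set $\Gamma^{*}$, which becomes the reference world. I take $\m{W}$ to be the set of all maximally n-consistent subsets of $\bs{S}_{n}$ and set $\m{V}(p)=\{\Delta\in\m{W}:p\in\Delta\}$, so atomic formulas are true by fiat. The heart of the construction is the nesting function: for a world $\Delta$ I let its neighbourhoods be induced by the maximally w-consistent sets $\Lambda$ with $\Delta\propto\Lambda$, each $\Lambda$ realised as the world-set $\{\Xi:\Lambda\propto\Xi\}$ and ordered by $\sqsubset$. The Boolean cases of the truth lemma are then immediate from closure under derivability (Lemma \ref{closedDeriv}) together with the membership characterisations of Lemma \ref{dual} and Corollary \ref{corIff}.

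The modal clauses rest on witnessing lemmas proved by consistency-plus-Lindenbaum arguments. For $\circledcirc$: if $\alpha^{\Sigma,\circledcirc}\in\Delta$ then the set of w-sentences forced by $\Delta$ together with $\alpha^{\Sigma}$ is w-consistent --- otherwise rule $20$ would let me derive a contradiction inside $\Delta$ --- so Lemma \ref{subMax} yields an accepted $\Lambda$ containing $\alpha^{\Sigma}$. Symmetrically, for $\bullet$: if $\alpha^{\Sigma,\bullet}\in\Lambda$ then rule $18$ produces a maximally n-consistent world $\Xi$ with $\Lambda\propto\Xi$ and $\alpha^{\Sigma}\in\Xi$. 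Crucially, because $\circledcirc$ and $\circledast$ are \emph{not} interdefinable in this free logic, the universal clause for $\circledast$ cannot be obtained by dualising the existential one; both of its directions need arguments tying $\propto$ to $\circledast$-membership, one of them requiring a further counter-witness lemma (if $\alpha^{\Sigma,\circledast}\notin\Delta$, produce an accepted $\Lambda$ omitting $\alpha^{\Sigma}$), while the other instantiates the universal at each accepted neighbourhood through the wild-card rule $22$. Assembling these, $\m{M}=\ob\m{W},\$,\m{V},\Gamma^{*}\cb$ satisfies exactly the members of each world, hence all of $\Gamma^{*}\supseteq\Gamma$.

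I expect the main obstacle to be forcing $\$$ to be a genuine system of spheres, since the definition of a frame requires the neighbourhoods of each world to be \emph{totally ordered by inclusion}. Two maximally w-consistent sets accepted by $\Delta$ need not be comparable a priori, so I would invoke the total-order rules $27$, $28$, $29$ with the transitivity rules $25$, $26$ to show that for any accepted $\Lambda_{1},\Lambda_{2}$ one has $\Lambda_{1}\sqsubset\Lambda_{2}$ or $\Lambda_{2}\sqsubset\Lambda_{1}$, and that $\sqsubset$ entails inclusion of the induced world-sets. A secondary subtlety is the free-logic vacuity of $\circledast$: when $\Delta$ accepts no neighbourhood at all, $\alpha^{\Sigma,\circledast}$ is vacuously satisfied, and this must be reconciled with membership using the side conditions on the absurd-expansion and truth-acceptance rules ($9$ and $30$) that regulate empty neighbourhood systems.
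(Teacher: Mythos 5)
Your proposal has the same architecture as the paper's proof: extend $\Gamma$ via Lemma \ref{subMax}, take maximally n-consistent sets as worlds, let the maximally w-consistent sets accepted under $\propto$ play the role of neighbourhoods with $\sqsubset$ as the order, and run an induction (truth lemma) whose Boolean cases use Lemmas \ref{closedDeriv} and \ref{dual}, whose existential cases use Lindenbaum-style witnesses, and whose $\circledast$ case splits into the empty-system (vacuous) and non-empty situations. Where you genuinely depart from the paper is the witness construction, and that is where your sketch has a real gap. You form the $\circledcirc$-witness by taking the w-sentences forced by $\Delta$, adding $\alpha^{\Sigma}$, checking w-consistency (your rule-20 argument for this is fine), and then asserting that Lemma \ref{subMax} ``yields an accepted $\Lambda$.'' It yields a maximally w-consistent $\Lambda$, but nothing makes it \emph{accepted}: by definition $\Delta \propto \Lambda$ demands that every $\delta \in \Lambda$ have $\delta^{\circledcirc} \in \Delta$, and the Lindenbaum step inserts arbitrary consistent formulas with no control over their $\circledcirc$-versions. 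In a normal modal logic this is automatic from maximality plus the duality between $\Box$ and $\neg\Diamond\neg$; here that duality is exactly what you yourself point out is unavailable in the free logic, so your own observation undercuts this step. The claim is repairable: if $\delta \in \Lambda$ but $\delta^{\circledcirc} \ra \bot_{n} \in \Delta$, one can derive $(\delta \ra \bot_{w})^{\circledast} \in \Delta$ --- crucially using $\alpha^{\Sigma,\circledcirc} \in \Delta$ as the existential second premise that rule 19 requires --- so that $\delta \ra \bot_{w}$ is forced, belongs to $\Lambda$, and contradicts its w-consistency. But this argument is the heart of the matter and it is absent from your sketch; moreover, the analogous repair at the template/$\bullet$ level would need a formula $(\delta \ra \bot_{n})^{\ast}$, which lies outside the fragment $\{\wedge,\ra,\bullet,\circledcirc,\circledast\}$.

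The paper sidesteps this entire difficulty by a different witness construction: its $\Upsilon$ starts from $\beta^{\Omega}$ and admits a formula $\gamma^{\Theta}$ only when $(\beta^{\Omega} \wedge \gamma^{\Theta})^{\circledcirc}$ already belongs to $\hat{\Gamma}$ (and consistency is preserved), so that acceptance $\hat{\Gamma} \propto \Upsilon$ holds essentially by construction, and maximality is then established by the explicit PUC-ND derivations displayed in the proof; the same pattern with $(\varphi^{\Lambda} \wedge \gamma^{\Theta})^{\bullet}$ handles the world level inside the fragment. On one point you are more careful than the paper: you insist on verifying that the canonical $\$$ really is a system of nested neighbourhoods (total order under inclusion), which the paper obtains only by fiat in declaring $\sqsubset$ to be the order. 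So: same overall route, but your existential-witness step must either be supplemented with the acceptance argument above or replaced by the paper's conjunction-enumeration construction before the truth lemma goes through.
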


\begin{proof}
By lemma \ref{subMax}, $\Gamma$ is contained in a maximally n-consistent set $\hat{\Gamma}$. We consider every maximally n-consistent set $\Psi$ as a representation of one world, denoted by $\chi_{\Psi}$. Every maximally w-consistent set will be seen as a set of worlds that may be a neighbourhood. We take the set of maximally n-consistent sets as $\m{W}$. We take $\propto$ as the nested neighbourhood function $\$$ and $\sqsubset$ as the total order among neighbourhoods. To build the truth evaluation function $\m{V}$, we require, for every maximally n-consistent set $\Psi$ and for every $\alpha$ atomic: (a) $\chi_{\Psi} \in \m{V}(\alpha)$ if $\alpha \in \Psi$; (b) $\chi_{\Psi} \not\in \m{V}(\alpha)$ if $\alpha \not\in \Psi$. If we take $\m{M} = \ob \m{W} , \$ , \m{V} , \chi_{\hat{\Gamma}} \cb$, then, for every $\alpha^{\Sigma} \in \hat{\Gamma}$, $\m{M} \models \alpha^{\Sigma}$. We proceed by induction on the structure of $\alpha^{\Sigma}$:\\
\noindent (Base) If $\alpha^{\Sigma}$ is atomic, $\m{M} \models \alpha^{\Sigma}$ iff $\alpha^{\Sigma} \in \hat{\Gamma}$, by the definition of $\m{V}$;
\begin{itemize}
\item $\alpha^{\Sigma} = \beta^{\Omega} \wedge \gamma^{\Theta}$. $\m{M} \models \alpha^{\Sigma}$ iff $\m{M} \models \beta^{\Omega}$ and $\m{M} \models \gamma^{\Theta}$ iff (induction hypothesis) $\beta^{\Omega} \in \hat{\Gamma}$ and $\gamma^{\Theta} \in \hat{\Gamma}$. We conclude that $\alpha^{\Sigma} \in \hat{\Gamma}$ by lemma \ref{closedDeriv}. Conversely $\alpha^{\Sigma} \in \hat{\Gamma}$ iff $\beta^{\Omega} \in \hat{\Gamma}$ and $\gamma^{\Theta} \in \hat{\Gamma}$ by lemma \ref{closedDeriv} and the rest follows by the induction hypothesis;
\item $\alpha^{\Sigma} = \beta^{\Omega} \ra \gamma^{\Theta}$. $\m{M} \not\models \alpha^{\Sigma}$ iff $\m{M} \models \beta^{\Omega}$ and $\m{M} \not\models \gamma^{\Theta}$ iff (induction hypothesis) $\beta^{\Omega} \in \hat{\Gamma}$ and $\gamma^{\Theta} \not\in \hat{\Gamma}$ iff $\beta^{\Omega} \ra \gamma^{\Theta} \not\in \hat{\Gamma}$ by lemma \ref{dual};
\item $\alpha^{\Sigma} = \beta^{\Omega,\circledast}$. If there is no maximally w-consistent set $\Upsilon$, such that $\hat{\Gamma} \propto \Upsilon$, then $\$(\chi)$ is empty and for every $\beta^{\Omega} \in \bs{F}_{w}$, $\m{M} \models \beta^{\Omega,\circledast}$. This case occurs iff there is no wff of the form $\sigma^{\Phi,\circledcirc}$ in $\hat{\Gamma}$. If there is some maximally w-consistent set accepted by $\hat{\Gamma}$, then $\m{M} \models \beta^{\Omega,\circledast}$ iff, for every maximally w-consistent set $\Upsilon$, such that $\hat{\Gamma} \propto \Upsilon$, $\beta^{\Omega} \in \Upsilon$ iff $(\beta^{\Omega} \ra \bot_{w})^{\circledcirc} \ra \bot_{n} \in \hat{\Gamma}$ which is verified by the other cases;
\item $\alpha^{\Sigma} = \beta^{\Omega,\circledcirc}$. We build a set $\Upsilon \subset \bs{F}_{w}$, starting by $\beta^{\Omega} \in \Upsilon$. We take a sequence $\varphi_{i}$ of all wff with the shape of $(\beta^{\Omega} \wedge \gamma^{\Theta})^\circledcirc$ in $\hat{\Gamma}$. If, for $\varphi_{i} = (\beta^{\Omega} \wedge \gamma^{\Theta})^\circledcirc$, $\Upsilon \cup \{ \gamma^{\Theta} \}$ is w-consistent, then $\gamma^{\Theta} \in \Upsilon$. To demonstrate that $\Upsilon$ is maximally w-consistent, we suppose that there is a wff $\sigma^{\Phi} \in \bs{F}_{w}$, such that $\sigma^{\Phi} \not\in \Upsilon$ and $\Upsilon \cup \{ \sigma^{\Phi} \}$ is w-consistent. Then $(\beta^{\Omega} \wedge \sigma^{\Phi})^\circledcirc \not\in \hat{\Gamma}$ by the definition of $\Upsilon$ and, by lemma \ref{dual}, $(\beta^{\Omega} \wedge \sigma^{\Phi})^\circledcirc \ra \bot_{n} \in \hat{\Gamma}$. But from $\beta^{\Omega,\circledcirc} \in \hat{\Gamma}$ and $(\beta^{\Omega} \wedge \sigma^{\Phi})^\circledcirc \ra \bot_{n} \in \hat{\Gamma}$ we know that $(\beta^{\Omega} \wedge ( \sigma^{\Phi} \ra \bot_{w} ) )^\circledcirc \in \hat{\Gamma}$, using lemma \ref{closedDeriv} and the following derivation:
\noindent\begin{center}
\AxiomC{$\beta^{\Omega,\circledcirc}$}
\UnaryInfC{$\beta^{\Omega,\circledcirc}$} \RightLabel{$\circledcirc$}
\UnaryInfC{$\beta^{\Omega}$}
\AxiomC{$\beta^{\Omega,\circledcirc}$}
\UnaryInfC{$\beta^{\Omega,\circledcirc}$} \RightLabel{$\circledcirc$}
\UnaryInfC{$\beta^{\Omega}$}
\AxiomC{$^{1}$[$\beta^{\Omega}$]} \RightLabel{$N$}
\UnaryInfC{$\beta^{\Omega}$}
\AxiomC{$\Pi$} \RightLabel{$N$}
\UnaryInfC{$\sigma^{\Phi} \ra \bot_{w}$} \RightLabel{$N$}
\BinaryInfC{$\beta^{\Omega} \wedge ( \sigma^{\Phi} \ra \bot_{w} )$} \RightLabel{$\circledcirc$}
\BinaryInfC{$\beta^{\Omega} \wedge ( \sigma^{\Phi} \ra \bot_{w} )$}
\UnaryInfC{$(\beta^{\Omega} \wedge ( \sigma^{\Phi} \ra \bot_{w} ) )^\circledcirc$} \LeftLabel{$1$}
\BinaryInfC{$(\beta^{\Omega} \wedge ( \sigma^{\Phi} \ra \bot_{w} ) )^\circledcirc$}
\alwaysNoLine
\UnaryInfC{$\phantom{.}$} \DisplayProof

\AxiomC{$\beta^{\Omega,\circledcirc}$}
\UnaryInfC{$\beta^{\Omega,\circledcirc}$}
\AxiomC{$\beta^{\Omega}$} \RightLabel{$N$}
\UnaryInfC{$\beta^{\Omega}$}
\AxiomC{$^{2}$[$\sigma^{\Phi}$]} \RightLabel{$N$}
\UnaryInfC{$\sigma^{\Phi}$} \RightLabel{$N$}
\BinaryInfC{$\beta^{\Omega} \wedge \sigma^{\Phi}$} \RightLabel{$N$}
\UnaryInfC{$\beta^{\Omega} \wedge \sigma^{\Phi}$} \RightLabel{$\circledcirc$}
\BinaryInfC{$\beta^{\Omega} \wedge \sigma^{\Phi}$}
\UnaryInfC{$(\beta^{\Omega} \wedge \sigma^{\Phi})^\circledcirc$}
\AxiomC{$(\beta^{\Omega} \wedge \sigma^{\Phi})^\circledcirc \ra \bot_{n}$}
\UnaryInfC{$(\beta^{\Omega} \wedge \sigma^{\Phi})^\circledcirc \ra \bot_{n}$}
\BinaryInfC{$\bot_{n}$} 
\UnaryInfC{$\bot_{w}^{N}$} \RightLabel{$N$}
\UnaryInfC{$\bot_{w}$} \LeftLabel{$\bs{\Pi}$ \hspace*{2cm} $2$} \RightLabel{$N$}
\UnaryInfC{$\sigma^{\Phi} \ra \bot_{w}$} \DisplayProof\end{center}

\noindent So, by definition, $\sigma^{\Phi} \ra \bot_{w} \in \Upsilon$ and $\Upsilon \cup \{\sigma^{\Phi}\}$ cannot be w-consistent. We conclude that $\Upsilon$ is maximally w-consistent and $\hat{\Gamma} \propto \Upsilon$. $\Upsilon$ represents a neighbourhood $N_{\Upsilon} \in \$(\chi_{\hat{\Gamma}})$. To prove that $\m{M} \models \beta^{\Omega,\circledcirc}$, we need to prove that $\m{T} = \ob \m{W} , \$ , \m{V} , \chi_{\hat{\Gamma}} , N_{\Upsilon} \cb \models \beta^{\Omega}$. We proceed by induction on the structure of $\beta^{\Omega}$:
\begin{itemize}
\item $\beta^{\Omega} = \varphi^{\Lambda} \wedge \gamma^{\Theta}$. $\m{T} \models \beta^{\Omega}$ iff $\m{T} \models \varphi^{\Lambda}$ and $\m{T} \models \gamma^{\Theta}$ iff (induction hypothesis) $\varphi^{\Lambda} \in \Upsilon$ and $\gamma^{\Theta} \in \Upsilon$. We conclude that $\beta^{\Omega} \in \Upsilon$ by lemma \ref{closedDeriv}. Conversely $\beta^{\Omega} \in \hat{\Gamma}$ iff $\varphi^{\Lambda} \in \Upsilon$ and $\gamma^{\Theta} \in \Upsilon$ by lemma \ref{closedDeriv} and the rest follows by the induction hypothesis;

\item $\beta^{\Omega} = \varphi^{\Lambda} \ra \gamma^{\Theta}$. $\m{T} \not\models \varphi^{\Lambda}$ iff $\m{T} \models \varphi^{\Lambda}$ and $\m{T} \not\models \gamma^{\Theta}$ iff (induction hypothesis) $\varphi^{\Lambda} \in \hat{\Gamma}$ and $\gamma^{\Theta} \not\in \Upsilon$ iff $\varphi^{\Lambda} \ra \gamma^{\Theta} \not\in \Upsilon$ by lemma \ref{dual};

\item $\beta^{\Omega} = \varphi^{\Lambda,\bullet}$. We build a set $\Psi$, starting by $\varphi^{\Lambda} \in \Psi$. We take a sequence $\varphi_{i}$ in $\Upsilon$ that have the form $(\varphi^{\Lambda} \wedge \gamma^{\Theta})^\bullet$. If, for $\varphi_{i} = (\varphi^{\Lambda} \wedge \gamma^{\Theta})^\bullet$, $\Psi \cup \{ \gamma^{\Theta} \}$ is n-consistent, then $\gamma^{\Theta} \in \Upsilon$. To demonstrate that $\Psi$ is maximally n-consistent, we suppose that there is a wff $\sigma^{\Phi}$, such that $\sigma^{\Phi} \not\in \Psi$ and $\Psi \cup \{ \sigma^{\Phi} \}$ is n-consistent. Then $(\varphi^{\Lambda} \wedge \sigma^{\Phi})^\bullet \not\in \hat{\Gamma}$ by the definition of $\Psi$ and, by lemma \ref{dual}, $(\varphi^{\Lambda} \wedge \sigma^{\Phi})^\bullet \ra \bot_{w} \in \Upsilon$. But from $\varphi^{\Lambda,\bullet} \in \Upsilon$ and $(\varphi^{\Lambda} \wedge \sigma^{\Phi})^\bullet \ra \bot_{w} \in \Upsilon$ we know that $(\varphi^{\Lambda} \wedge ( \sigma^{\Phi} \ra \bot_{n} ) )^\bullet \in \Upsilon$, using lemma \ref{closedDeriv} and the following derivation:
\noindent\begin{center}
\AxiomC{$\beta^{\Omega,\bullet}$}
\UnaryInfC{$\beta^{\Omega,\bullet}$} \RightLabel{$\bullet$}
\UnaryInfC{$\beta^{\Omega}$}
\AxiomC{$^{1}$[$\beta^{\Omega}$]} \RightLabel{$u$}
\UnaryInfC{$\beta^{\Omega}$}
\AxiomC{$\Pi$} \RightLabel{$u$}
\UnaryInfC{$\sigma^{\Phi} \ra \bot_{n}$} \RightLabel{$u$}
\BinaryInfC{$\beta^{\Omega} \wedge ( \sigma^{\Phi} \ra \bot_{n} )$} \RightLabel{$\bullet$}
\UnaryInfC{$\beta^{\Omega} \wedge ( \sigma^{\Phi} \ra \bot_{n} )$}
\UnaryInfC{$(\beta^{\Omega} \wedge ( \sigma^{\Phi} \ra \bot_{n} ) )^\bullet$} \LeftLabel{$1$}
\BinaryInfC{$(\beta^{\Omega} \wedge ( \sigma^{\Phi} \ra \bot_{n} ) )^\bullet$}
\alwaysNoLine
\UnaryInfC{$\phantom{.}$} \DisplayProof

\AxiomC{$\beta^{\Omega,\bullet}$}
\UnaryInfC{$\beta^{\Omega,\bullet}$}
\AxiomC{$^{1}$[$\beta^{\Omega}$]} \RightLabel{$u$}
\UnaryInfC{$\beta^{\Omega}$}
\AxiomC{$^{2}$[$\sigma^{\Phi}$]} \RightLabel{$u$}
\UnaryInfC{$\sigma^{\Phi}$} \RightLabel{$u$}
\BinaryInfC{$\beta^{\Omega} \wedge \sigma^{\Phi}$} \RightLabel{$u$}
\UnaryInfC{$\beta^{\Omega} \wedge \sigma^{\Phi}$} \RightLabel{$\bullet$}
\BinaryInfC{$\beta^{\Omega} \wedge \sigma^{\Phi}$}
\UnaryInfC{$(\beta^{\Omega} \wedge \sigma^{\Phi})^\bullet$}
\AxiomC{$(\beta^{\Omega} \wedge \sigma^{\Phi})^\bullet \ra \bot_{w}$}
\UnaryInfC{$(\beta^{\Omega} \wedge \sigma^{\Phi})^\bullet \ra \bot_{w}$}
\BinaryInfC{$\bot_{w}$} 
\UnaryInfC{$\bot_{n}^{u}$} \RightLabel{$u$}
\UnaryInfC{$\bot_{n}$} \LeftLabel{$\bs{\Pi}$ \hspace*{2cm} $2$} \RightLabel{$u$}
\UnaryInfC{$\sigma^{\Phi} \ra \bot_{n}$} \DisplayProof\end{center}

\noindent So, by definition, $\sigma^{\Phi} \ra \bot_{n} \in \Psi$ and $\Psi \cup \{\sigma^{\Phi}\}$ can not be n-consistent. We conclude that $\Psi$ is maximally n-consistent and $\Upsilon \propto \Psi$. $\Psi$ represents a world $\chi_{\Psi} \in N_{\Upsilon}$. To prove that $\m{T} \models \varphi^{\Lambda,\bullet}$, we need to prove that $\ob \m{W} , \$ , \m{V} , \chi_{\Psi} \cb \models \varphi^{\Lambda}$ using the previous cases.
\end{itemize}\end{itemize}\end{proof}

\begin{corollary} \label{trick}
$\Gamma \not\vdash \alpha^{\Sigma}$ iff there is a model $\m{M}$, such that $\m{M} \models \phi^{\Theta}$, for every $\phi^{\Theta} \in \Gamma$, and $\m{M} \not\models \alpha^{\Sigma}$.
\end{corollary}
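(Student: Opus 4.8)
The plan is to read this off as a direct consequence of three results already established: the model-existence Lemma~\ref{consistentModel}, the consistency-transfer Lemma~\ref{consistency}, and the Soundness Theorem~\ref{soundness}. Accordingly I would prove the two directions of the biconditional separately, treating $\Gamma \cup \{\alpha^{\Sigma}\}$ as lying in the sentence fragment $\bs{S}_{n}$ so that the cited lemmas apply. The backward direction is essentially soundness, and the forward direction is the genuine use of the canonical-model construction; neither requires new combinatorial work on contexts.

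For the forward direction, I would argue by the contrapositive of Lemma~\ref{consistency}. Assume $\Gamma \not\vdash \alpha^{\Sigma}$ and consider the set $\Gamma' = \Gamma \cup \{\alpha^{\Sigma} \ra \bot_{n}\}$, noting that $\alpha^{\Sigma} \ra \bot_{n}$ is a wff fitting the empty context and that it expresses $\neg(\alpha^{\Sigma})$. If $\Gamma'$ were n-inconsistent, i.e.\ $\Gamma \cup \{\alpha^{\Sigma} \ra \bot_{n}\} \vdash \bot_{n}$, then part~1 of Lemma~\ref{consistency} (with $\phi^{\Theta} = \alpha^{\Sigma}$) would give $\Gamma \vdash \alpha^{\Sigma}$, contradicting the assumption. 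Hence $\Gamma'$ is n-consistent, and Lemma~\ref{consistentModel} yields a model $\m{M}$ satisfying every formula of $\Gamma'$. In particular $\m{M} \models \phi^{\Theta}$ for every $\phi^{\Theta} \in \Gamma$, and $\m{M} \models \alpha^{\Sigma} \ra \bot_{n}$; since no model satisfies $\bot_{n}$, the satisfaction clause for $\ra$ forces $\m{M} \not\models \alpha^{\Sigma}$, which is exactly the desired model.

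For the backward direction I would again use the contrapositive. Suppose $\m{M}$ satisfies every $\phi^{\Theta} \in \Gamma$ but $\m{M} \not\models \alpha^{\Sigma}$, and suppose toward a contradiction that $\Gamma \vdash \alpha^{\Sigma}$. Then Theorem~\ref{soundness} gives $\Gamma \models \alpha^{\Sigma}$, so by Definition~\ref{logicalConsequenceDf} every model satisfying all of $\Gamma$ must satisfy $\alpha^{\Sigma}$; applying this to $\m{M}$ gives $\m{M} \models \alpha^{\Sigma}$, contradicting $\m{M} \not\models \alpha^{\Sigma}$. Therefore $\Gamma \not\vdash \alpha^{\Sigma}$.

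The only subtlety, and the closest thing to an obstacle, is bookkeeping rather than mathematics: I must ensure throughout that we remain inside $\bs{S}_{n}$ (so that $\Gamma \cup \{\alpha^{\Sigma} \ra \bot_{n}\}$ is a legitimate argument to Lemmas~\ref{consistency} and~\ref{consistentModel}), that $\alpha^{\Sigma} \ra \bot_{n}$ fits the empty context, and that the passage from $\m{M} \models \alpha^{\Sigma} \ra \bot_{n}$ to $\m{M} \not\models \alpha^{\Sigma}$ uses the negation/implication clauses of Definition~\ref{satisfactionDefinition} together with the fact that $\bot_{n}$ is unsatisfiable. All the real content has already been discharged in the canonical-model construction of Lemma~\ref{consistentModel}, so the corollary itself is short.
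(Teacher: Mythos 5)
Your proposal is correct and follows essentially the same route as the paper: the forward direction is exactly the paper's argument (the contrapositive of Lemma~\ref{consistency} gives n-consistency of $\Gamma \cup \{\alpha^{\Sigma} \ra \bot_{n}\}$, then Lemma~\ref{consistentModel} produces the model, and the $\ra$/$\bot_{n}$ satisfaction clauses yield $\m{M} \not\models \alpha^{\Sigma}$). The only cosmetic difference is in the backward direction, where you invoke Theorem~\ref{soundness} directly while the paper routes through Lemma~\ref{modelForSet}; since that lemma is itself a one-line application of soundness, the two arguments coincide in substance.
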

\begin{proof}
$\Gamma \not\vdash \alpha^{\Sigma}$ iff $\Gamma \cup \{\alpha^{\Sigma} \ra \bot_{n}\}$ is n-consistent by lemma \ref{consistency} and the definition of n-consistent set. By lemmas \ref{modelForSet} and \ref{consistentModel}, $\Gamma \cup \{\alpha^{\Sigma} \ra \bot_{n}\}$ is n-consistent iff there is a model $\m{M}$, such that $\m{M} \models \phi^{\Theta}$, for every $\phi^{\Theta} \in \Gamma \cup \{\alpha^{\Sigma} \ra \bot_{n}\}$. It means that $\m{M}$ satisfies every formula of $\Gamma$ and $\m{M} \not\models \alpha^{\Sigma}$.\end{proof}

\begin{theorem}
$\Gamma \models \alpha^{\Sigma}$ implies $\Gamma \vdash \alpha^{\Sigma}$ (Completeness).
\end{theorem}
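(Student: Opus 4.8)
The plan is to prove the contrapositive, namely that $\Gamma \not\vdash \alpha^{\Sigma}$ implies $\Gamma \not\models \alpha^{\Sigma}$, and to do so by reading off everything from Corollary \ref{trick}, which already packages the hard model-construction work. Since the canonical-model machinery (Lemmas \ref{consistentModel}, \ref{subMax}, \ref{dual} and the $\propto$/$\sqsubset$ relations) was developed for sentences in $\bs{S}_{n}$ and $\bs{S}_{w}$ over the fragment $\{\wedge, \ra, \bullet, \circledcirc, \circledast\}$, I would first fix $\Gamma \cup \{\alpha^{\Sigma}\} \subset \bs{S}_{n}$ with $\alpha^{\Sigma}$ appearing in the empty context, exactly the setting in which Corollary \ref{trick} is stated.

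First I would assume $\Gamma \not\vdash \alpha^{\Sigma}$. By Corollary \ref{trick} this assumption is equivalent to the existence of a model $\m{M}$ such that $\m{M} \models \phi^{\Theta}$ for every $\phi^{\Theta} \in \Gamma$ while $\m{M} \not\models \alpha^{\Sigma}$. Next I would invoke the definition of logical consequence (Definition \ref{logicalConsequenceDf}): $\Gamma \models \alpha^{\Sigma}$ holds precisely when \emph{every} model satisfying all of $\Gamma$ also satisfies $\alpha^{\Sigma}$. The model $\m{M}$ produced above is a direct counterexample to this universal statement, so $\Gamma \not\models \alpha^{\Sigma}$. Contraposing the implication just established yields $\Gamma \models \alpha^{\Sigma}$ implies $\Gamma \vdash \alpha^{\Sigma}$, which is the theorem.

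The genuine difficulty has already been discharged before this point: it lives in Lemma \ref{consistentModel}, where an $n$-consistent set is extended to a maximally $n$-consistent set (Lemma \ref{subMax}) and then turned into a term model whose worlds are maximally $n$-consistent sets, whose neighbourhood function is $\propto$, and whose inclusion order is $\sqsubset$, with the truth lemma proved by the nested induction on formula structure covering the $\circledcirc$ and $\bullet$ cases. Corollary \ref{trick} then merely applies that lemma to $\Gamma \cup \{\alpha^{\Sigma} \ra \bot_{n}\}$ via Lemmas \ref{consistency} and \ref{modelForSet}. Consequently the step I expect to write here is short and essentially bookkeeping; the only points demanding care are that $\alpha^{\Sigma}$ is indeed taken in the empty context so that Corollary \ref{trick} applies verbatim, and that the statement is understood for the sentence fragment on which the construction was carried out, the general case following by the same reasoning wherever the canonical model is definable.
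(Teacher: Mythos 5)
Your proposal is correct and takes essentially the same route as the paper: the paper's proof is exactly the contrapositive, citing Corollary~\ref{trick} together with the definition of logical consequence (Definition~\ref{logicalConsequenceDf}). Your additional caveats about restricting to the sentence fragment and the empty context are not extra assumptions but simply make explicit the setting the paper itself fixes before stating the theorem.
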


\begin{proof}
$\Gamma \not\vdash \alpha^{\Sigma}$ implies $\Gamma \not\models \alpha^{\Sigma}$, by the corollary \ref{trick} and the definition of logical consequence.\end{proof}

\section{Normalization, Decidability, Complexity}

We investigate here the normalization of PUC-ND. For the normalization proof, we want to present first the approach similar to the classical propositional normalization. This case happens for maximum formulas in derivations with fixed contexts, since the contexts are not defined for propositional logic.

To do so, we investigate a fragment of the presented language, in order to use the Prawitz \cite{Prawitz} strategy for propositional logic normalization, in which he restricted the applications of the classical absurd to atomic formulas. In the chosen fragment $\m{L}_{-}$ we only omit the operator $\vee$, which may be recovered by the definition $\alpha \vee \beta \equiv \neg \alpha \ra \beta$. After that result, we present the reductions for the remaining rules.

In every case we follow the van Dalen algorithm for normalizing a derivation, starting form a subderivation that concludes a maximum formula with maximum rank, what means a maximum formula that has no maximum formula above it with more connectives in the subderivation.

\begin{lemma} \label{lemmaNormProp0}
Every derivation that is composed only by the rules 1 to 8 and 10 to 12 is normalizable.
\end{lemma}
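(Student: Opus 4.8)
The plan is to reduce the claim to classical propositional natural deduction normalization. The key observation is that rules 1--8 and 10--12 all transmit their context $\Delta$ unchanged from premises to conclusion (it is only rule 9 and rules 13 onward that alter the scope), so a derivation built from these rules alone lives inside a single fixed context. The attribute decoration is therefore inert, and such a derivation is, combinatorially, an ordinary classical propositional derivation over $\wedge$, $\vee$, $\ra$ and $\bot$, with rule 7 as classical \emph{reductio} and rule 8 as \emph{ex falso}. Normalization then follows the Prawitz \cite{Prawitz} strategy driven by the van Dalen algorithm.

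First I would define a \emph{maximum formula} as an occurrence that is both the conclusion of an introduction rule (3 for $\wedge$, 4 or 6 for $\vee$, 11 for $\ra$) and the major premise of the matching elimination (1 or 2, 5, 12), with \emph{degree} equal to its number of connectives. Following Prawitz, the first transformation confines every application of the classical absurd rule 7 to atomic conclusions: an application deriving a compound $\alpha^{\Sigma} \wedge \beta^{\Omega}$, $\alpha^{\Sigma} \vee \beta^{\Omega}$ or $\alpha^{\Sigma} \ra \beta^{\Omega}$ is rewritten into applications on its immediate subformulas, using rule 8 and the context-preserving structural rules. Since all rules involved keep $\Delta$ fixed, this rewriting stays within the fragment, and its effect is that rule 7 can no longer generate a maximum formula.

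Next I would install the standard detour reductions: a $\wedge$-introduction consumed by a $\wedge$-elimination collapses to the relevant premise subderivation; an $\ra$-introduction consumed by modus ponens collapses by substituting the derivation of the antecedent for the discharged hypothesis $\alpha^{\Sigma}$ in the derivation of $\beta^{\Omega}$; and a $\vee$-introduction consumed by $\vee$-elimination collapses by grafting the premise subderivation onto the matching minor branch. I would then run the van Dalen algorithm, ordering derivations by the pair $(d,n)$, where $d$ is the largest degree of a maximum formula and $n$ the number of maximum formulas attaining it. At each step one selects a maximum formula of degree $d$ having no maximum formula of degree $\geq d$ in the subderivations of the premises of the elimination that consumes it, and applies the corresponding reduction.

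The hard part will be to show that this measure strictly decreases. The $\wedge$-case is immediate. The delicate cases are $\ra$- and $\vee$-elimination, whose reductions duplicate a minor-premise subderivation: I must verify that the selection criterion guarantees these duplicated copies carry no maximum formula of degree $\geq d$, and that the fresh maxima created where the grafted conclusion meets its new context all have degree strictly below $d$ (which holds because a component of a degree-$d$ formula has smaller degree). Together these ensure that the reduction removes one degree-$d$ maximum formula while creating none of degree $\geq d$, so $(d,n)$ drops lexicographically. Well-foundedness of the order then yields termination in a normal derivation, proving the lemma; the Prawitz restriction on rule 7 is exactly what prevents the classical rule from reintroducing high-degree maxima along the way.
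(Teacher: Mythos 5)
Your overall strategy is the same as the paper's (fixed context, labelled formulas treated as atoms, Prawitz's classical-absurd-to-atoms restriction, van Dalen's induction on a lexicographic measure), but one step of your plan fails, and it is exactly the step the paper is engineered to avoid. You claim that an application of rule 7 concluding a compound formula, including $\alpha^{\Sigma} \vee \beta^{\Omega}$, can be "rewritten into applications on its immediate subformulas." That rewriting exists for $\wedge$ and $\ra$ (the $\wedge$ case is the one the paper displays), but it is known to fail for $\vee$: given $\Pi$ deriving $\bot$ from $\neg(\alpha^{\Sigma} \vee \beta^{\Omega})$, you can pass to a derivation of $\bot$ from $\neg(\alpha^{\Sigma})$ and $\neg(\beta^{\Omega})$ (rebuilding $\neg(\alpha^{\Sigma} \vee \beta^{\Omega})$ from them via rule 5), and reductio on $\alpha^{\Sigma}$ then yields $\alpha^{\Sigma} \vee \beta^{\Omega}$ from the remaining open hypothesis $\neg(\beta^{\Omega})$; but to discharge $\neg(\beta^{\Omega})$ and land on $\alpha^{\Sigma} \vee \beta^{\Omega}$ you need either reductio on $\alpha^{\Sigma} \vee \beta^{\Omega}$ itself (circular, the degree never drops) or an excluded-middle detour that again requires reductio on a compound formula. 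This obstruction is precisely why Prawitz \cite{Prawitz} restricts classical normalization to a $\vee$-free language, and the paper follows him: it works in the fragment $\m{L}_{-}$ where $\vee$ is omitted and recovered by the definition $\alpha \vee \beta \equiv \neg \alpha \ra \beta$, so rules 4--6 never need reductions of their own. Keeping $\vee$ primitive, as you do, is not a harmless generalization: it requires genuinely different machinery (e.g.\ St\aa lmarck-style permutative reductions that push eliminations above applications of rule 7), and your measure $(d,n)$ and selection criterion would not survive that change unmodified.

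A secondary gap in the same direction: with $\vee$ primitive, your definition of maximum formula (an introduction whose conclusion is immediately the major premise of an elimination) is too narrow, because a formula introduced above a rule-5 application and eliminated below it forms a maximum \emph{segment} rather than a maximum formula. Detour reductions alone do not remove these; one needs permutative reductions for rule 5 and a measure that tracks segment length, which your proposal never mentions. Both difficulties evaporate once $\vee$ is translated away, which is the route the paper's proof of Lemma \ref{lemmaNormProp0} actually takes.
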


\begin{proof}
These rules may be seen as a natural deduction system for the classical propositional logic, since the context is fixed and the formulas with labels are treated like atomic formulas. We follow the strategy of Prawitz \cite{Prawitz}. We give here the reductions for the propositional logical operators, in the case of fixed context and labels:
\begin{itemize}
\item $\wedge$-reductions:\\[5pt]
\noindent\begin{tabular}{ccccccc}
\AxiomC{$\Pi_{1}$} \RightLabel{$\Delta$}
\UnaryInfC{$\alpha$} \RightLabel{$\Delta$}
\AxiomC{$\Pi_{2}$} \RightLabel{$\Delta$}
\UnaryInfC{$\beta$} \RightLabel{$\Delta$}
\BinaryInfC{$\alpha \wedge \beta$} \RightLabel{$\Delta$}
\UnaryInfC{$\alpha$} \RightLabel{$\Delta$}
\UnaryInfC{$\Pi_{3}$} \DisplayProof
& $\rhd$ &
\AxiomC{$\Pi_{1}$} \RightLabel{$\Delta$}
\UnaryInfC{$\alpha$} \RightLabel{$\Delta$}
\UnaryInfC{$\Pi_{3}$} \DisplayProof & &
\AxiomC{$\Pi_{1}$} \RightLabel{$\Delta$}
\UnaryInfC{$\alpha$} \RightLabel{$\Delta$}
\AxiomC{$\Pi_{2}$} \RightLabel{$\Delta$}
\UnaryInfC{$\beta$} \RightLabel{$\Delta$}
\BinaryInfC{$\alpha \wedge \beta$} \RightLabel{$\Delta$}
\UnaryInfC{$\beta$} \RightLabel{$\Delta$}
\UnaryInfC{$\Pi_{3}$} \DisplayProof
& $\rhd$ &
\AxiomC{$\Pi_{2}$} \RightLabel{$\Delta$}
\UnaryInfC{$\beta$} \RightLabel{$\Delta$}
\UnaryInfC{$\Pi_{3}$} \DisplayProof \\
\end{tabular}

\item $\ra$-reduction:\\[5pt]
\noindent\begin{tabular}{ccc}
\AxiomC{$\Pi_{1}$} \RightLabel{$\Delta$}
\UnaryInfC{$\alpha$} \RightLabel{$\Delta$}
\AxiomC{[$\alpha$]} \RightLabel{$\Delta$}
\UnaryInfC{$\Pi_{2}$} \RightLabel{$\Delta$}
\UnaryInfC{$\beta$} \RightLabel{$\Delta$}
\UnaryInfC{$\alpha \ra \beta$} \RightLabel{$\Delta$}
\BinaryInfC{$\beta$} \RightLabel{$\Delta$}
\UnaryInfC{$\Pi_{3}$} \RightLabel{$\Delta$} \DisplayProof
& $\rhd$ &
\AxiomC{$\Pi_{1}$} \RightLabel{$\Delta$}
\UnaryInfC{$\alpha$} \RightLabel{$\Delta$}
\UnaryInfC{$\Pi_{2}$} \RightLabel{$\Delta$}
\UnaryInfC{$\beta$} \RightLabel{$\Delta$}
\UnaryInfC{$\Pi_{3}$} \RightLabel{$\Delta$} \DisplayProof
\end{tabular}
\end{itemize}

The application of the classical absurd may be restricted to atomic formulas only. We change the following derivation according to the principal logical operator of $\gamma$. We only present the change procedure for $\wedge$, see \cite{Prawitz} for further details.\\\hfill\\
\begin{tabular}{ccc}
\AxiomC{[$\neg \gamma$]} \RightLabel{$\Delta$}
\UnaryInfC{$\Pi_{1}$} \RightLabel{$\Delta$}
\UnaryInfC{$\bot$} \RightLabel{$\Delta$}
\UnaryInfC{$\gamma$} \RightLabel{$\Delta$}
\UnaryInfC{$\Pi_{2}$} \RightLabel{$\Delta$} \DisplayProof & \quad \quad &
\AxiomC{$^{1}$[$\alpha \wedge \beta$]} \RightLabel{$\Delta$} \RightLabel{$\Delta$}
\UnaryInfC{$\alpha$} \RightLabel{$\Delta$} \RightLabel{$\Delta$}
\AxiomC{$^{2}$[$\neg \alpha$]} \RightLabel{$\Delta$} \RightLabel{$\Delta$}
\BinaryInfC{$\bot$} \RightLabel{$\Delta$} \LeftLabel{\small{1}} \RightLabel{$\Delta$}
\UnaryInfC{[$\neg (\alpha \wedge \beta)$]} \RightLabel{$\Delta$}
\UnaryInfC{$\Pi_{1}$} \RightLabel{$\Delta$} \RightLabel{$\Delta$}
\UnaryInfC{$\bot$} \RightLabel{$\Delta$} \LeftLabel{\small{2}} \RightLabel{$\Delta$}
\UnaryInfC{$\alpha$} \RightLabel{$\Delta$} \RightLabel{$\Delta$}
\AxiomC{$^{3}$[$\alpha \wedge \beta$]} \RightLabel{$\Delta$} \RightLabel{$\Delta$}
\UnaryInfC{$\alpha$} \RightLabel{$\Delta$} \RightLabel{$\Delta$}
\AxiomC{$^{4}$[$\neg \beta$]} \RightLabel{$\Delta$} \RightLabel{$\Delta$}
\BinaryInfC{$\bot$} \RightLabel{$\Delta$} \LeftLabel{\small{3}} \RightLabel{$\Delta$}
\UnaryInfC{[$\neg (\alpha \wedge \beta)$]} \RightLabel{$\Delta$}
\UnaryInfC{$\Pi_{1}$} \RightLabel{$\Delta$} \RightLabel{$\Delta$}
\UnaryInfC{$\bot$} \RightLabel{$\Delta$} \LeftLabel{\small{4}} \RightLabel{$\Delta$}
\UnaryInfC{$\beta$} \RightLabel{$\Delta$} \RightLabel{$\Delta$}
\BinaryInfC{$\alpha \wedge \beta$} \RightLabel{$\Delta$}
\UnaryInfC{$\Pi_{2}$} \RightLabel{$\Delta$} \DisplayProof
\end{tabular}\end{proof}

\begin{lemma}
Given a derivation $\Pi$, if we exchange every occurence of a world variable $u$ in $\Pi$ by a world variable $w$ that does occurs in $\Pi$, then the resulting derivation, which we represent by $\Pi(u\mid w)$, is also a derivation.
\end{lemma}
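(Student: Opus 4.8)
The plan is to argue by induction on the structure of the derivation $\Pi$ (equivalently, on the number of rule applications), showing that renaming the world variable $u$ to a variable $w$ not occurring in $\Pi$ preserves, at every node, both the well-formedness of the formula-in-context and the applicability of the rule used. The guiding observation is that the substitution $(u\mid w)$ is nothing but a relabelling of one world variable by another: since both $u$ and $w$ belong to $\bs{L}_{w}$, the formation rules of Definition~\ref{wffDefinition} are completely insensitive to which world variable occurs in an attribute, so $\alpha^{\Sigma} \in \bs{F}_{w}$ (resp.\ $\bs{F}_{n}$) if and only if $\alpha^{\Sigma}(u\mid w)$ is, and likewise fitting into a context (Definition~\ref{fitDf}) is preserved because it is defined through well-formedness. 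Hence the side conditions of the form ``$\alpha^{\Sigma}$ must fit into the context'' carry over automatically.

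For the base case, $\Pi$ is a single leaf --- an open hypothesis, an instance of hypothesis-injection (rule~10), or of truth acceptance (rule~30) --- and the remark above shows that $\Pi(u\mid w)$ is again a well-formed leaf in the same shape of context. For the inductive step, suppose $\Pi$ ends in an application of a rule $R$ to subderivations $\Pi_{1},\dots,\Pi_{k}$; by the induction hypothesis each $\Pi_{i}(u\mid w)$ is a derivation, and I would check that reassembling them under $R$ --- with the conclusion and all displayed contexts also relabelled by $(u\mid w)$ --- is a legitimate instance of $R$. For the rules carrying no variable restriction this is immediate, since $(u\mid w)$ commutes with each rule schema and preserves fitting; the contexts keep the parity dictated by Lemma~\ref{countingLemma} because relabelling does not change the size of any stack.

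The hard part will be the rules with eigenvariable provisos, namely world universal introduction (rule~15) and world existential elimination (rule~18); the neighbourhood rules~19--22 need no attention because their restrictions constrain neighbourhood variables, which $(u\mid w)$ leaves untouched. Here I would split on the eigenvariable $v$ of the application. If $v\neq u$, then $v$ is unchanged by the substitution and, since $v$ occurs in $\Pi$ while $w$ does not, we have $v\neq w$; the substitution neither removes the old forbidden occurrences nor creates a new occurrence of $v$, so the original proviso for $v$ transfers verbatim to $\Pi(u\mid w)$. If $v=u$, then after relabelling the eigenvariable becomes $w$, and I must verify that $w$ does not occur in the relevant hypotheses or their contexts: this holds because those hypotheses are the $(u\mid w)$-images of hypotheses in which, by the original proviso, $u$ did not occur, so their images contain no $w$, and $w$ was fresh to begin with. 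The discharge structure of the tree is unaffected by a pure relabelling, so the same assumption-classes are closed at the same nodes. This exhausts the cases and yields that $\Pi(u\mid w)$ is a derivation; the freshness of $w$ is exactly what rules out variable capture and is used essentially in the subcase $v=u$.
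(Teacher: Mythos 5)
Your proof is correct and takes essentially the same approach as the paper, whose entire proof of this lemma is the single line ``By induction''; your write-up simply supplies the structural induction, the preservation of well-formedness and fitting, and the eigenvariable case analysis that the paper leaves implicit. One small point: rule 16 (world universal elimination) also carries a world-variable proviso ($u$ must not occur in $\alpha^{\Sigma}$ or $\Delta$), not just rules 15 and 18, but your split on whether the rule's variable equals $u$ handles it verbatim, so nothing is missing in substance.
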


\begin{proof}
By induction.
\end{proof}

\begin{theorem} \label{normalization}
Every derivation is normalizable.
\end{theorem}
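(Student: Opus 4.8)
The plan is to lift the propositional normalization of Lemma~\ref{lemmaNormProp0} to the whole calculus by furnishing a reduction for every detour built from the label- and context-manipulating rules, and then to run the same van Dalen procedure over the enlarged set of reductions. First I would identify the new introduction/elimination pairs and the maximum formulas they create: the context round-trip (rule~13 against rule~14), the world-universal pair (rule~15 whose conclusion is the premise of rule~16), the world-existential pair (rule~17 as major premise of rule~18), the neighbourhood-existential pair (rule~19 as major premise of rule~20), and the neighbourhood-universal pair (rule~21 instantiated by the wild-card rule~22). The purely structural rules 23--26 and rule~30 introduce no connective that is later eliminated, so they create no maximum formula; the case-analysis rules 27--29, like rule~5, will instead require permutative treatment.

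For the context detours the reduction is immediate: rules 13 and 14 are mutually inverse, merely shuttling the top label between the attribute and the context, so a round-trip of rules 13 and 14 in either order is deleted and the premise kept in its original context. The universal detours reduce by instantiation: a derivation of $\alpha^{\Sigma}$ in context $\Delta,u$ that is generalized by rule~15 (respectively rule~21) and then specialized back by rule~16 (respectively the wild-card rule~22) is replaced by the derivation in which the eigenvariable is renamed to the instance, which is sound exactly because of the variable-exchange lemma proved just above together with the eigenvariable restrictions on the introduction. The existential detours (17/18 and 19/20) reduce in the familiar grafting style: the subderivation witnessing the existential is substituted into the minor premise in place of the discharged hypothesis $[\alpha^{\Sigma}]$, once more renaming the eigenvariable through the substitution lemma so that no restriction of rule~18 or~20 is violated.

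With all reductions in hand I would apply the van Dalen strategy exactly as in the propositional case, selecting a maximum formula of maximum rank---one with no maximum formula of higher degree standing above it in its subderivation---and firing the matching reduction, so that the multiset of ranks of the remaining maximum formulas strictly decreases and the algorithm terminates in a normal derivation.

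The hard part will be the permutative conversions forced by the rules with discharged or case-splitting premises: disjunction elimination (rule~5), the two existential eliminations (rules~18 and~20), and the three total-order rules (rules~27--29). When the conclusion of one of these sits as the major premise of a further elimination, the elimination must first be pushed up into each branch, a move that can duplicate subderivations and spawn fresh maximum formulas; I expect to tame this precisely as Prawitz and van Dalen do, by reasoning about maximum \emph{segments} rather than isolated occurrences and by always reducing a segment of maximum rank first. An additional wrinkle specific to this system is that the context shifts (rules 13--14, 21--22) may separate an introduction from its matching elimination; since these shifts never raise the rank, I would permute them out of the way to bring each intro/elim pair into contact before applying its reduction.
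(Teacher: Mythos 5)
Your overall architecture is the paper's: prove the propositional case with fixed contexts first (Lemma \ref{lemmaNormProp0}), add reductions for the label-manipulating rules, and run the Prawitz/van Dalen procedure on a maximum formula of maximal rank. Your reductions for the pairs 13/14 (delete the round trip), 15/16 and 21/22 (delete, with eigenvariable renaming), and 17/18, 19/20 (graft the witnessing derivation into the minor premise, renaming variables through the substitution lemma) are exactly the reductions the paper displays, including its substitution notation $(u,v \mid w,u)$. Where you genuinely diverge is on disjunction and the case-analysis rules: the paper performs no permutative conversions at all. It restricts normalization to the fragment $\m{L}_{-}$ without $\vee$, recovering $\alpha \vee \beta$ as $\neg \alpha \ra \beta$, so rule 5 never appears; and it simply asserts that rules 27--30 ``produce no maximum formula nor any unnecessary detour'', so no segment machinery is introduced. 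You instead keep $\vee$ and rules 27--29 and tame them with maximum segments; that is more laborious but arguably more complete, since the paper never checks that the definitional elimination of $\vee$ is compatible with the context restrictions on rules 4--6 (no universal quantifier in the context), nor that conclusions of 27--29 standing as major premises of further eliminations are harmless.

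Two pieces of the paper's proof are absent from your proposal. First, rule 9 (absurd expansion): the paper shows that a rule-9 inference feeding rules 7 or 8 creates a detour and gives an explicit reduction collapsing it to a single application of rule 8; you never mention rule 9. Second, the paper treats rules 23--26 as producing ``unnecessary detours'' --- cycles of propagation/inclusion inferences that return to the same formula in the same context (it displays the reduction for a rule-25 cycle of size 3) --- and eliminates them, whereas you classify 23--26 as unproblematic. Under the standard reading of ``normal'' (no maximum formulas or segments) your argument suffices; under the paper's broader reading (no maximum formulas and no such detours) you would also need these two families of reductions. Both are straightforward deletions, so this is an incompleteness of coverage rather than a wrong step.
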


\begin{proof}
We present the argument for the introduction of the remaining rules. The introduction of the rule 9 cannot produce maximum formulae, but it may produce detours, considering the rules 7 and 8, if the considered subderivation ($\Pi_{2}$ below) do not discharge any hypothesis of the upper subderivation ($\Pi_{1}$ below). But such detours may be substituted by one application of the rule 8 as shown below:\begin{center}\begin{tabular}{lcr}
\AxiomC{$\Pi_{1}$} \RightLabel{$\Delta$}
\UnaryInfC{$\bot$} \LeftLabel{rule 9:}
\UnaryInfC{$\bot_{n}$}
\UnaryInfC{$\Pi_{2}$} \RightLabel{$\Delta$}
\UnaryInfC{$\bot$} \LeftLabel{rule 8:} \RightLabel{$\Delta$}
\UnaryInfC{$\beta^{\Omega}$} \RightLabel{$\Delta$}
\UnaryInfC{$\Pi_{3}$} \DisplayProof & $\rhd$ &

\AxiomC{$\Pi_{2}$} \RightLabel{$\Delta$}
\UnaryInfC{$\Pi_{2}$} \RightLabel{$\Delta$}
\UnaryInfC{$\bot$} \LeftLabel{rule 8:} \RightLabel{$\Delta$}
\UnaryInfC{$\beta^{\Omega}$} \RightLabel{$\Delta$}
\UnaryInfC{$\Pi_{3}$} \DisplayProof\\&&\\
\AxiomC{[$\neg (\beta^{\Omega})$]} \RightLabel{$\Delta$}
\UnaryInfC{$\neg (\beta^{\Omega})$}
\AxiomC{$\Pi_{1}$} \RightLabel{$\Delta$}
\UnaryInfC{$\bot$} \LeftLabel{rule 9:}
\UnaryInfC{$\bot_{n}$}
\BinaryInfC{$\Pi_{2}$} \RightLabel{$\Delta$}
\UnaryInfC{$\bot$} \LeftLabel{rule 7:} \RightLabel{$\Delta$}
\UnaryInfC{$\beta^{\Omega}$} \RightLabel{$\Delta$}
\UnaryInfC{$\Pi_{3}$} \DisplayProof & $\rhd$ &
\AxiomC{$\Pi_{1}$} \RightLabel{$\Delta$}
\UnaryInfC{$\bot$} \LeftLabel{rule 8:} \RightLabel{$\Delta$}
\UnaryInfC{$\beta^{\Omega}$} \RightLabel{$\Delta$}
\UnaryInfC{$\Pi_{3}$} \DisplayProof
\end{tabular}\end{center}

The rules 13 and 14 produce a detour only if the conclusion of one is taken as an hypothesis of the other rule for the same context and, as above, the considered subderivation do not discharge any hypothesis of the upper subderivation. In this case, if we eliminate such detour, as below, we may produce a new maximum formula of the case of lemma \ref{lemmaNormProp0}. We cannot produce new detours by doing that elimination because, if there is any detour surrounding the formula $\alpha^{\Sigma}$, it must exist before the elimination. If we start from the up and left most detour, we eliminate the detours until we produce a derivation that contains only maximum formulas of the case of lemma \ref{lemmaNormProp0}. The same argument works for the rules 15 and 16 and to the rules 21 and 22.\begin{center}\begin{tabular}{ccccccc}
\AxiomC{$\Pi_{1}$} \RightLabel{$\Delta$}
\UnaryInfC{$\alpha^{\Sigma,\phi}$} \LeftLabel{rule 13:} \RightLabel{$\Delta,\phi$}
\UnaryInfC{$\alpha^{\Sigma}$}
\UnaryInfC{$\Pi_{2}$} \RightLabel{$\Delta,\phi$}
\UnaryInfC{$\alpha^{\Sigma}$} \LeftLabel{rule 14:} \RightLabel{$\Delta$}
\UnaryInfC{$\alpha^{\Sigma,\phi}$} \RightLabel{$\Delta$}
\UnaryInfC{$\Pi_{3}$} \DisplayProof & $\rhd$ &
\AxiomC{$\Pi_{1}$} \RightLabel{$\Delta$}
\UnaryInfC{$\alpha^{\Sigma,\phi}$} \RightLabel{$\Delta$}
\UnaryInfC{$\Pi_{3}$} \DisplayProof & \quad \quad &
\AxiomC{$\Pi_{1}$} \RightLabel{$\Delta,\phi$}
\UnaryInfC{$\alpha^{\Sigma}$} \LeftLabel{rule 14:} \RightLabel{$\Delta$}
\UnaryInfC{$\alpha^{\Sigma,\phi}$} \RightLabel{$\Delta$}
\UnaryInfC{$\Pi_{2}$} \RightLabel{$\Delta$}
\UnaryInfC{$\alpha^{\Sigma,\phi}$} \LeftLabel{rule 13:} \RightLabel{$\Delta,\phi$}
\UnaryInfC{$\alpha^{\Sigma}$} 
\UnaryInfC{$\Pi_{3}$} \DisplayProof
 & $\rhd$ &
\AxiomC{$\Pi_{1}$} \RightLabel{$\Delta,\phi$}
\UnaryInfC{$\alpha^{\Sigma}$} \RightLabel{$\Delta,\phi$}
\UnaryInfC{$\Pi_{3}$} \DisplayProof 
\end{tabular}\end{center}\begin{center}\begin{tabular}{ccc}
\AxiomC{$\Pi_{1}$} \RightLabel{$\Delta,N$}
\UnaryInfC{$\alpha^{\Sigma}$} \LeftLabel{rule 21:} \RightLabel{$\Delta,\circledast$}
\UnaryInfC{$\alpha^{\Sigma}$}
\AxiomC{$\phantom{-}$} \RightLabel{$\Delta,N$}
\UnaryInfC{$\beta^{\Omega}$} \LeftLabel{rule 22:} \RightLabel{$\Delta,N$}
\BinaryInfC{$\alpha^{\Sigma}$} 
\alwaysNoLine
\UnaryInfC{$\phantom{.}$} \DisplayProof & $\rhd$ &
\AxiomC{$\Pi_{1}$} \RightLabel{$\Delta,N$}
\UnaryInfC{$\alpha^{\Sigma}$} \DisplayProof\\&&\\
\AxiomC{$\Pi_{1}$} \RightLabel{$\Delta,\circledast$}
\UnaryInfC{$\alpha^{\Sigma}$}
\AxiomC{$\phantom{-}$} \RightLabel{$\Delta,N$}
\UnaryInfC{$\beta^{\Omega}$} \LeftLabel{rule 22:} \RightLabel{$\Delta,N$}
\BinaryInfC{$\alpha^{\Sigma}$} \LeftLabel{rule 21:} \RightLabel{$\Delta,\circledast$}
\UnaryInfC{$\alpha^{\Sigma}$} \DisplayProof & $\rhd$ &
\AxiomC{$\Pi_{1}$} \RightLabel{$\Delta,\circledast$}
\UnaryInfC{$\alpha^{\Sigma}$} \DisplayProof
\end{tabular}\end{center}

The introduction of the rules 17 and 19 preserves normalization. These rules produce a detour only if the conclusion of one is taken as an hypothesis of the other rule for the same context. In this case, if we eliminate such detour, as below, we may produce a new maximum formula of the case of lemma \ref{lemmaNormProp0}. We cannot produce new detours by doing that elimination because, if there is any detour surrounding the formula $\alpha^{\Sigma}$, it must exist before the elimination. If we start from the up and left most detour, we eliminate the detours until we produce a derivation that contains only maximum formulas of the case of lemma \ref{lemmaNormProp0}. We used the representation $(u , v \mid w,u)$ for the substitution of all occurrences of the variable $u$ by the variable $w$, that do not occur in $\Pi_{2}$, $\Theta$ or $\beta^{\Omega}$, and the subsequent substitution of all occurrences of the variable $v$ by the variable $u$. The same argument works for the rules 18 and 20.\begin{small}\begin{center}\begin{tabular}{lcl}
\AxiomC{$\Pi_{1}$} \RightLabel{$\Delta,N,u$}
\UnaryInfC{$\alpha^{\Sigma}$} \LeftLabel{rule 17:} \RightLabel{$\Delta,N,\bullet$}
\UnaryInfC{$\alpha^{\Sigma}$}
\AxiomC{[$\alpha^{\Sigma}$]} \RightLabel{$\Delta,N,v$}
\UnaryInfC{$\Pi_{2}$} \RightLabel{$\Theta$}
\UnaryInfC{$\beta^{\Omega}$} \LeftLabel{rule 19:} \RightLabel{$\Theta$}
\BinaryInfC{$\beta^{\Omega}$} \DisplayProof & $\rhd$ &
\AxiomC{$\Pi_{1}$} \RightLabel{$\Delta,N,u$}
\UnaryInfC{$\alpha^{\Sigma}$} \RightLabel{$\Delta,N,u$}
\UnaryInfC{$\Pi_{2} (u , v \mid w,u)$} \RightLabel{$\Theta (u , v \mid w,u)$}
\UnaryInfC{$\beta^{\Omega} (u , v \mid w,u)$} \DisplayProof\\&&\\
\AxiomC{$\Pi_{1}$} \RightLabel{$\Delta,N,\bullet$}
\UnaryInfC{$\alpha^{\Sigma}$}
\AxiomC{[$\alpha^{\Sigma}$]} \RightLabel{$\Delta,N,u$}
\UnaryInfC{$\alpha^{\Sigma}$} \LeftLabel{rule 17:} \RightLabel{$\Delta,N,\bullet$}
\UnaryInfC{$\alpha^{\Sigma}$}
\UnaryInfC{$\Pi_{2}$} \RightLabel{$\Theta$}
\UnaryInfC{$\beta^{\Omega}$} \LeftLabel{rule 19:} \RightLabel{$\Theta$}
\BinaryInfC{$\beta^{\Omega}$} \DisplayProof
 & $\rhd$ &
\AxiomC{$\Pi_{1}$} \RightLabel{$\Delta,N,\bullet$}
\UnaryInfC{$\alpha^{\Sigma}$}
\UnaryInfC{$\Pi_{2}$} \RightLabel{$\Theta$}
\UnaryInfC{$\beta^{\Omega}$} \DisplayProof 
\end{tabular}\end{center}\end{small}

The introduction of the rules 23 to 26 may produce no maximum formula but they produce unnecessary detours. We repeat the above arguments to eliminate them. The reduction for rule 24 is similar to the reduction for rule 23 and the reductions for rule 26 are similar to the reductions for rule 25. For rules 25 and 26 the reductions depend on the size of the cycles built to recover the same formula in the same context. We present only the case for a cycle of size 3. The rules 27 to 30 produce no maximum formula nor any unnecessary detour.\\[5pt]
\begin{tabular}{lcr}
\AxiomC{$\Pi_{1}$} \RightLabel{$\Delta,N$}
\UnaryInfC{$\alpha^{\Sigma,\bullet}$}
\AxiomC{$\Pi_{2}$} \RightLabel{$\Delta,M$}
\UnaryInfC{$\shneg N$} \LeftLabel{rule 23:} \RightLabel{$\Delta,M$}
\BinaryInfC{$\alpha^{\Sigma,\bullet}$} \RightLabel{$\Delta,M$}
\AxiomC{$\Pi_{3}$} \RightLabel{$\Delta,N$}
\UnaryInfC{$\shneg M$} \LeftLabel{rule 23:} \RightLabel{$\Delta,N$}
\BinaryInfC{$\alpha^{\Sigma,\bullet}$} \RightLabel{$\Delta,N$}
\UnaryInfC{$\Pi_{4}$}
\alwaysNoLine \UnaryInfC{$\phantom{-}$} \DisplayProof & $\rhd$ &
\AxiomC{$\Pi_{1}$} \RightLabel{$\Delta,N$}
\UnaryInfC{$\alpha^{\Sigma,\bullet}$} \RightLabel{$\Delta,N$}
\UnaryInfC{$\Pi_{4}$} \DisplayProof\\&&\\
\AxiomC{$\Pi_{1}$} \RightLabel{$\Delta,N$}
\UnaryInfC{$\shneg M$}
\AxiomC{$\Pi_{2}$} \RightLabel{$\Delta,M$}
\UnaryInfC{$\shneg P$} \LeftLabel{rule 25:} \RightLabel{$\Delta,N$}
\BinaryInfC{$\shneg P$} \RightLabel{$\Delta,N$}
\AxiomC{$\Pi_{3}$} \RightLabel{$\Delta,P$}
\UnaryInfC{$\shneg Q$} \LeftLabel{rule 25:} \RightLabel{$\Delta,N$}
\BinaryInfC{$\shneg Q$} \RightLabel{$\Delta,N$}
\AxiomC{$\Pi_{4}$} \RightLabel{$\Delta,Q$}
\UnaryInfC{$\shneg M$} \LeftLabel{rule 25:} \RightLabel{$\Delta,N$}
\BinaryInfC{$\shneg M$} \RightLabel{$\Delta,N$}
\UnaryInfC{$\Pi_{5}$}  \DisplayProof & $\rhd$ &
\AxiomC{$\Pi_{1}$} \RightLabel{$\Delta,N$}
\UnaryInfC{$\shneg M$} \RightLabel{$\Delta,N$}
\UnaryInfC{$\Pi_{5}$} \DisplayProof
\end{tabular}\end{proof}

\begin{definition}
Given a wff $\alpha^{\Sigma}$, the label rank $\aleph(\alpha^{\Sigma})$ is the depth of label nesting:
\begin{enumerate}
\item $\aleph(\alpha^{\Sigma}) = \aleph(\alpha) + s(\Sigma)/2$;
\item If $\alpha^{\Sigma} = \beta^{\Omega} \vee \gamma^{\Theta}$, then $\aleph(\alpha^{\Sigma}) = \max(\aleph(\beta^{\Omega}),\aleph(\gamma^{\Theta}))$;
\item If $\alpha^{\Sigma} = \beta^{\Omega} \wedge \gamma^{\Theta}$, then $\aleph(\alpha^{\Sigma}) = \max(\aleph(\beta^{\Omega}),\aleph(\gamma^{\Theta}))$;
\item If $\alpha^{\Sigma} = \beta^{\Omega} \ra \gamma^{\Theta}$, then $\aleph(\alpha^{\Sigma}) = \max(\aleph(\beta^{\Omega}),\aleph(\gamma^{\Theta}))$;
\item If $\alpha^{\Sigma} = \neg \beta^{\Omega}$, then $\aleph(\alpha^{\Sigma}) = \aleph(\beta^{\Omega})$;
\end{enumerate}
\end{definition}

\noindent Remark: by definition, the rank for a wff in $\bs{F}_{n}$ must be a natural number.

\begin{lemma} \label{depthModel}
Given a model $\m{M} = \ob \m{W} , \$ , \m{V} , \chi \cb$ and a $\alpha^{\Sigma} \in \bs{F}_{n}$, if $\aleph(\alpha^{\Sigma}) = k$, then we only need to verify the worlds of $\bigtriangleup^{\$}_{\vec{k}}(\chi)$ to know if $\m{M} \models \alpha^{\Sigma}$ holds.
\end{lemma}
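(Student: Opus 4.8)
The plan is to strengthen the statement so that models and templates can be handled by a single mutual induction, since the labels of a formula alternate in sort as one reads the attribute. I would prove simultaneously two claims. (M): for $\alpha^{\Sigma}\in\bs{F}_{n}$ with $\aleph(\alpha^{\Sigma})=k$, the truth value of $\ob\m{W},\$,\m{V},\chi\cb\models\alpha^{\Sigma}$ depends only on $\$$ and $\m{V}$ restricted to $\bigtriangleup^{\$}_{\vec{k}}(\chi)$. (T): for $\alpha^{\Sigma}\in\bs{F}_{w}$ with $\lfloor\aleph(\alpha^{\Sigma})\rfloor=k$, the truth value of $\ob\m{W},\$,\m{V},\chi,N\cb\models\alpha^{\Sigma}$ depends only on $N$ and on $\$,\m{V}$ restricted to $\bigcup_{w\in N}\bigtriangleup^{\$}_{\vec{k}}(w)$. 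The lemma is exactly the empty-attribute case of (M).

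Before the induction I would record the geometric fact that drives it: $\bigcup_{w\in\bigtriangleup^{\$}_{1}(\chi)}\bigtriangleup^{\$}_{\vec{k-1}}(w)\subseteq\bigtriangleup^{\$}_{\vec{k}}(\chi)$. This reduces to the pointwise claim $\bigtriangleup^{\$}_{m}(w)\subseteq\bigtriangleup^{\$}_{m+1}(\chi)$ for $w\in\bigtriangleup^{\$}_{1}(\chi)$, proved by induction on $m$: for $m=0$ it is $\{w\}\subseteq\bigtriangleup^{\$}_{1}(\chi)$, and the step follows from $\bigtriangleup^{\$}_{m+1}(v)=\bigcup_{x\in\bigtriangleup^{\$}_{m}(v)}\bigcup\$(x)$ and the monotonicity of that operation in its index set. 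Intuitively this says that consuming one neighbourhood label followed by one world label pushes the reference world exactly one $\bigtriangleup$-level outward, which is what the summand $s(\Sigma)/2$ in the definition of $\aleph$ is counting.

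For the induction itself the base cases are immediate: an atomic $\alpha\in\bs{F}_{n}$ and the constants $\top_{n},\bot_{n}$ are decided by $\m{V}$ at $\chi\in\bigtriangleup^{\$}_{0}(\chi)$, giving $k=0$; and the template atoms $\top_{w},\bot_{w}$ and the inclusion atoms (clauses 9, 10, 18) of rank $0$ are decided by $N$ and by $\$(\chi)$, which is consistent with (T) at $k=0$ because $\bigcup_{w\in N}\bigtriangleup^{\$}_{\vec 0}(w)=N$. The Boolean cases $\neg,\wedge,\vee,\ra$ leave the evaluated structure fixed and set the rank to the maximum of the immediate subranks, so by monotonicity of $k\mapsto\bigtriangleup^{\$}_{\vec{k}}$ the smaller cones needed by the induction hypotheses all sit inside the $\bigtriangleup^{\$}_{\vec{k}}$-cone and simply get combined.

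The label clauses spend the depth budget. Taking $\alpha^{\Sigma,\circledast}$ of rank $k$ as the representative case, its satisfaction is reduced to $\ob\m{W},\$,\m{V},\chi,N'\cb\models\alpha^{\Sigma}$ for every $N'\in\$(\chi)$, where $\alpha^{\Sigma}\in\bs{F}_{w}$ has $\lfloor\aleph\rfloor=k-1$; applying (T), each query reads only $N'$ and $\bigcup_{w\in N'}\bigtriangleup^{\$}_{\vec{k-1}}(w)$, and since every $N'\in\$(\chi)$ satisfies $N'\subseteq\bigtriangleup^{\$}_{1}(\chi)$ the containment above confines the whole union to $\bigtriangleup^{\$}_{\vec{k}}(\chi)$, while $\$(\chi)$ is itself read at $\chi$. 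The clauses for $\circledcirc$ and for the world quantifiers $\ast,\bullet$ are formally identical. I expect the main obstacle to be the variable labels (the $N$- and $u$-clauses): the assignment $\sigma$ can send a free neighbourhood or world variable to a set outside the cone of $\chi$, so (M) and (T) as stated are faithful only on \emph{sentences} --- the fragment $\bs{S}_{n}$ with no free variables and no inclusion atoms --- where every surviving label is a genuine quantifier ranging over $\$(\chi)$ or over $N$ and therefore never escapes the $\bigtriangleup$-cone. Since this is exactly the fragment that the decidability argument consumes, I would state and apply the lemma on $\bs{S}_{n}$, handling the general $\bs{F}_{n}$ case only under the proviso that free variables are interpreted inside the cone.
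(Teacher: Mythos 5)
Your proposal is correct, and its core mechanism is the same as the paper's: an induction in which consuming one neighbourhood label followed by one world label pushes the reference world exactly one level outward in the $\bigtriangleup$-hierarchy, closed off by the containment $\bigtriangleup^{\$}_{\vec{k}}(w) \subseteq \bigtriangleup^{\$}_{\vec{k+1}}(\chi)$ for $w \in \bigtriangleup^{\$}_{1}(\chi)$. The difference is in how much is made explicit, and here you genuinely go beyond the paper. The paper's proof is a single induction on the rank $k$: a formula of rank $k+1$ has a neighbourhood-labelled subformula, which has a world-labelled subformula of rank $k$, and ``in the worst case'' one inspects all neighbourhoods of $\$(\chi)$ and all worlds in them before invoking the induction hypothesis; the template stage, the containment fact, and the half-integer rank bookkeeping are left tacit. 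You isolate each of these: the mutual claims (M)/(T) give templates their own induction hypothesis, the geometric lemma gets its own inner induction, and the floor handles ranks of $\bs{F}_{w}$-formulas. More importantly, you notice something the paper's proof passes over in silence: for formulas containing free variables or the atoms $\shneg N$, $\shpos N$, the strict locality claim fails, because the satisfaction clauses consult $\sigma$ and the origin world's system $\$(\chi)$, which need not lie inside the cone that (T) permits. Restricting to the sentence fragment $\bs{S}_{n}$ is a sound repair and costs nothing, since the finite models that feed the decidability argument are produced by lemma \ref{consistentModel} for exactly that fragment, and the paper itself imposes the same sentence restriction in its completeness section ``due to occurrences of variables.'' An alternative that keeps all of $\bs{F}_{n}$ is to observe that every variable and inclusion clause is guarded (by $\sigma(N)\in\$(\chi)$, $\sigma(u)\in N$, $\sigma(M)\in\$(\chi)$), so an assignment pointing outside the cone merely falsifies the guard and never forces an evaluation outside it; but your restricted statement is cleaner and matches how the lemma is actually used.
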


\begin{proof}
If $\aleph(\alpha^{\Sigma}) = 0$, then $\alpha^{\Sigma}$ is a propositional formula. In this case, we need only to verify that the formula holds at $\bigtriangleup^{\$}_{\vec{0}}(\chi) = \{\chi\}$. If $\aleph(\alpha^{\Sigma}) = k + 1$, then it must have a subformula of the form $(\beta^{\Omega})^{\phi}$, where $\phi$ is a neighbourhood label. In the worst case, we need to verify all neighbourhoods of $\$(\chi)$ to assure that the property described by $\beta^{\Omega}$ holds in all of them. $\beta^{\Omega}$ must have a subformula of the form $(\gamma^{\Theta})^{\psi}$, where $\psi$ is a world label. In the worst case, we need to verify all worlds of $\$(\chi)$ to ensure that the property described by $\gamma^{\Theta}$ holds in all of them. But $\aleph(\gamma^{\Theta}) = k$ and, by the induction hypothesis, we need only to verify in the worlds of $\bigtriangleup^{\$}_{\vec{k}}(w)$, for every $w \in \bigtriangleup^{\$}_{1}(\chi)$. So we need, at the worst case, to verify the worlds of $\bigtriangleup^{\$}_{\vec{k+1}}(w)$.
\end{proof}

\begin{lemma} \label{finiteVerification}
If $\m{M} = \ob \m{W} , \$ , \m{V} , \chi \cb \models \alpha^{\Sigma}$, then there is a finite model $\m{M}' = \ob \m{W}' , \$' , \m{V}' , \chi' \cb$, such that $\m{M}' \models \alpha^{\Sigma}$.
\end{lemma}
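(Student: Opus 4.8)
The plan is to combine the depth bound already established in Lemma~\ref{depthModel} with a complementary width bound obtained by a formula-guided selection of witnesses. Set $k = \aleph(\alpha^{\Sigma})$. By Lemma~\ref{depthModel}, whether $\m{M} \models \alpha^{\Sigma}$ holds is decided entirely inside $\bigtriangleup^{\$}_{\vec{k}}(\chi)$, so from the outset I may discard every world not reachable from $\chi$ within $k$ rounds of $\$$. This makes the relevant part of $\m{M}$ of bounded depth, but it may still be infinitely branching: a single $\$(w)$ can hold infinitely many neighbourhoods and each neighbourhood infinitely many worlds. The remaining task is therefore to cut each level down to a finite, truth-preserving sample.

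First I would traverse the quantifier/connective tree of $\alpha^{\Sigma}$ starting from $\m{M}$, recording at each node the structure (model or template) at which the corresponding subformula is being evaluated, exactly as the clauses of Definition~\ref{satisfactionDefinition} prescribe. Whenever the traversal meets an \emph{existential demand} --- a positively occurring $\circledcirc$ or $\bullet$, or a negatively occurring $\circledast$ or $\ast$ --- I select one witnessing neighbourhood or world that makes (respectively falsifies) the inner subformula, and recurse into it; whenever it meets a \emph{universal demand} I do not branch on new elements but merely recurse into the inner subformula at every element already selected. Since neighbourhoods must be non-empty, each selected neighbourhood also contributes at least one selected world. Collecting all selected worlds yields a finite set $\m{W}'$ (the recursion follows the syntactic structure of $\alpha^{\Sigma}$ to depth at most $2k$ and branches at most binarily, hence terminates), and I let $\$'$ and $\m{V}'$ be the restrictions of $\$$ and $\m{V}$ to the selected neighbourhoods and worlds, with $\chi' = \chi$.

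To see that $\m{M}' = \ob \m{W}' , \$' , \m{V}' , \chi' \cb \models \alpha^{\Sigma}$, I would prove, by induction on subformula structure, the two-sided claim that at every selected structure each relevant subformula keeps its $\m{M}$-truth value. Existential demands are met because their chosen witness was retained; universal demands are preserved because deleting elements can only help a universal, and the elements that remain already satisfied (respectively falsified) the inner formula in $\m{M}$, so the inductive hypothesis applies to them. The two-sided formulation is exactly what lets the argument pass through $\neg$ and $\ra$, where the polarity of the nested quantifiers flips and a universal turns into an existential witness demand or vice versa.

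The main obstacle I anticipate is not the finiteness count but the bookkeeping that keeps $\m{M}'$ a legitimate structure while preserving truth. Restricting neighbourhoods to $\m{W}'$ must not collapse the nesting: I must ensure $\$'(w)$ stays totally ordered by inclusion, and, wherever $\alpha^{\Sigma}$ can observe a strict inclusion between two selected neighbourhoods, I must retain at least one witnessing world in the larger-but-not-smaller set so that the inclusion remains strict after restriction (otherwise subformulas reading the inclusion order could change value). Handling negation correctly --- so that a subformula \emph{false} in $\m{M}$ stays false in $\m{M}'$, and not merely that true subformulas stay true --- is the delicate point, and it is precisely why the induction must be carried out for truth and falsity simultaneously rather than one-directionally.
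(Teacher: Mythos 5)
Your route is genuinely different from the paper's: the paper does not thin the given model at all, but re-reads the canonical-model construction of Lemma~\ref{consistentModel}, observing that in that construction each existential label consumed exactly one witness (a maximally consistent set) and the universal labels consumed none, so only finitely many neighbourhoods and worlds are ever gathered. Your plan --- the depth bound of Lemma~\ref{depthModel} combined with a polarity-aware selection of witnesses inside the given $\m{M}$, followed by restriction of $\$$ and $\m{V}$ --- is a legitimate and more self-contained alternative, and your attention to negation polarity and to preserving strict inclusions among selected neighbourhoods (the $\shneg N$, $\shpos N$ clauses) is well placed. But as written the selection step has a genuine gap.

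The problem is the clause ``whenever it meets a universal demand I \ldots{} recurse into the inner subformula at every element \emph{already} selected.'' In a single pass over the formula tree, elements selected \emph{after} a universal node has been visited are never checked against that universal's inner subformula, so the inductive hypothesis --- which covers only the ``relevant'', i.e.\ visited, structure--subformula pairs --- says nothing about them; and strengthening the claim to \emph{all} pairs of selected structures and subformulas makes its existential half false, since an unvisited existential subformula may have had all of its witnesses discarded. Concretely, take $\alpha^{\Sigma} = (\delta^{\bullet,\circledcirc,\ast} \wedge \gamma^{\bullet})^{\circledcirc}$ and a left-to-right traversal: the universal conjunct $\delta^{\bullet,\circledcirc,\ast}$ is visited while no world of the witnessing neighbourhood $N$ has yet been selected, and only afterwards does $\gamma^{\bullet}$ select a world $w \in N$. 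Then the pair $(w,\delta^{\bullet,\circledcirc})$ is never traversed, no neighbourhood of $\$(w)$ is ever selected, and in $\m{M}'$ we get $\$'(w) = \emptyset$, so $\delta^{\bullet,\circledcirc}$ is false at $w$ and hence $\delta^{\bullet,\circledcirc,\ast}$ fails at the restricted neighbourhood $\{w\}$, even though both were true in $\m{M}$; since $\{w\}$ is the only neighbourhood left in $\$'(\chi)$, the whole formula fails in $\m{M}'$. The repair is to iterate the selection to a fixpoint (equivalently, to process demands stratified by label depth): whenever a new element enters a scope, every universal demand on that scope must be re-run at it. This terminates, because at each depth only the finitely many existential nodes of that depth can add elements --- at most one per selected structure --- and recursing into a newly added element creates demands only at strictly greater depth. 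With that closure your two-sided induction goes through; without it, the universal case of the induction does not follow.
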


\begin{proof}
In the proof of lemma \ref{consistentModel}, we verified the pertinence of the formulas in maximally n-consistent sets and maximally w-consistent sets based on the structure of the given formula to stablish the satisfying relation. Each existential label required the existence of one neighbourhood or world for the verification of the validity of a given subformula. The universal label for neighbourhood required no neighbourhood at all. It only added properties to the neighbourhoods that exist in a given system of neighbourhoods. The procedure is a demonstration that, for any wff in $\bs{F}_{n}$, we only need to gather a finite set of neighbourhoods and worlds.
\end{proof}

\begin{theorem} \label{decidability}
PUC-Logic is decidable.
\end{theorem}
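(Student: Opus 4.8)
The plan is to reduce the derivability problem to a finite search over models, exploiting the soundness and completeness theorems together with the finite model property of Lemma \ref{finiteVerification} and the depth bound of Lemma \ref{depthModel}. It suffices to decide, for a given $\alpha^{\Sigma} \in \bs{F}_{n}$ presented in the empty context, whether $\vdash \alpha^{\Sigma}$. By Theorem \ref{soundness} and the completeness theorem, $\vdash \alpha^{\Sigma}$ holds iff $\models \alpha^{\Sigma}$, i.e. iff $\alpha^{\Sigma}$ is an n-tautology. By clause (2) of Definition \ref{satisfactionDefinition}, for any model $\m{M}$ we have $\m{M} \not\models \alpha^{\Sigma}$ iff $\m{M} \models \neg(\alpha^{\Sigma})$; hence $\alpha^{\Sigma}$ is valid iff $\neg(\alpha^{\Sigma})$ is unsatisfiable. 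The decision problem therefore becomes: is $\neg(\alpha^{\Sigma})$ satisfiable?

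Next I would turn the finite model property into an \emph{effective} bound. Lemma \ref{finiteVerification} guarantees that if $\neg(\alpha^{\Sigma})$ has a model at all, then it has a finite one; and its proof shows that the only neighbourhoods and worlds that must actually be present are the witnesses demanded by the existential labels $\circledcirc$ and $\bullet$, while the universal labels $\circledast$ and $\ast$ add no new points. Thus the branching of the required structure at each level is bounded by the number $e$ of existentially labelled subformulas of $\neg(\alpha^{\Sigma})$. Lemma \ref{depthModel} bounds the depth: since $\aleph(\neg(\alpha^{\Sigma})) = \aleph(\alpha^{\Sigma}) = k$, only the worlds of $\bigtriangleup^{\$}_{\vec{k}}(\chi)$ are relevant to the truth of $\neg(\alpha^{\Sigma})$ at the reference world $\chi$. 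Combining the two yields a computable bound $B = B(e,k)$ (for instance $\sum_{i=0}^{k} e^{i}$) on the number of worlds of a minimal countermodel, and a corresponding bound on the neighbourhoods occurring in $\$$.

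Finally, I would enumerate. Up to isomorphism there are only finitely many structures $\ob \m{W} , \$ , \m{V} , \chi \cb$ with $|\m{W}| \le B$, with $\$$ a nested (totally inclusion-ordered) neighbourhood assignment over $\m{W}$, and with $\m{V}$ ranging over the finitely many proposition symbols actually occurring in $\alpha^{\Sigma}$ (the truth values of the remaining atoms being irrelevant). For each such finite structure, satisfaction of $\neg(\alpha^{\Sigma})$ is decidable, because every quantifier in Definition \ref{satisfactionDefinition} ranges over one of the finite sets $\$(\chi)$ or $N$. Searching all of them and reporting that $\alpha^{\Sigma}$ is valid exactly when none satisfies $\neg(\alpha^{\Sigma})$ gives a terminating decision procedure, which by the reduction above also decides $\vdash \alpha^{\Sigma}$.

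The main obstacle is the second step. Lemma \ref{finiteVerification} is phrased as a pure existence statement, so the crux is to extract from its construction (the one-witness-per-existential analysis), together with the depth control of Lemma \ref{depthModel}, a genuinely \emph{computable} bound $B$ on the size of the countermodel; only then is the search space of the last step finite and explicitly enumerable. A secondary point requiring care is confirming that the nesting constraint on $\$$ (total order under inclusion) and the interpretation of the $\shneg$/$\shpos$ relations are preserved by the finite structures produced, so that the enumerated candidates are genuine models in the sense of Definition \ref{satisfactionDefinition}.
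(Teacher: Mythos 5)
Your proposal is correct and follows essentially the same route as the paper: by completeness, $\not\vdash \alpha^{\Sigma}$ yields a countermodel of $\alpha^{\Sigma}$, and the finite model property (Lemma \ref{finiteVerification}, with the depth control of Lemma \ref{depthModel}) reduces the question to finite structures. The difference is one of rigor rather than strategy: the paper's own proof is only two sentences and stops at ``there is a finite template satisfying the negation,'' leaving implicit how finiteness becomes an algorithm, whereas you correctly identify that one needs either a computable size bound on the countermodel (so the search space is explicitly enumerable) or, alternatively, a dovetailed enumeration of proofs and finite models. The effective bound you ask for is in fact supplied by the paper itself, but only after this theorem, in Lemma \ref{complexity} (worlds linear in the number of labels, depth bounded by $\aleph(\alpha^{\Sigma})$), which the paper uses for NP-completeness; read in that order, your reconstruction fills precisely the step the paper glosses over.
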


\begin{proof}
If $\not\vdash \alpha^{\Sigma}$, then it must be possible to find a template that satisfies the negation of the formula. By the lemma above, there is a finite template that satisfies this negation.\end{proof}

\begin{definition}
Every label occurrence $\phi$ inside a formula $\alpha^{\Sigma}$ is an index of a subformula $\beta^{\Omega,\phi}$. Every label occurrence $\phi$ has a relative label depth defined by $\flat(\phi) = \aleph(\alpha^{\Sigma}) - \aleph(\beta^{\Omega,\phi})$.
\end{definition}

\begin{lemma} \label{complexity}
Given $\alpha^{\Sigma} \in \bs{F}_{n}$, there is a finite model $\m{M} = \ob \m{W} , \$ , \m{V} , \chi \cb$, such that $\m{M} \models \alpha^{\Sigma}$ with the following properties: (a) $\m{W} = \bigtriangleup^{\$}_{\vec{k}}(\chi)$, where $k = \aleph(\alpha^{\Sigma})$; (b) For every world $w \in \bigtriangleup^{\$}_{n}(\chi)$, $\$(w)$ has at most the same number of neighbourhoods as labels $\phi$, such that $\flat(\phi)=n$; (c) Every neighbourhood $N \in \$(w)$ has at most the same number of worlds as the labels $\phi$, such that $\flat(\phi)=n+1/2$, plus the number of labels $\varphi$, such that $\flat(\varphi)=n$.
\end{lemma}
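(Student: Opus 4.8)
The plan is to refine the finite-model construction that underlies Lemmas \ref{consistentModel} and \ref{finiteVerification}, and to read off the size bounds by bookkeeping the depth at which each label forces a witness. Assuming $\alpha^{\Sigma}$ is satisfiable (otherwise there is no model and the statement is vacuous), I would fix a model $\m{M}_{0} \models \alpha^{\Sigma}$ and work inside the canonical model of Lemma \ref{consistentModel}, where neighbourhoods are maximal $w$-consistent sets and worlds are maximal $n$-consistent sets, adjoined only to witness the existential labels $\circledcirc$ and $\bullet$ occurring in $\hat{\Gamma}$. Universals $\circledast$ and $\ast$ add no new neighbourhood or world; they only constrain those already present.

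Property (a) is essentially Lemma \ref{depthModel} promoted to the carrier of the model: since $\aleph(\alpha^{\Sigma}) = k$, the truth of $\alpha^{\Sigma}$ depends only on the worlds of $\bigtriangleup^{\$}_{\vec{k}}(\chi)$, so I can discard every world beyond layer $k$ and set $\m{W} = \bigtriangleup^{\$}_{\vec{k}}(\chi)$ without disturbing satisfaction. The heart of the argument is a single depth-bookkeeping claim, proved by induction on formula structure using the observation that peeling one label off a subformula lowers $\aleph$ by exactly $\tfrac12$: a neighbourhood label $\phi$ occurs with integer $\flat(\phi)=n$ and, when evaluated, descends from a world of $\bigtriangleup^{\$}_{n}(\chi)$ into one of its neighbourhoods; a world label $\psi$ occurs with $\flat(\psi)=n+\tfrac12$ and descends from a neighbourhood of such a world into a world of $\bigtriangleup^{\$}_{n+1}(\chi)$.

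From this I would obtain (b): a neighbourhood is adjoined to $\$(w)$ only to witness some $\beta^{\Omega,\circledcirc}$, so the number of neighbourhoods of any $w \in \bigtriangleup^{\$}_{n}(\chi)$ is at most the number of neighbourhood-label occurrences at relative depth $n$ (counting all of them, including universals, gives a safe upper bound, since each occurrence forces at most one neighbourhood per world). The first summand of (c) is the analogous count for worlds: each world-existential $\varphi^{\Lambda,\bullet}$ at depth $n+\tfrac12$ forces at most one witness world inside a given neighbourhood. The second summand accounts for the Lewis total order: the neighbourhoods of $\$(w)$ must be strictly nested for inclusion, and to realise a chain $N_{1} \subsetneq \cdots \subsetneq N_{m}$ with $m$ bounded by the number of neighbourhood labels at depth $n$, one may have to adjoin at most one distinguishing world per step, so each neighbourhood carries at most that many padding worlds beyond its existentially demanded ones.

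The main obstacle I anticipate is precisely this attribution of witnesses to $\flat$-levels under propositional branching: because $\aleph$ is defined by a maximum over conjuncts and disjuncts, a shallow subformula can receive a $\flat$-value larger than the layer at which it is actually evaluated, so the level-by-level correspondence must be justified as an upper bound rather than an equality. I would handle this by unravelling $\m{M}$ into a tree-shaped model of depth $k$ and checking that the witnesses demanded at each layer can be packed into the label budget for that layer, and I would treat the strict-nesting padding (the second summand of (c)) with particular care, since it is the one place where the total-order condition on spheres, rather than a quantifier, drives the count.
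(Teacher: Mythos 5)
You follow the paper's own route in every respect in which your plan is definite: (a) from Lemmas \ref{finiteVerification} and \ref{depthModel}; (b) and the first summand of (c) by charging one witness to each existential occurrence in the canonical construction of Lemma \ref{consistentModel}, with universals costing nothing; and the second summand of (c) charged to the strict nesting of the total order, one padding world per additional neighbourhood. Your satisfiability proviso is also the right reading of the statement (the paper omits it, but without it the lemma is trivially false for $\bot_{n}$). The one real difference is that the paper never mentions the obstacle you raise at the end: its proof simply asserts that labels $\phi$ with $\flat(\phi)=n$ contribute witnesses at the worlds of $\bigtriangleup^{\$}_{n}(\chi)$. You were right to be suspicious of exactly that step, but your proposed way out --- unravelling the model into a tree and ``checking that the witnesses demanded at each layer can be packed into the label budget for that layer'' --- cannot be carried out: the packing is genuinely impossible, not merely delicate.

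The reason is that $\flat$ is defined through $\aleph$, and $\aleph$ takes maxima across $\wedge$, so every label occurring in a shallow conjunct of a deep conjunction receives a $\flat$-value strictly larger than the level at which it is actually evaluated; the budgets at the shallow levels then undercount the witnesses demanded there. Concretely, take
\[
\alpha \;=\; \bigl((p\wedge r)^{\bullet,\circledcirc}\wedge(p\wedge\neg r)^{\bullet,\circledcirc}\wedge(\neg p\wedge r)^{\bullet,\circledcirc}\wedge(\neg p\wedge\neg r)^{\bullet,\circledcirc}\bigr)\wedge q^{\bullet,\circledcirc,\bullet,\circledcirc} \;\in\; \bs{F}_{n}.
\]
Here $\aleph(\alpha)=2$; all eight labels of the first four conjuncts get $\flat$-value $1$ or $3/2$ even though they are evaluated at the reference world $\chi$, and the only labels with $\flat=0$ and $\flat=1/2$ are the two outermost labels of the last conjunct. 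Properties (b) and (c) therefore demand a model in which $\$(\chi)$ has at most one neighbourhood containing at most $1+1=2$ worlds. But $\alpha$ is satisfiable, $\$(\chi)$ cannot be empty, and any neighbourhood system of $\chi$ witnessing the first four conjuncts must contain four pairwise distinct worlds, since $p\wedge r$, $p\wedge\neg r$, $\neg p\wedge r$, $\neg p\wedge\neg r$ are pairwise inconsistent. Hence no model of $\alpha$ satisfies (b) and (c): the lemma is false as stated, for your argument and for the paper's alike (the aggregate ``linear in the number of labels'' bound used by Theorem \ref{satisfabilityPUC} survives, since every label is still counted once at some level, but the level-by-level claim does not). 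The repair lies not in the proof but in the definition: take $\flat(\phi)$ to be half the number of label occurrences enclosing $\phi$ --- its syntactic nesting depth, which is the level at which $\phi$ is evaluated --- rather than a difference of $\aleph$-ranks; with that definition your layer-by-layer packing argument, and the paper's counting, go through.
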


\begin{proof}
(a) From lemmas \ref{finiteVerification} and \ref{depthModel}; (b) Every neighbourhood existential label $\phi$, such that $\flat(\phi) = 0$ contribute, by the procedure of lemma \ref{consistentModel}, to one neighbourhood to $\$(\chi)$ for the model $\m{M} = \ob \m{W} , \$ , \m{V} , \chi \cb$. The neighbourhood universal requires no additional neighbourhood to $\$(\chi)$ according to the explanation of lemma \ref{finiteVerification}. In the worst case, all neighbourhood labels $\phi$, such that $\flat(\phi) = 0$, are existential. The labels $\phi$, such that $\flat(\phi) = n$, $n \geq 0$, $n \in \mathbb{N}$ contributes to the systems of neighbourhoods of the worlds of $\bigtriangleup^{\$}_{n}(\chi)$. In the worst case, all of this labels contributes to system of neighbourhoods of a single world; (c) The same argument works for number of worlds in a neighbourhood except that the number of worlds in a neighbourhood is bigger than the number worlds in every neighbourhoods it contains. In the worst case, the smallest neighbourhood contains the same number of worlds as the number of labels $\phi$, such that $\flat(\phi)=n+1/2$. In this case, we must add at least one world to each neighbourhood that contains the smallest neighbourhood in the considered system of neighbourhoods. But the number of neighbourhoods is limited by the number of labels $\flat(\phi) = n$, $n \in \mathbb{N}$. So, the biggest neighbourhood reaches the asserted limit and the number of worlds of the model is linear in the number of labels.
\end{proof}

\begin{theorem} \label{satisfabilityPUC}
The problem of satisfiability is $\bs{NP}$-complete for PUC-Logic.
\end{theorem}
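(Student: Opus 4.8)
The plan is to split the $\bs{NP}$-completeness claim into its two standard halves: membership in $\bs{NP}$, established through a polynomial small-model argument built on Lemma \ref{complexity}, and $\bs{NP}$-hardness, established by a reduction from classical propositional satisfiability. I would treat the two directions independently and then combine them.

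For membership I would read Lemma \ref{complexity} as a polynomial small-model property. Whenever $\alpha^{\Sigma} \in \bs{F}_{n}$ is satisfiable, that lemma yields a finite model $\m{M} = \ob \m{W} , \$ , \m{V} , \chi \cb$ whose number of worlds is bounded linearly in the number of label occurrences of $\alpha^{\Sigma}$, hence linearly in $|\alpha^{\Sigma}|$. Consequently the whole model admits a description of size polynomial in $|\alpha^{\Sigma}|$: the world set $\m{W} = \bigtriangleup^{\$}_{\vec{k}}(\chi)$ with $k = \aleph(\alpha^{\Sigma})$, the restriction of $\$$ to these worlds (each $\$(w)$ carrying at most as many neighbourhoods as labels of the matching relative depth, by Lemma \ref{complexity}(b), and each neighbourhood at most linearly many worlds, by Lemma \ref{complexity}(c)), together with $\m{V}$ restricted to the atoms of $\alpha^{\Sigma}$ and the assignment $\sigma$ restricted to the variables of $\alpha^{\Sigma}$. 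This tabulated structure is the $\bs{NP}$ certificate, which a nondeterministic machine guesses.

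The second ingredient is a polynomial-time verifier for $\m{M} \models \alpha^{\Sigma}$. I would decide this by structural recursion on $\alpha^{\Sigma}$ following Definition \ref{satisfactionDefinition}: the propositional clauses are immediate, each labelled clause ($\circledast, \circledcirc, \ast, \bullet$, or a variable) ranges over the finitely many neighbourhoods of some $\$(w)$ or the finitely many worlds of a neighbourhood, and the inclusion atoms $\shneg N$, $\shpos N$ are read off directly from the explicitly listed nesting and its total order. By Lemma \ref{depthModel} only the worlds of $\bigtriangleup^{\$}_{\vec{k}}(\chi)$ are ever inspected, so each of the $O(|\alpha^{\Sigma}|)$ recursive calls touches only the polynomially many worlds and neighbourhoods of $\m{M}$; the verifier therefore runs in polynomial time. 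Guessing the certificate and running the verifier places satisfiability in $\bs{NP}$. For hardness I would reduce propositional $\bs{SAT}$: every classical propositional formula $\phi$ over $\wedge,\vee,\ra,\neg$ is already an element of $\bs{F}_{n}$ with empty attribute by Definition \ref{wffDefinition}, and by clauses $1$--$5$ of Definition \ref{satisfactionDefinition} a model $\ob \m{W} , \$ , \m{V} , \chi \cb$ satisfies $\phi$ exactly when the Boolean valuation $\alpha \mapsto [\chi \in \m{V}(\alpha)]$ satisfies $\phi$ in the usual sense. Hence $\phi$ is PUC-satisfiable iff it is propositionally satisfiable, the identity map is a trivially polynomial reduction, and PUC-satisfiability inherits $\bs{NP}$-hardness.

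The main obstacle I expect lies entirely on the membership side, in justifying rigorously that the model supplied by Lemma \ref{complexity} is genuinely of polynomial size and that model checking is polynomial. The delicate point is the cardinality bound of clause (c): since neighbourhoods are nested and linearly ordered, the largest neighbourhood can accumulate the worlds of every smaller one, so one must confirm the stated linear bound on the \emph{total} number of worlds rather than settle for a naive product bound, and then verify that the order $\sqsubset$ on neighbourhoods and the satisfaction of $\shneg N$, $\shpos N$ are checkable directly against the explicitly tabulated structure. Once that accounting is secured, the $\bs{NP}$ upper bound combines with the propositional hardness to give $\bs{NP}$-completeness.
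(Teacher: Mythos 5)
Your proposal is correct and takes essentially the same approach as the paper: $\bs{NP}$-hardness from the observation that label-free wffs are exactly classical propositional formulas (Cook), and the upper bound from the linear-size model guaranteed by Lemma~\ref{complexity}. The only difference is presentational --- the paper argues membership informally by decomposing satisfiability along the directed graph of Lemma~\ref{lemmaConsequence} into propositional problems at its endpoints, while you tighten the same idea into the standard guess-the-small-model-and-verify certificate argument.
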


\begin{proof}
A wff without labels is a propositional formula, then, by \cite{Cook}, the complexity of the satisfiability problem for PUC-Logic must be a least $\bs{NP}$-complete. Given a wff with labels, by lemma \ref{complexity}, we know that there is a directed graph, in the manner of lemma \ref{lemmaConsequence}, that depends on the satisfiability of the endpoints. Those endpoints are always propositional formulas. So, the complexity of the problem of satisfiability is the sum of complexities of the problems for each endpoint. It means that the biggest subformula dictates the complexity because the model of lemma \ref{complexity} has at most a linear number of worlds and the satisfiability problem is $\bs{NP}$-complete. So, the worst case is the wff without labels.
\end{proof}

\section{Counterfactual logics}

In \cite{Lewis}, Lewis presents many logics for counterfactual reasoning, organized according to some given conditions imposed on the nested neighbourhood function. The most basic logic is $\bs{V}$, which has no condition imposed on $\$$. Lewis presented the axioms and inference rules of $\bs{V}$ using his comparative possibility operator ($\cp$).

\begin{definition}
$\alpha^{\Sigma} \cp \beta^{\Omega} \equiv (\beta^{\Omega,\bullet} \ra \alpha^{\Sigma,\bullet})^{\circledast}$
\end{definition}

Here we prove that the the axioms of the $\bs{V}$-logic are theorems and that the inference rules are derived rules in PUC-Logic. This is proof that the PUC-Logic is complete for the $\bs{V}$-logic based on the completeness proof of completeness given by Lewis\cite{Lewis}.

\begin{itemize}
\item TRANS axiom: $((\alpha \cp \beta) \wedge (\beta \cp \gamma)) \ra (\alpha \cp \gamma)$;
\item CONNEX axiom: $(\alpha \cp \beta) \vee (\beta \cp \alpha)$;
\item Comparative Possibility Rule (CPR): If $\vdash \alpha \ra (\beta_{1} \vee \ldots \vee \beta_{n})$, then $\vdash (\beta_{1} \cp \alpha) \vee \ldots \vee (\beta_{n} \cp \alpha)$, for any $n \geq 1$.
\end{itemize}

We present a proof of the CPR rule for $n = 2$. We omit the attribute representation of the wff denoted by $\alpha$, $\beta$ and $\gamma$ to simplify the reading of the derivations. We use lemma \ref{transfer} below for the theorem $\alpha \ra (\beta \vee \gamma)$ and a derivation $\Xi$ of it.\begin{center}\AxiomC{$^{2}$[$\gamma^ {\bullet}$]} \RightLabel{$\circledast$}
\UnaryInfC{$\gamma^ {\bullet}$}
\AxiomC{$\phantom{.}$}\alwaysNoLine
\UnaryInfC{$^{1}$[$(\beta^{\bullet} \ra \alpha^{\bullet})^{\circledast}\wedge(\gamma^{\bullet} \ra \beta^{\bullet})^{\circledast}$]} \alwaysSingleLine
\UnaryInfC{$(\beta^{\bullet} \ra \alpha^{\bullet})^{\circledast}\wedge(\gamma^{\bullet} \ra \beta^{\bullet})^{\circledast}$}
\UnaryInfC{$(\gamma^{\bullet} \ra \beta^{\bullet})^{\circledast}$} \RightLabel{$\circledast$}
\UnaryInfC{$\gamma^{\bullet} \ra \beta^{\bullet}$} \RightLabel{$\circledast$}
\BinaryInfC{$\beta^{\bullet}$}
\AxiomC{$^{1}$[$(\beta^{\bullet} \ra \alpha^{\bullet})^{\circledast}\wedge(\gamma^{\bullet} \ra \beta^{\bullet})^{\circledast}$]}
\UnaryInfC{$(\beta^{\bullet} \ra \alpha^{\bullet})^{\circledast}\wedge(\gamma^{\bullet} \ra \beta^{\bullet})^{\circledast}$}
\UnaryInfC{$(\beta^{\bullet} \ra \alpha^{\bullet})^{\circledast}$} \RightLabel{$\circledast$}
\UnaryInfC{$\beta^{\bullet} \ra \alpha^{\bullet}$} \RightLabel{$\circledast$}
\BinaryInfC{$\alpha^{\bullet}$} \LeftLabel{\textbf{TRANS} \hspace*{4cm} \scriptsize{2}} \RightLabel{$\circledast$}
\UnaryInfC{$\gamma^{\bullet} \ra \alpha^{\bullet}$} \RightLabel{$\circledast$}
\UnaryInfC{$(\gamma^{\bullet} \ra \alpha^{\bullet})^{\circledast}$} \LeftLabel{\scriptsize{1}}
\UnaryInfC{$((\beta^{\bullet} \ra \alpha^{\bullet})^{\circledast}\wedge(\gamma^{\bullet} \ra \beta^{\bullet})^{\circledast}) \ra (\gamma^{\bullet} \ra \alpha^{\bullet})^{\circledast}$} \DisplayProof\end{center}

\begin{landscape}\begin{center}\AxiomC{$^{1}$[$\neg ((\beta^{\bullet} \ra \alpha^{\bullet})^{\circledast}\vee(\alpha^{\bullet} \ra \beta^{\bullet})^{\circledast})$]}
\UnaryInfC{$\neg ((\beta^{\bullet} \ra \alpha^{\bullet})^{\circledast}\vee(\alpha^{\bullet} \ra \beta^{\bullet})^{\circledast})$}
\AxiomC{$^{1}$[$\neg ((\beta^{\bullet} \ra \alpha^{\bullet})^{\circledast}\vee(\alpha^{\bullet} \ra \beta^{\bullet})^{\circledast})$]}
\UnaryInfC{$\neg ((\beta^{\bullet} \ra \alpha^{\bullet})^{\circledast}\vee(\alpha^{\bullet} \ra \beta^{\bullet})^{\circledast})$}
\AxiomC{$^{2}$[$\beta^{\bullet}$]} \RightLabel{$\circledast$}
\UnaryInfC{$\beta^{\bullet}$} \RightLabel{$\circledast$}
\UnaryInfC{$\alpha^{\bullet} \ra \beta^{\bullet}$}
\UnaryInfC{$(\alpha^{\bullet} \ra \beta^{\bullet})^{\circledast}$}
\UnaryInfC{$(\beta^{\bullet} \ra \alpha^{\bullet})^{\circledast}\vee(\alpha^{\bullet} \ra \beta^{\bullet})^{\circledast}$}
\BinaryInfC{$\bot_{n}$} 
\UnaryInfC{$\alpha^{\bullet,\circledast}$} \RightLabel{$\circledast$}
\UnaryInfC{$\alpha^{\bullet}$} \LeftLabel{\scriptsize{2}} \RightLabel{$\circledast$}
\UnaryInfC{$\beta^{\bullet} \ra \alpha^{\bullet}$}
\UnaryInfC{$(\beta^{\bullet} \ra \alpha^{\bullet})^{\circledast}$}
\UnaryInfC{$(\beta^{\bullet} \ra \alpha^{\bullet})^{\circledast}\vee(\alpha^{\bullet} \ra \beta^{\bullet})^{\circledast}$}
\BinaryInfC{$\bot_{n}$} \LeftLabel{\textbf{CONNEX} \hspace*{2cm} \scriptsize{1}}
\UnaryInfC{$(\beta^{\bullet} \ra \alpha^{\bullet})^{\circledast}\vee(\alpha^{\bullet} \ra \beta^{\bullet})^{\circledast}$}
\alwaysNoLine
\UnaryInfC{$\phantom{.}$}
\UnaryInfC{$\phantom{.}$}\DisplayProof

\AxiomC{$^{1}$[$\neg ((\alpha^{\bullet} \ra \beta^{\bullet})^{\circledast} \vee(\alpha^{\bullet} \ra \gamma^{\bullet})^{\circledast})$]}
\UnaryInfC{$\neg ((\alpha^{\bullet} \ra \beta^{\bullet})^{\circledast} \vee(\alpha^{\bullet} \ra \gamma^{\bullet})^{\circledast})$}
\AxiomC{$^{1}$[$\neg ((\alpha^{\bullet} \ra \beta^{\bullet})^{\circledast} \vee(\alpha^{\bullet} \ra \gamma^{\bullet})^{\circledast})$]}
\UnaryInfC{$\neg ((\alpha^{\bullet} \ra \beta^{\bullet})^{\circledast} \vee(\alpha^{\bullet} \ra \gamma^{\bullet})^{\circledast})$}
\AxiomC{$^{2}$[$\alpha^{\bullet}$]} \RightLabel{$N$}
\UnaryInfC{$\alpha^{\bullet}$}
\UnaryInfC{$\Sigma$}
\UnaryInfC{$(\alpha^{\bullet} \ra \beta^{\bullet})^{\circledast} \vee(\alpha^{\bullet} \ra \gamma^{\bullet})^{\circledast}$}
\BinaryInfC{$\bot_{n}$}
\UnaryInfC{$\beta^{\bullet,N}$} \RightLabel{$N$}
\UnaryInfC{$\beta^{\bullet}$} \LeftLabel{2} \RightLabel{$N$}
\UnaryInfC{$\alpha^{\bullet} \ra \beta^{\bullet}$} \RightLabel{$\circledast$}
\UnaryInfC{$\alpha^{\bullet} \ra \beta^{\bullet}$}
\UnaryInfC{$(\alpha^{\bullet} \ra \beta^{\bullet})^{\circledast}$}
\UnaryInfC{$(\alpha^{\bullet} \ra \beta^{\bullet})^{\circledast} \vee(\alpha^{\bullet} \ra \gamma^{\bullet})^{\circledast}$}
\BinaryInfC{$\bot_{n}$} \LeftLabel{\textbf{CPR} \hspace*{2cm}1}
\UnaryInfC{$(\alpha^{\bullet} \ra \beta^{\bullet})^{\circledast} \vee(\alpha^{\bullet} \ra \gamma^{\bullet})^{\circledast}$}\DisplayProof

\AxiomC{$\alpha^{\bullet}$} \RightLabel{$N$}
\UnaryInfC{$\alpha^{\bullet}$} \RightLabel{$N,\bullet$}
\AxiomC{$\Xi$} \RightLabel{$N,u$}
\UnaryInfC{$\alpha \ra (\beta \vee \gamma)$}
\BinaryInfC{$\Pi$} \RightLabel{$\circledast$}
\UnaryInfC{$\beta^{\bullet} \vee \gamma^{\bullet} $}
\AxiomC{$^{3}$[$\beta^{\bullet}$]} \RightLabel{$\circledast$}
\UnaryInfC{$\beta^{\bullet}$} \RightLabel{$\circledast$}
\UnaryInfC{$\alpha^{\bullet} \ra \beta^{\bullet}$}
\UnaryInfC{$(\alpha^{\bullet} \ra \beta^{\bullet})^{\circledast}$}
\UnaryInfC{$(\alpha^{\bullet} \ra \beta^{\bullet})^{\circledast} \vee(\alpha^{\bullet} \ra \gamma^{\bullet})^{\circledast}$}
\AxiomC{$^{3}$[$\gamma^{\bullet}$]} \RightLabel{$\circledast$}
\UnaryInfC{$\gamma^{\bullet}$} \RightLabel{$\circledast$}
\UnaryInfC{$\alpha^{\bullet} \ra \gamma^{\bullet} $}
\UnaryInfC{$(\alpha^{\bullet} \ra \gamma^{\bullet})^{\circledast}$}
\UnaryInfC{$(\alpha^{\bullet} \ra \beta^{\bullet})^{\circledast} \vee(\alpha^{\bullet} \ra \gamma^{\bullet})^{\circledast}$} \LeftLabel{$\bs{\Sigma}$ \hspace*{0.5cm} 3}
\TrinaryInfC{$(\alpha^{\bullet} \ra \beta^{\bullet})^{\circledast} \vee(\alpha^{\bullet} \ra \gamma^{\bullet})^{\circledast}$}
\alwaysNoLine
\UnaryInfC{$\phantom{.}$}
\UnaryInfC{$\phantom{.}$}
\UnaryInfC{$\phantom{.}$}\DisplayProof

\AxiomC{$\alpha^{\bullet}$} \RightLabel{$N$}
\UnaryInfC{$\alpha^{\bullet}$} \RightLabel{$N,\bullet$}
\UnaryInfC{$\alpha$}
\AxiomC{$^{4}$[$\alpha$]} \RightLabel{$N,u$}
\UnaryInfC{$\alpha$}
\AxiomC{$\Xi$} \RightLabel{$N,u$}
\UnaryInfC{$\alpha \ra (\beta \vee \gamma)$}
\BinaryInfC{$\beta \vee \gamma$} \RightLabel{$N,\bullet$}
\UnaryInfC{$\beta \vee \gamma$} \RightLabel{$\circledast,\bullet$}
\UnaryInfC{$\beta \vee \gamma$} \LeftLabel{4} \RightLabel{$\circledast,\bullet$}
\BinaryInfC{$\beta \vee \gamma$} \RightLabel{$\circledast,\bullet$}
\UnaryInfC{$\beta \vee \gamma$}
\AxiomC{[$\beta$]} \RightLabel{$\circledast,\bullet$}
\UnaryInfC{$\beta$} \RightLabel{$\circledast$}
\UnaryInfC{$\beta^{\bullet}$} \RightLabel{$\circledast$}
\UnaryInfC{$\beta^{\bullet} \vee \gamma^{\bullet}$}
\AxiomC{[$\gamma$]} \RightLabel{$\circledast,\bullet$}
\UnaryInfC{$\gamma$} \RightLabel{$\circledast$}
\UnaryInfC{$\gamma^{\bullet}$} \RightLabel{$\circledast$}
\UnaryInfC{$\beta^{\bullet} \vee \gamma^{\bullet}$} \RightLabel{$\circledast$}
\TrinaryInfC{$\beta^{\bullet} \vee \gamma^{\bullet}$} \LeftLabel{$\bs{\Pi}$ \hspace*{0.5cm}} \RightLabel{$\circledast$}
\UnaryInfC{$\beta^{\bullet} \vee \gamma^{\bullet} $} \DisplayProof\end{center}\end{landscape}

\begin{lemma} \label{transfer}
Given a theorem $\alpha^{\Sigma}$, there is a proof of $\alpha^{\Sigma}$ in the context $\{N,u\}$, in which the variables $N$ and $u$ do not occur in the proof.
\end{lemma}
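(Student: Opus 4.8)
The plan is to show that any closed derivation can be \emph{relativised} to the ambient context $\{N,u\}$ by prepending these two labels at the bottom of every context occurring in it. Since $\alpha^{\Sigma}$ is a theorem, Definition \ref{derivability} gives a derivation $\m{D}$ whose end-formula is $\alpha^{\Sigma}$ in the empty context, and by the fitting restriction $\alpha^{\Sigma}\in\bs{F}_{n}$. First I would fix a neighbourhood variable $N$ and a world variable $u$ that occur nowhere in $\m{D}$ (possible since $\m{D}$ is finite; if a prescribed pair is wanted, the preceding variable-exchange lemma and its evident neighbourhood analogue rename the internal variables of $\m{D}$ out of the way). Observe that $\alpha^{\Sigma}$ fits into $\{N,u\}$: appending the world label $u$ to $\alpha^{\Sigma}\in\bs{F}_{n}$ gives $\alpha^{\Sigma,u}\in\bs{F}_{w}$ by clause 9 of Definition \ref{wffDefinition}, and appending the neighbourhood label $N$ gives $\alpha^{\Sigma,u,N}=\alpha^{\Sigma,\overline{\{N,u\}}}\in\bs{F}_{n}$ by clause 10.

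The core step is to define the transformation $T$ that replaces every context $\Gamma$ of $\m{D}$ by $\{N,u,\Gamma\}$ (with $N$ at the bottom, $u$ next, and the original labels of $\Gamma$ on top) and to verify that $T(\m{D})$ is again a derivation. Three things must be checked. (i) \emph{Fitting is preserved}: the computation just made for $\alpha^{\Sigma}$ works verbatim for every formula of the derivation, since appending a world label and then a neighbourhood label to an $\bs{F}_{n}$ formula returns to $\bs{F}_{n}$. (ii) \emph{The $\bs{C}_{n}/\bs{C}_{w}$ classification is preserved}: prepending two labels raises every $s(\Gamma)$ by $2$, so the parity of Lemma \ref{countingLemma} is unchanged and each context still matches the sort of the formula sitting in it. (iii) \emph{Each inference rule is preserved}: every rule other than rule 9 relates its premise and conclusion contexts only through the top labels (the scope), leaving the lower part schematic, so prepending the \emph{same} $\{N,u\}$ to all contexts turns each instance into another instance of the same rule with the schematic bases $\Delta,\Theta$ replaced by $\{N,u,\Delta\},\{N,u,\Theta\}$. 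The eigenvariable provisos of rules 15--22 survive because $N$ and $u$ are fresh: they never coincide with an eigenvariable and never enter a formula or a hypothesis, so no side condition is violated.

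The one rule that is not schematic in its conclusion is rule 9 (absurd expansion), whose conclusion $\bot_{n}$ is pinned to the empty context; this is the main obstacle. I would handle it locally. At a rule-9 node the transformed premise is $\bot$ in $\{N,u,\Delta\}$, which is still non-empty and still free of $\circledast$, so rule 9 itself applies and yields $\bot_{n}$ in the genuine empty context. From there a short gadget climbs back to the relativised target: rule 8 (the intuitionistic absurd, which keeps the context fixed) derives the wff $\bot_{n}^{u,N}$ in the empty context, and two applications of rule 13 (context-introduction) convert it first into $\bot_{n}^{u}$ in $\{N\}$ and then into $\bot_{n}$ in $\{N,u\}$, exactly the prepended form of the original conclusion. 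Splicing this expansion in for every rule-9 application leaves the rest of $T(\m{D})$ intact.

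Assembling the pieces, $T(\m{D})$ with the rule-9 expansions is a closed derivation whose end-formula is $\alpha^{\Sigma}$ in the context obtained from the empty context, namely $\{N,u\}$; and throughout, $N$ and $u$ occur only as the two bottom context labels, never inside a formula nor as a genuine eigenvariable, which is the precise sense in which they do not occur in the proof. As an aside, when $\alpha^{\Sigma}$ is a sentence there is a much shorter route through the completeness machinery: Theorem \ref{soundness}, Lemma \ref{taut} and completeness make $\alpha^{\Sigma,\ast,\circledast}$ a theorem, and then rule 13, the wild-card rule 22 (witnessed by $\top$ from rule 30), rule 13 and the universal elimination rule 16 successively turn $\vdash\alpha^{\Sigma,\ast,\circledast}$ into a proof of $\alpha^{\Sigma}$ in $\{N,u\}$; the transfer-by-prepending argument above, however, has the merit of being purely syntactic and of covering every theorem, not only the variable-free ones.
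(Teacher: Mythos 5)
Your proof is correct and its core is the same argument the paper gives: fix $N$ and $u$ fresh for the closed derivation $\m{D}$ of $\alpha^{\Sigma}$ and push the pair onto the bottom of every context, freshness guaranteeing that no variable restriction of rules 15--22 can be triggered. Your checks (i) and (ii) -- fitting is preserved because appending a world label and then a neighbourhood label to an $\bs{F}_{n}$ formula lands back in $\bs{F}_{n}$, and the parity of Lemma \ref{countingLemma} shifts by $2$ -- are exactly the verification the paper leaves implicit. Note also that your placement of the new labels at the \emph{bottom} of each stack, so that every scope is unchanged, is the only reading under which rule instances remain rule instances; the paper's phrase ``rightmost position'' is loose on this point, since in the paper's notation the rightmost label of a written context is its scope.

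Where you genuinely add something is rule 9. You are right that it is the one rule whose conclusion context is not schematic: it concludes $\bot_{n}$ in the empty context whatever $\Delta$ is, so the blanket transformation does not map its instances to instances of the same rule, and the paper's one-line justification (``there is no restriction that could be applied over the new variables'') does not cover this case at all. Your local repair is well typed: the transformed premise $\bot$ in $\{N,u,\Delta\}$ is still non-empty and $\circledast$-free, rule 9 then yields $\bot_{n}$ in the genuine empty context, rule 8 yields $\bot_{n}^{u,N} \in \bs{F}_{n}$ there, and two applications of rule 13 land on $\bot_{n}$ in $\{N,u\}$, which is what the surrounding transformed derivation expects. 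So your proof is the paper's proof plus its missing case. The closing aside via soundness, Lemma \ref{taut} and completeness is fine as a remark, but as you say it only covers sentences (and only the fragment for which the paper proves completeness), so the purely syntactic transfer argument is the one to keep.
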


\begin{proof}
$\alpha^{\Sigma}$ is a theorem, then, by definition, there is a proof $\Pi$ without open hypothesis that concludes the theorem in the empty context. During the proof $\Pi$, the smallest context is the empty context. So, if we can choose variables that do not occur in $\Pi$ and add the stack of labels $\{N,u\}$ at the rightmost position of each context of each rule. We end up with a proof of the theorem in the context $\{N,u\}$. This is possible because there is no restriction that could be applied over the new variables.\end{proof}

We now present some ideas related to the different counterfactual logics Lewis defined, based on conditions imposed to the function $\$$:
\begin{itemize}
\item Normality (N): $\$$ is normal iff $\forall w \in \m{W} : \$(w) \neq \emptyset$;
\item Total reflexivity (T): $\$$ is totally reflexive iff $\forall w \in \m{W} : w \in \bigcup\$(w)$;
\item Weak centering (W): $\$$ is weakly centered iff $\forall w \in \m{W} : \$(w) \neq \emptyset \mbox{ and }\forall N \in \bigcup\$(w) : w \in N$ ;
\item Centering (C): $\$$ is centered iff $\forall w \in \m{W} : \{w\} \in \$(w)$.
\end{itemize}

To each condition, corresponds a logic, respectively $\bs{VN}$, $\bs{VT}$, $\bs{VW}$ and $\bs{VC}$-logics. For each logic, the PUC-ND may change the set of rules to acquire the corresponding expressivity provided by the conditions. We present some ideas to make those changes:
\begin{itemize}
\item[$\bs{VN}$] Rule 9 looses restriction (a). Rule 19 and 22 loose second premiss.\\Introduction of the rule: \AxiomC{$\phantom{-}$} \RightLabel{$\Delta,\circledast$}
\UnaryInfC{$\alpha^{\Sigma}$} \RightLabel{$\Delta,N$}
\UnaryInfC{$\alpha^{\Sigma}$} \DisplayProof\\Restriction: (a) $\alpha^{\Sigma}$ must fit into the contexts;

\item[$\bs{VT}$] We repeat the system for VN.\\Introduction of the rule: \AxiomC{$\phantom{-}$} \RightLabel{$\Delta,\circledast,\ast$}
\UnaryInfC{$\alpha^{\Sigma}$} \RightLabel{$\Delta$}
\UnaryInfC{$\alpha^{\Sigma}$} \DisplayProof\\Restriction: (a) $\alpha^{\Sigma}$ must fit into the contexts;

\item[$\bs{VW}$] We repeat the system for VT.\\Introduction of the rule: \AxiomC{$\phantom{-}$} \RightLabel{$\Delta,\circledcirc,\ast$}
\UnaryInfC{$\alpha^{\Sigma}$} \RightLabel{$\Delta$}
\UnaryInfC{$\alpha^{\Sigma}$} \DisplayProof\\Restriction: (a) $\alpha^{\Sigma}$ must fit into the contexts;

\item[$\bs{VC}$] We repeat the system for VW.\\Introduction of the rule: \AxiomC{$\phantom{-}$} \RightLabel{$\Delta,\circledast,\bullet$}
\UnaryInfC{$\alpha^{\Sigma}$} \RightLabel{$\Delta$}
\UnaryInfC{$\alpha^{\Sigma}$} \DisplayProof\\Restriction: (a) $\alpha^{\Sigma}$ must fit into the contexts.
\end{itemize}

\section{Related Works}

As far as we know, there is only one natural deduction system for the counterfactuals, which is given by Bonevac \cite{Bonevac}. But his system is designed to deal with the $\bs{VW}$-logic, since it contains the rule of counterfactual exploitation ($\cf$E), which encapsulates the weak centering condition. His approach to define rules for the counterfactual operators provides a better intuition of the counterfactual logic. His systems is expressive enough to deal with modalities and strict conditionals. The labelling of world shifts using formulas makes it easier to capture the counterfactual mechanics.\\

We also found the work of Sano \cite{Sano} which pointed out the advantages of using the hybrid formalism for the counterfactual logic. He presented some axioms and rules for the $\bs{V_{\m{HC}(@)}}$-logic that extends the $\bs{V}$-logic of Lewis.\\

We also found a sequent calculus for the $\bs{V}$-logic that is given by \cite{Jelia2012}. But this system also demands modalities in the syntax. As far as we know, our deduction system is the only one dealing with Lewis systems in a general form, that is, without using modalities in the syntax.

\section*{Conclusions}

From the definitions of Lewis \cite{Lewis} for the counterfactual logic, we define our natural deduction system, which is proven to be sound and complete for the $\bs{V}$-logic.\\

The use of two types of labels (neighbourhood and world labels) gave us the ability to manage different types of quantifications. The quantifications are largely used by the counterfactual operators definitions according to Lewis. That approach makes it possible to build the rules for the counterfactual operators as derived rules of the system.\\

Another advantage of that approach is that our natural deduction system is built without the use of modalities or strict conditionals, making it easier to take benefits from the well known propositional results such as normalization.

\section*{References}

\end{document}